\definecolor{ForestGreen}{rgb}{0.1333,0.5451,0.1333}
\definecolor{DarkRed}{rgb}{0.8,0,0}
\definecolor{Red}{rgb}{1,0,0}
\newcommand{\stackGeq}[1]{%
	\setbox0=\hbox{${}\mathrel{\stackon[-1pt]{\geq}{\scriptstyle\text{#1\strut}}}{}$}
	\xdef\tmpwd{\dimexpr\the\wd0\relax}
	\kern.5\tmpwd\mathclap{\box0}&\kern.5\tmpwd
}
\DeclareMathOperator*{\argmax}{argmax}
\DeclareMathOperator*{\Exp}{\mathbb{E}}
\DeclareMathOperator*{\Var}{Var}
\renewcommand\R{\mathbb{R}}
\newcommand\eps{\varepsilon}
\newcommand\abs[1]{\lvert #1 \rvert}
\DeclarePairedDelimiterX{\expectarg}[1]{[}{]}{%
	\ifnum\currentgrouptype=16 \else\begingroup\fi
	\activatebar#1
	\ifnum\currentgrouptype=16 \else\endgroup\fi
}
\DeclarePairedDelimiterX{\nicesetarg}[1]{\{}{\}}{%
	\ifnum\currentgrouptype=16 \else\begingroup\fi
	\activatebar#1
	\ifnum\currentgrouptype=16 \else\endgroup\fi
}
\newcommand{\innermid}{\nonscript\;\delimsize\vert\nonscript\;}
\newcommand{\activatebar}{%
	\begingroup\lccode`\~=`\|
	\lowercase{\endgroup\let~}\innermid 
	\mathcode`|=\string"8000
}
\newcommand\sse{\subseteq}
\DeclareMathOperator*{\spn}{span}
\DeclareMathOperator*{\rank}{rank}
\theoremstyle{plain}
\newtheorem{theorem}{Theorem}[section]
\newtheorem{definition}[theorem]{Definition}
\newtheorem{lemma}[theorem]{Lemma}
\newtheorem{fact}[theorem]{Fact}
\newtheorem{claim}[theorem]{Claim}
\newlength{\continueindent}
\newcommand*{\ALG@customparshape}{\parshape 2 \leftmargin \linewidth \dimexpr\ALG@tlm+\continueindent\relax \dimexpr\linewidth+\leftmargin-\ALG@tlm-\continueindent\relax}
\apptocmd{\ALG@beginblock}{\ALG@customparshape}{}{\errmessage{failed to patch}}
\def\thm@space@setup{%
	\thm@preskip=\parskip \thm@postskip=0pt
}
\newcommand{\ALGtikzmarkcolor}{black}% customise this, if you want
\newcommand{\ALGtikzmarkextraindent}{4pt}% customise this, if you want
\newcommand{\ALGtikzmarkverticaloffsetstart}{-.5ex}% customise this, if you want
\newcommand{\ALGtikzmarkverticaloffsetend}{-.5ex}% customise this, if you want
\newcounter{ALG@tikzmark@tempcnta}
\newcommand\ALG@tikzmark@start{%
	\global\let\ALG@tikzmark@last\ALG@tikzmark@starttext%
	\expandafter\edef\csname ALG@tikzmark@\theALG@nested\endcsname{\theALG@tikzmark@tempcnta}%
	\tikzmark{ALG@tikzmark@start@\csname ALG@tikzmark@\theALG@nested\endcsname}%
	\addtocounter{ALG@tikzmark@tempcnta}{1}%
}
\def\ALG@tikzmark@starttext{start}
\newcommand\ALG@tikzmark@end{%
	\ifx\ALG@tikzmark@last\ALG@tikzmark@starttext
	% ignore this, the block was opened then closed directly without any other blocks in between (so just a \State basically)
	% don't draw a vertical line here
	\else
	\tikzmark{ALG@tikzmark@end@\csname ALG@tikzmark@\theALG@nested\endcsname}%
	\tikz[overlay,remember picture] \draw[\ALGtikzmarkcolor] let \p{S}=($(pic cs:ALG@tikzmark@start@\csname ALG@tikzmark@\theALG@nested\endcsname)+(\ALGtikzmarkextraindent,\ALGtikzmarkverticaloffsetstart)$), \p{E}=($(pic cs:ALG@tikzmark@end@\csname ALG@tikzmark@\theALG@nested\endcsname)+(\ALGtikzmarkextraindent,\ALGtikzmarkverticaloffsetend)$) in (\x{S},\y{S})--(\x{S},\y{E});%
	\fi
	\gdef\ALG@tikzmark@last{end}%
}
\apptocmd{\ALG@beginblock}{\ALG@tikzmark@start}{}{\errmessage{failed to patch}}
\pretocmd{\ALG@endblock}{\ALG@tikzmark@end}{}{\errmessage{failed to patch}}
\newcommand{\bx}{\mathbf{x}}
\newcommand{\by}{\mathbf{y}}
\newcommand{\bp}{\mathbf{p}}
\newcommand{\nf}{\nicefrac}
\newcommand\RR{\mathbb{R}}
\newcommand{\cD}{\mathcal{D}}
\newcommand{\cF}{\mathcal{F}}
\newcommand{\cM}{\mathcal{M}}
\newcommand{\cI}{\mathcal{I}}
\newcommand{\cA}{\mathcal{A}}
\renewcommand{\cP}{\mathcal{P}}
\newcommand{\cE}{\mathcal{E}}
\newcommand{\dt}{{\text d}t}
\title{Pairwise-Independent Contention Resolution}
\author{Anupam Gupta\thanks{Department of Computer Science, New York University, New York, NY 10012. Email:
		\texttt{anupam.g@nyu.edu}. Supported in part by NSF awards CCF-1955785 and CCF-2006953.}
\and Jinqiao Hu\thanks{Peking University, Beijing. Email: cppascalinux@gmail.com.}
\and Gregory Kehne\thanks{Department of Computer Science, UT Austin, Austin, TX 78712. Email:
		\texttt{gkehne@utexas.edu}. Supported in part by NSF award CCF-2217069}
\and Roie Levin\thanks{Department of Computer Science, Rutgers University, Piscataway, NJ 08854. Email:
		\texttt{roie.levin@rutgers.edu}. Work was done while the author was a Fulbright Israel Postdoctoral Fellow.}
}
\begin{document}
\date{}
\maketitle
\begin{abstract}

  We study online contention resolution schemes (OCRSs) and prophet
  inequalities for non-product distributions. Specifically, when the
  active set is sampled according to a \emph{pairwise-independent}
  (PI) distribution, we show a $(1-o_k(1))$-selectable OCRS for
  uniform matroids of rank $k$, and $\Omega(1)$-selectable OCRSs for
  laminar, graphic, cographic, transversal, and regular matroids. These
  imply prophet inequalities with the same ratios when the set
  of values is drawn according to a PI distribution. Our results
  complement recent work of \textcite{DBLP:journals/corr/abs-2310-05240} showing that no
  $\omega(1/k)$-selectable OCRS exists in the PI setting for general
  matroids of rank $k$.

\end{abstract}

\newpage

\section{Introduction}

Consider the \emph{prophet inequality} problem: a sequence of independent positive real-valued random variables $\mathbf{X} = \langle X_1, X_2, \ldots, X_n \rangle$ are revealed one by one. Upon seeing $X_i$ the algorithm must decide whether to select or discard the index $i$; these decisions are irrevocable. The goal is to choose some subset $S$ of the indices $\{1,2,\ldots, n\}$ to maximize $\Exp[\sum_{i \in S} X_i]$, subject to the set $S$ belonging to a well-behaved family $\cI \sse 2^{[n]}$. The goal is to get a value close to $\Exp[ \max_{S \in \cI} \sum_{i \in S} X_i]$, the value that a clairvoyant ``prophet'' could obtain in expectation. This problem originally arose in optimal stopping theory, where the case of $\cI$ being the set of all singletons was considered~\cite{MR0515432}: more recently, the search for good prophet inequalities has become a cornerstone of stochastic optimization and online decision making, with the focus being on generalizing to broad classes of downward-closed sets $\cI$~\cite{MR3926869,feldman2016online,DBLP:conf/stoc/Rubinstein16}, considering additional assumptions on the order in which these random variables are revealed \cite{DBLP:journals/siamdm/EsfandiariHLM17,DBLP:conf/soda/ArsenisDK21,DBLP:journals/corr/abs-2304-04024,DBLP:conf/focs/Adamczyk018}, obtaining optimal approximation guarantees \cite{DBLP:conf/esa/LeeS18,DBLP:journals/mor/CorreaFHOV21}, and competing with nonlinear objectives \cite{DBLP:conf/soda/RubinsteinS17,DBLP:journals/corr/abs-2308-05207}.

One important and interesting direction is to reduce the requirement of independence between the random variables: \emph{what if the r.v.s are correlated}? The case of negative correlations is benign~\cite{MR1228074,MR0913569}, but general correlations present significant hurdles---even for the single-item case, it is impossible to get value much better than the trivial $\Exp[\max_i X_i]/n$ value obtained by random guessing~\cite{MR1160620}. As another example, the model with \emph{linear correlations}, where $\mathbf{X} = A \mathbf{Y}$ for some independent random variables $\mathbf{Y} \in \mathbb{R}_+^d$ and known non-negative matrix $A \in \mathbb{R}_+^{d\times n}$, also poses difficulties in the single-item case~\cite{Immorlica0W20}.

Given these impossibility results, \textcite{DBLP:conf/wine/CaragiannisGLW21} gave single-item prophet inequalities in the setting of \emph{weak correlations}: specifically, for the setting of \emph{pairwise-independent distributions}. As the name suggests, these are distributions that look like product distributions when projected on any two random variables. While pairwise-independent distributions have long been studied in algorithm design and complexity theory~\cite{DBLP:journals/fttcs/LubyW05}, these had not been studied previously in the setting of stochastic optimization. \textcite{DBLP:conf/wine/CaragiannisGLW21} gave both algorithms and some limitations arising from just having pairwise-independence. They also consider related pricing and bipartite matching problems.

We ask the question: can we extend the prophet inequalities known for richer classes of constraint families $\cI$ to the pairwise-independent case? In particular,

\begin{quote}
  \emph{What classes of matroids admit good pairwise-independent
    prophet inequalities?}
\end{quote}

\smallskip Specifically, we investigate the analogous questions for \emph{(online) contention resolution schemes (OCRS)}~\cite{feldman2016online}, another central concept in online decision making, and one closely related to prophet inequalities. In an OCRS, a random subset of a ground set is marked active. Elements are sequentially revealed to be active or inactive, and the OCRS must decide irrevocably on arrival whether to select each active element, subject to the constraint that the selected element set belongs to a constraint family $\cI$. The goal is to ensure that each element, conditioned on being active, is picked with high probability.  It is intuitive from the definitions (and formalized by \textcite{feldman2016online}) that good OCRSs imply good prophet inequalities (see also \cite{DBLP:conf/esa/LeeS18}).

\subsection{Our Results}
\label{sec:our-results}

Our first result is for the $k$-uniform matroid where the algorithm can pick up to $k$ items: we  achieve a
$(1-o_k(1))$-factor of the expected optimal value.

\begin{theorem}[Uniform Matroid PI Prophets] \label{thm:uniformprophet}
There is an algorithm in the prophet model for $k$-uniform matroids that achives expected value at least $(1-O(k^{-\nf15}))$ of the expected optimal value. 
\end{theorem}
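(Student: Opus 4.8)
The plan is to run a threshold-type algorithm, but one engineered both (a) not to overflow the rank-$k$ budget and (b) not to miss the rare, potentially enormous values. Throughout, write $\opt := \Exp\!\big[\max_{|S|\le k}\sum_{i\in S}X_i\big]$, and for $s\ge 0$ put $N(s):=\#\{i:X_i\ge s\}$ and $\mu(s):=\Exp[N(s)]=\sum_i\Pr[X_i\ge s]$. The only place pairwise independence enters is the bound $\Var(N(s))\le\mu(s)$ (the indicators $\{\mathbbm 1[X_i\ge s]\}_i$ are pairwise independent). Combining this with Chebyshev's inequality, the Paley--Zygmund bound $\Pr[N(s)\ge1]\ge\mu(s)/(1+\mu(s))$, the layer-cake identity $\sum_i(X_i-s)^+=\int_s^\infty N(u)\,du$, and the elementary bound $\opt\le ks+\Exp[\sum_i(X_i-s)^+]$ (true for every $s$), I would first record the ``reverse'' estimates I need: e.g.\ $\Exp[\min(N(s),k)]\ge\tfrac12\mu(s)$ and hence $\Exp[\sum_i(X_i-s)^+]\le 2\opt$ whenever $\mu(s)\le k-k^{4/5}$; and, with $M:=\inf\{s:\mu(s)\le\tfrac12\}$, both $M\le 3\opt$ and $\Exp[\sum_{i:X_i\ge M}X_i]=O(\opt)$. (A couple of degenerate regimes --- $\mu(0^+)\le\tfrac12$, or so few nonzero items that budget conflicts are exponentially rare --- I would dispatch separately; they are easier.)

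The algorithm I would use: fix $t:=k^{4/5}$ and $r:=\lceil k^{1/3}\rceil$, let $\sigma$ be a threshold with $\mu(\sigma)=k-r-t$ (randomising acceptance at an atom if needed), set $\tau_0:=\max\{\sigma,\,M/k^2\}$, split $[\tau_0,M)$ into dyadic bands $I_j:=[\,2^j\tau_0,\,2^{j+1}\tau_0\,)$ for $j=0,\dots,L-1$ where $L$ is least with $2^L\tau_0\ge M$ (so $L=O(\log k)$), put $I_{\mathrm{heavy}}:=[M,\infty)$, and discard items below $\tau_0$. Writing $\nu_j:=\mu(2^j\tau_0)-\mu(2^{j+1}\tau_0)$ for the expected size of $I_j$, note $\sum_j\nu_j=\mu(\tau_0)\le k-r-t$, so I can reserve $b_j:=\lceil\nu_j+s_j\rceil$ of the $k$ slots to $I_j$ and $r$ slots to $I_{\mathrm{heavy}}$, with nonnegative slacks $\sum_j s_j\le t$ (so $r+\sum_j b_j\le k$). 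The algorithm processes items in arrival order and selects $i$ iff the budget reserved for $i$'s band is not yet used up.

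For the analysis I would split the deficit $\opt-\Exp[\text{value}]$ into three pieces. If no band overflows, the selected set is exactly $\{i:X_i\ge\tau_0\}$, with expected value $\tau_0\mu(\tau_0)+\Exp[\sum_i(X_i-\tau_0)^+]\ge\opt-(k-\mu(\tau_0))\tau_0$; since $\tau_0=O(\opt/k)$ and either $k-\mu(\tau_0)=O(t)$ (if $\tau_0=\sigma$) or $k\tau_0=M/k=O(\opt/k)$ (if $\tau_0=M/k^2$), this baseline loss is $O(k^{-1/5})\opt$. Next, the items actually skipped are, per band, those arriving after that band's budget runs out, so the loss inflicted in a bounded band $I_j$ is at most $2^{j+1}\tau_0(N_j-b_j)^+$, where $N_j$ is $I_j$'s realised size; since $\Var(N_j)\le\nu_j=\Exp[N_j]$ and $b_j-\nu_j\ge s_j$, Chebyshev and Cauchy--Schwarz give $\Exp[(N_j-b_j)^+]\le\nu_j/s_j$, so the bounded-band loss is $\le\sum_j 2^{j+1}\tau_0\nu_j/s_j$. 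Optimising the slacks ($s_j\propto\sqrt{2^{j+1}\tau_0\nu_j}$) and using Cauchy--Schwarz over the $\le L+1$ bands, this is $\le t^{-1}(L{+}1)\sum_j 2^{j+1}\tau_0\nu_j$, and a telescoping estimate together with $\Exp[\sum_i(X_i-\tau_0)^+]=O(\opt)$ gives $\sum_j 2^{j+1}\tau_0\nu_j=O(\opt)$; hence the bounded-band loss is $O(\opt\log k/t)=O(k^{-4/5}\log k)\,\opt$.

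The hard part will be the heavy band: there the values are unbounded and pairwise independence gives no concentration at all for heavy items' contributions, so every second-moment argument above breaks. The point of the $r$ reserved slots is that the algorithm misses a heavy item $i$ only if at least $r$ \emph{other} heavy items arrived before it, so the heavy loss is at most $\sum_i X_i\mathbbm 1[X_i\ge M]\,\mathbbm 1[N_{-i}^{\prec i}\ge r]$ with $N_{-i}^{\prec i}:=\#\{j\ne i:\ j\text{ precedes }i,\ X_j\ge M\}$. The key point is that, by pairwise independence, $\Exp[N_{-i}^{\prec i}\mid X_i]=\sum_{j\ne i,\,j\prec i}\Pr[X_j\ge M]$ is a \emph{constant} (it does not depend on the realised value of $X_i$), bounded by $\mu(M)\le\tfrac12$, so Markov gives $\Pr[N_{-i}^{\prec i}\ge r\mid X_i]\le\mu(M)/r$ \emph{pointwise in $X_i$} --- exactly what is needed to peel the heavy, correlated factor $X_i$ off: the heavy loss is at most $\tfrac{\mu(M)}{r}\sum_i\Exp[X_i\mathbbm 1[X_i\ge M]]=\tfrac{\mu(M)}{r}\Exp[\sum_{i:X_i\ge M}X_i]=O(\opt/r)=O(k^{-1/3})\opt$. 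Summing the three losses gives expected value $(1-O(k^{-1/5}))\opt$, proving the theorem; the dominant term is the baseline loss $\Theta(t/k)\opt$, and rebalancing the parameters should in fact reduce the deficit to $\widetilde O(k^{-1/2})$, though $k^{-1/5}$ already suffices here.
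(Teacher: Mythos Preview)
Your approach is correct and genuinely different from the paper's. The paper proves the theorem by first building a $(1-O(k^{-1/5}))$-\emph{selectable} PI-OCRS and then invoking the generic OCRS-to-prophet reduction: it sets a single threshold $\tau_i$ per item so that $p_i=\Pr[X_i\ge\tau_i]$ lies in $(1-\eps)\cP_M$, and then the OCRS classifies items as ``good'' or ``bad'' according to whether $\Pr[|R|>\lfloor(1-r\eps)k\rfloor\mid i\in R]\le b$, reserving a bucket of size $\lfloor(1-r\eps)k\rfloor$ for the good items and $\lceil r\eps k\rceil$ for the bad. The key technical ingredient is the aggregate Chebyshev bound $\sum_i\Pr[|R|\ge k,\,i\in R]\le(1-\delta^2)/\delta^2$ (\Cref{clm:averaging2}), which controls the total mass of bad items; balancing good and bad forces $\eps=k^{-1/5}$. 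By contrast, you bypass the OCRS abstraction entirely and bucket by \emph{value} rather than by a probabilistic property of the item: dyadic bands on $[\tau_0,M)$ each get a Chebyshev/Cauchy--Schwarz budget analysis, while the heavy band $[M,\infty)$ is handled by the elegant observation that, conditioning on $X_i$, pairwise independence makes $\Exp[N^{\prec i}_{-i}\mid X_i]$ a deterministic constant $\le\mu(M)$, allowing conditional Markov to peel off the unbounded factor $X_i$. What the paper's route buys is modularity (the OCRS is reusable, e.g.\ in the laminar-matroid construction of \Cref{sec:laminar}) and robustness against the \emph{almighty} adversary, since selectability is order-oblivious; your argument, as written, needs the arrival order to be fixed independently of the realizations (which is exactly the model in the paper's formal definition of the prophet game, so your proof does establish the theorem as stated). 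What your route buys is that the algorithm needs only the marginals of $\cD$, and—as you correctly note at the end—rebalancing $t\asymp\sqrt{k\log k}$, $r\asymp\sqrt{k/\log k}$ drives the deficit down to $\widetilde O(k^{-1/2})$, which the paper's two-bucket OCRS cannot match (its loss $\eps+(\eps^3 k)^{-1/2}$ is minimized at $\eps\asymp k^{-1/5}$). Two small bookkeeping points: your claim $r+\sum_j b_j\le k$ is off by the $L=O(\log k)$ ceiling contributions, so you should set $\sum_j s_j\le t-L$; and the bands $I_{L-1}$ and $I_{\mathrm{heavy}}$ overlap as defined—cleanest is to take $I_{\mathrm{heavy}}=[2^L\tau_0,\infty)$, which only helps since $\mu(2^L\tau_0)\le\mu(M)\le\tfrac12$.
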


We prove this by giving a $(1-O(k^{-\nf15}))$-\emph{selectable} online contention resolution scheme for $k$-unifrom matroids, even when the underlying generative process is only pairwise-independent. \textcite{feldman2016online} showed that selectable OCRSs immediately lead to prophet inequalities (against an almighty adversary) in the fully independent case, and we observe that their proofs translate to pairwise-independent distributions as well. Along the way we also show a $(1-O(k^{-\nf13}))$ (offline) CRS for the pairwise-independent $k$-uniform matroid case.

We then show $\Omega(1)$-selectable OCRSs for many common matroid
classes, again via pairwise-independent OCRSs.

\begin{theorem}[Other Matroids PI Prophets]
  There exist $\Omega(1)$-selectable OCRSs for laminar (\cref{sec:laminar}), graphic (\cref{sec:graphic}), cographic (\cref{sec:cographic}),
  transversal (\cref{sec:transversal}), and regular (\cref{sec:regular}) matroids. These immediate imply $\Omega(1)$
  prophet inequalities for these  matroids against almighty adversaries.
\end{theorem}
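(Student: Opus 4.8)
The plan is to handle the five matroid families one at a time, in each case exhibiting an explicit $b$-selectable OCRS with $b=\Omega(1)$; the stated prophet inequalities against the almighty adversary then follow immediately from \textcite{feldman2016online}'s reduction, exactly as for \cref{thm:uniformprophet}, since that reduction uses only the selectability guarantee and linearity of expectation and so is insensitive to whether the active set is fully or merely pairwise independent. The common obstacle is that the standard OCRS analyses rely on Chernoff-type concentration, which pairwise independence does not provide; the uniform workaround is to bound every \emph{blocking} event --- the event that an arriving active element $e$ cannot be selected because doing so would close a circuit inside the current selection --- by a first- or second-moment estimate (Markov or Chebyshev applied to a sum of pairwise-independent activeness indicators) together with a union bound over a carefully pruned set of potential blockers. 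A recurring preprocessing step is \emph{attenuation}: independently discard each element with a constant probability first, scaling the effective marginals by a constant $\beta<1$ (losing only a constant factor in selectability) so that the expected load of every relevant constraint becomes a small fraction of its capacity and Markov yields a constant blocking probability.

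For laminar matroids the scheme is greedy: process elements in arrival order and declare the active element $e$ blocked exactly when some laminar set $L\ni e$ is already saturated by the current selection. After attenuation each such $L$ has expected load a small fraction of its capacity, so Markov bounds the probability it is saturated; summing these bounds over the chain of laminar sets above $e$ with geometrically decreasing weights keeps the union bound $O(1)$ and yields $\Omega(1)$-selectability. Transversal matroids are the cleanest case, since the randomness driving the scheme is our own and hence genuinely independent of the pairwise-independent active set: writing the input point in the transversal-matroid polytope as a fractional matching of the defining bipartite graph $(L,R,E)$ with left-degrees equal to the marginals, each active $\ell\in L$ proposes to a neighbour $r$ chosen with probability proportional to the matching value on $\ell r$ and is selected iff $r$ is still free. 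Then $\ell$ is blocked only if $r$ was taken by an earlier proposer, and since the earlier proposers to $r$ form a sum of indicators that are pairwise independent in the activeness, Markov (after attenuation) bounds this by a constant.

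Graphic and cographic matroids are where the real difficulty lies, and a naive greedy scheme does not suffice: already on $K_{2,t}$ with the uniform point $x\equiv\tfrac{1}{2}$ in the forest polytope, an arriving active edge is blocked whenever any one of $\Omega(t)$ edge-disjoint short monochromatic paths is present in the selection, and the natural union bound over these paths diverges. One therefore needs a structured, non-greedy scheme --- tailored to the graph via, say, a spanning tree or a bounded-out-degree orientation and selective about which active edges it adopts --- designed so that the blocking event of any element is controlled by only $O(1)$ ``local'' monochromatic cycles, whose probability is then bounded using pairwise independence alone. Cographic matroids are handled dually, with bonds (minimal cuts) in place of cycles and the cut structure of a spanning tree in place of its path structure. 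I expect constructing these schemes and proving that the local blocking bound is valid to be the technically heaviest part of the argument; the laminar and transversal analyses mainly serve as warm-ups for the required second-moment bookkeeping.

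Finally, regular matroids are reduced to the previous cases via Seymour's regular-matroid decomposition theorem: every regular matroid is built from graphic matroids, cographic matroids, and copies of the ten-element matroid $R_{10}$ by repeated $1$-, $2$-, and $3$-sums along common parts of size at most three. Since $R_{10}$ has only ten elements, it has a $\tfrac{1}{10}$-selectable OCRS trivially (commit in advance to a uniformly random element and select only it, if active --- which works for any marginal vector, not merely pairwise-independent ones), so it remains to show the sum operations preserve $\Omega(1)$-selectability. The plan is a composition lemma: split the input point $x\in P(M)$ into compatible points on every leaf matroid of the decomposition tree ($1$-sum is a direct sum and needs no work; a $2$- or $3$-sum requires allocating marginal mass across the basepoint(s) and checking the resulting points lie in the respective matroid polytopes), run the leaf OCRSs on the pieces, and stitch their decisions, resolving each basepoint by a fresh internally-generated coin so that at most one side ever uses it. The crucial point is that this composition is essentially lossless: a non-basepoint element keeps its marginal all the way down the tree, the active distribution each piece sees on its non-basepoint part remains pairwise independent (restriction preserves pairwise independence), and the freshly-coined basepoints are independent of everything else --- so the selectability of an element in the composed scheme equals that of the OCRS for the leaf matroid containing it, namely $\Omega(1)$, regardless of the decomposition depth. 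The main obstacles here are the marginal-splitting lemmas --- especially for the $3$-sum, whose three-element adhesion admits several interaction patterns --- and verifying that the basepoint coupling never forces a cumulative loss across the tree; should the polytope split fail to be exact, one would instead start from $x\in(1-\eps)P(M)$ and argue that a single global slack $\eps$ suffices.
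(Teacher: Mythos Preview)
Your laminar and transversal sketches are essentially what the paper does: for laminar the paper rounds capacities down to powers of two and runs a uniform-matroid OCRS per constraint, union-bounding over the chain (the geometric decay you invoke comes precisely from the power-of-two rounding); for transversal the paper computes the fractional matching $y$ via a max-flow argument and does exactly the random-proposal scheme you describe.

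For graphic and cographic matroids you concede you do not yet have a construction, and this is a genuine gap. The paper's graphic scheme is \emph{not} orientation- or spanning-tree-based: it builds a chain $\varnothing=E_l\subsetneq\cdots\subsetneq E_0=E$ \`a la \textcite{feldman2016online}, where $E_{i+1}$ collects all edges $e\in E_i$ whose conditional spanning probability exceeds $2b$; termination is shown by an averaging argument finding a vertex of weighted degree $\le 2b$, whose incident edges never enter the bad set. This is exactly the kind of ``local blocking'' control you ask for, but the mechanism is the chain decomposition, not a tree or orientation. The cographic scheme is \emph{not} dual to the graphic one at all: the paper instead shows (following \cite{soto2013matroid}) that a cographic matroid on a graph of minimum degree $3$ has density at most $3$, and any density-$\gamma$ matroid admits a trivial $1/\gamma$-selectable OCRS by writing the all-$1/\gamma$ vector as a convex combination of independent sets and sampling one. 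Low-degree vertices are then handled by quotienting out parallel classes.

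For regular matroids your composition picture has a concrete error. The claim that the composition is ``essentially lossless'' and that a global $(1-\eps)$ slack would suffice is false: contracting the sum-set $A_M$ (one element for a $2$-sum, a triangle for a $3$-sum) can drop the rank of a set by up to $2$, so the projected point $\bx|_{\widehat M}$ need not lie in $\cP_{\widehat M}$ unless you start from $\bx\in\tfrac{1}{3}\cP_{\widetilde M}$---this is the content of the paper's \cref{lem:projecting_x}, and the $\tfrac{1}{3}$ is tight for rank-$3$ sets. Moreover the paper does not allocate marginal mass to the basepoints or coin them: it uses the Dinitz--Kortsarz ``good'' decomposition, contracts $A_M$ out of each leaf matroid, and glues via their \cref{lem:gluing_the_pieces}; the basepoints never see the OCRS at all. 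Your splitting-with-coins approach may be salvageable, but it is not lossless and you would have to redo the polytope-compatibility analysis from scratch.
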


Finally, we consider the single-item case in greater detail in \Cref{sec:single-item}. For this
single-item case the current best result from \textcite{DBLP:conf/wine/CaragiannisGLW21} uses a multiple-threshold
algorithm to achieve a $(\sqrt{2}-1)$-prophet inequality; however,
this bound is worse than the $\nf{1}{2}$-prophet inequality known for
fully independent distributions. 
We show in \cref{sec:imposs-single} that
no (non-adaptive) multiple-threshold algorithm (i.e., one that
prescribes a sequence of thresholds $\tau_i$ up-front, and picks the
first index $i$ such that $X_i \geq \tau_i$) can beat
$2(\sqrt{5}-2) \approx 0.472$, suggesting that if $\nf12$ is at all
possible it will require adaptive algorithms.

\begin{theorem}[Upper Bound for Multiple Thresholds]
  Any multiple-thres\-hold algorithm for the single-item prophet
  inequality in the PI setting achieves a factor of at most $0.472$.
\end{theorem}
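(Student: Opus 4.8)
The plan is to prove the bound by exhibiting an explicit family of pairwise-independent (PI) instances on which \emph{every} multiple-threshold policy earns at most a $(2(\sqrt 5-2)+o(1))$-fraction of $\Exp[\max_i X_i]$. First I would build the instance from the standard two-parameter PI generator over a prime field: identify $[n]$ with $\mathbb{F}_p$ (so $n=p$), draw $a,b\sim\mathrm{Unif}(\mathbb{F}_p)$ independently, and set $X_i=f(a+bi)$ for a function $f\colon \mathbb{F}_p\to\R$ with $f\ge 1$ to be designed. For distinct $i,j$ the pair $(a+bi,\,a+bj)$ is uniform on $\mathbb{F}_p^2$, so $(X_i,X_j)$ are independent and each $X_i$ has the law of $f(\mathrm{Unif}(\mathbb{F}_p))$; hence $\mathbf X$ is PI no matter what $f$ is. I would choose $f$ to realize a geometric hierarchy of scales --- $f$ takes the value $\beta^{\ell}$ on a nested family $B_0=\mathbb{F}_p\supseteq B_1\supseteq B_2\supseteq\cdots$ with $\abs{B_\ell}\approx p\beta^{-\ell}$, so that $\prob{X_i\ge \beta^{\ell}}\approx \beta^{-\ell}$ and each scale contributes comparably to $\Exp[X_i]$. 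This is the PI analogue of the classical single-item prophet-hardness instance; the scale parameter $\beta>1$ (and, if helpful for making the instance exchangeable, a uniformly random relabeling of coordinates, which preserves pairwise independence) will be optimized at the end.

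Next I would lower-bound the prophet. By the layer-cake identity, $\Exp[\max_i X_i]=\prob{\max_i X_i\ge 1}+\sum_{\ell\ge 1}(\beta^{\ell}-\beta^{\ell-1})\,\prob{\max_i X_i\ge \beta^{\ell}}$. For each level let $N_\ell:=\abs{\{i:X_i\ge\beta^{\ell}\}}$, so $\mu_\ell:=\Exp[N_\ell]\approx p\beta^{-\ell}$; because the indicator events $\{X_i\ge\beta^{\ell}\}$ are pairwise independent, the covariances vanish and $\Var(N_\ell)\le\mu_\ell$, whence Chebyshev gives $\prob{N_\ell\ge 1}\ge 1-1/\mu_\ell$. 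Thus $\max_i X_i$ reaches every level up to roughly $\log_\beta p-\omega(1)$ with probability $1-o(1)$, and summing the geometric series shows that the prophet collects essentially the whole hierarchy, i.e.\ $\Exp[\max_i X_i]=(1-o(1))\,\beta^{\lfloor\log_\beta p\rfloor}$ up to a $\beta$-dependent constant.

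For the upper bound on multiple-threshold policies, I would fix an arbitrary threshold sequence, round each $\tau_i$ to a power of $\beta$ (which only helps the policy), and argue by a standard exchange argument that the demanded scales may be taken non-increasing along the arrival order. The payoff then decomposes over the maximal blocks of positions on which a fixed scale $\ell$ is demanded: inside such a block the policy fires at each reached position with probability $\approx\beta^{-\ell}$, and the block's contribution is governed by the probability that it fires somewhere in the block against the value $\approx\beta^{\ell}$ obtained when it does, with a small correction for overshooting into a higher scale. Summing the block contributions reduces ``best ratio over all threshold policies'' to a one- or two-parameter continuous optimization; carrying this out, and optimizing over $\beta$ and letting $n\to\infty$, I expect the supremum to be the root of a quadratic equal to $2(\sqrt 5-2)\approx 0.472$.

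The hard part will be the upper bound, and in particular the ``survival'' probabilities --- the chance the policy has not yet fired in the first $m$ positions --- which are joint events on $\Omega(m)$ coordinates that pairwise independence does \emph{not} determine, and which in fact behave worse here than the product $\prod(1-\beta^{-\ell})$ one gets under full independence (this discrepancy is precisely what lets the PI bound dip below $\nf12$). Controlling them will require either exploiting the explicit field structure of the instance to estimate these probabilities directly, or an exchangeability-plus-second-moment argument; once the survival behavior is pinned down, collapsing a general threshold sequence to its ``scale profile'' and solving the resulting optimization should be routine calculus, and I would finally verify that the instance is genuinely only pairwise independent (not independent) so that the bound does not contradict the fully independent $\nf12$ guarantee.
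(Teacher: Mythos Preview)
Your construction has a fatal structural flaw that makes it useless as a hard instance for multiple-threshold algorithms. In the family $X_i=f(a+bi)$ over $\mathbb{F}_p$, condition on $b\neq 0$ (probability $1-1/p$): the map $i\mapsto a+bi$ is a bijection of $\mathbb{F}_p$, so the \emph{multiset} $\{X_1,\ldots,X_p\}$ is deterministically equal to $\{f(0),\ldots,f(p-1)\}$. In particular $\max_i X_i=\max_j f(j)=:M$ with probability $1-1/p$. Now consider the trivial multiple-threshold policy that sets $\tau_i\equiv M$ for every $i$: whenever $b\neq 0$ it selects the unique (or first) index realizing $M$, so its expected value is at least $(1-1/p)M$, which matches the prophet up to $O(1/p)$. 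Your geometric hierarchy of scales never enters the picture; the instance is simply too rigid. This is not a technicality you can patch with a random relabeling (that leaves the multiset deterministic) or by tweaking $f$: any construction of the form $X_i=f(g(i))$ for a random bijection $g$ will have this defect.

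The paper's instance is completely different and avoids exactly this issue. It uses only three value levels, with one deterministic ``safe'' item of value $t$, $n$ items of value $1$ each active with probability $1/n$, and one ``jackpot'' item of value $rn$ active with probability $1/n$; the PI coupling makes either zero or exactly two of the last $n+1$ items active. The crucial feature is that the set of realized values is \emph{random}: roughly half the time no item beats $t$, so a policy that waits for a high value risks getting nothing, while a policy that grabs $t$ forfeits both the $1$'s and the jackpot. The analysis is then a direct computation of the expected reward as a function of the acceptance probabilities $q_i$, followed by optimizing the free parameters $t$ and $r$ to force the ratio down to $2\sqrt5-4$. There is no survival-probability issue and no continuous optimization over scale profiles; the whole argument fits in half a page. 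You should abandon the field-based hierarchy and look for a small, explicit instance in which the \emph{support} of $\max_i X_i$ is genuinely random.
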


In \cref{sec:single-sample} we give a \emph{single-sample} version of the single-item PI
prophet inequality.

\begin{restatable}[Single Sample Prophet Inequality]{theorem}{singlesample}
    \label{thm:single_sample}
      There is an algorithm that draws a single sample from the underlying
  pairwise-independent distribution $\langle \widetilde X_1, \ldots, \widetilde X_n \rangle \sim \cD$ on $\mathbb{R}_+^n$, and
  then faced with a second sample
  $\langle X_1, \ldots, X_n \rangle \sim \cD$ (independent from $\langle \widetilde X_1, \ldots, \widetilde X_n \rangle$), picks a single item $i$ from $X_1, \ldots, X_n$
  with expected value at least
  $\Omega(1) \cdot \Exp_{\mathbf{X} \sim \cD}[\max_i X_i]$.
\end{restatable}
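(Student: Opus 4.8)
The plan is to use a single data-driven threshold read off from the sample, together with an independent Bernoulli ``discard'' whose probability is a fixed universal constant (crucially, \emph{not} depending on the unknown $\cD$). First I would reduce to the atomless case: adding an independent $\mathrm{Unif}[0,\delta]$ nudge to each coordinate preserves pairwise independence, changes $\Exp[\max_i X_i]$ by at most $\delta$, and costs the algorithm at most $\delta$, so one lets $\delta\to 0$ at the end. The algorithm is then: set $M := \max_i \widetilde X_i$ from the sample; process $X_1,\dots,X_n$ in order; and select the \emph{first} index with $X_i\ge M$ for which an independent $\Ber(\tfrac1{16})$ coin also comes up heads. I expect the discard to be genuinely necessary: without it, ``select the first active index'' can be $o(1)$-competitive in the pairwise-independent world, since when many items clear $M$ one picks up none of the overflow from the right item.

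Writing $g(t) := \sum_i \Pr[X_i\ge t]$ (non-increasing) and $a := \inf\{t : g(t)\le 8\}$, the analysis hinges on two estimates. \emph{(i) The sampled threshold is not too low:} for $t<a$ we have $g(t)>8$, and pairwise independence gives $\Var\big(\#\{i:\widetilde X_i\ge t\}\big) \le \Exp\big[\#\{i:\widetilde X_i\ge t\}\big] = g(t)$, so Chebyshev yields $\Pr[\max_i\widetilde X_i\ge t]\ge 1-1/g(t) > 7/8$; letting $t\uparrow a$ gives $\Pr[M\ge a]\ge 7/8$, and in the atomless case $\{M\ge a\}=\{g(M)\le 8\}$ and $\Pr[M<a]=\Pr[M\le a]$. \emph{(ii) A conditional value bound:} for a fixed threshold value $m$, writing $A_i$ for ``$i$ is selected'' and $c_i$ for the coins, one has $\mathbbm{1}[A_i] \ge \mathbbm{1}[X_i\ge m]\,c_i\,\big(1-\sum_{j<i}\mathbbm{1}[X_j\ge m]\,c_j\big)$; taking expectations, using independence of the coins, and using \emph{pairwise independence} to factor $\Exp[X_i\mathbbm{1}[X_i\ge m]\mathbbm{1}[X_j\ge m]] = \Exp[X_i\mathbbm{1}[X_i\ge m]]\,\Pr[X_j\ge m]$, the expected collected value is at least $\tfrac1{16}\big(1-\tfrac1{16}g(m)\big)\sum_i\Exp[X_i\mathbbm{1}[X_i\ge m]]$, hence at least $\tfrac1{32}\sum_i\Exp[X_i\mathbbm{1}[X_i\ge m]]$ whenever $g(m)\le 8$.

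Combining the two (the contribution from $\{g(M)>8\}$ is nonnegative and dropped), using the pointwise bound $\sum_i X_i\mathbbm{1}[X_i\ge m] \ge (\max_i X_i)\,\mathbbm{1}[\max_i X_i\ge m]$, and using $M\perp X$, I would get
\[
  \Exp[\alg] \;\ge\; \tfrac1{32}\,\Exp_{M,X}\!\big[\mathbbm{1}[M\ge a]\cdot(\max_i X_i)\cdot\mathbbm{1}[\max_i X_i\ge M]\big].
\]
Now the key point: $M=\max_i\widetilde X_i$ and $Y:=\max_i X_i$ are i.i.d.\ with a common atomless cdf $F$. Conditioning on $Y$, the inner probability is $\Pr[a\le M\le Y] = (F(Y)-F(a))\,\mathbbm{1}[Y\ge a]$, and $F(a)=\Pr[M<a]\le\tfrac18$ by (i), so the right-hand side is at least $\tfrac1{32}\Exp_Y\big[Y\,(F(Y)-\tfrac18)^+\big]$. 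Finally $\Exp_Y[Y(F(Y)-\tfrac18)^+] = \Omega(\Exp[Y])$ by an elementary argument: $(F(Y)-\tfrac18)^+ \ge \tfrac18\,\mathbbm{1}[F(Y)\ge\tfrac14]$, and $\{F(Y)\ge\tfrac14\}=\{Y\ge q\}$ for the $\tfrac14$-quantile $q$ (atomless $F$), with $\Exp[Y\mathbbm{1}[Y\ge q]] \ge \Exp[Y]-\tfrac14 q \ge \tfrac23\Exp[Y]$ since $q\le\tfrac43\Exp[Y]$ by Markov. This gives $\Exp[\alg] = \Omega(1)\cdot\Exp[\max_i X_i]$, as required.

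I expect the main obstacle to be exactly this data-driven threshold. A single sample cannot estimate $g(t)=\sum_i\Pr[X_i\ge t]$, so one cannot simply aim for a threshold with $g(t)\approx 1$; worse, when $g$ has a large downward jump (many items sharing an atom) no ``balanced'' deterministic threshold exists at all. The resolution is the interplay above: the sample maximum lands automatically in $\{g\le 8\}$ (estimate (i)); the \emph{fixed} discard probability $\tfrac1{16}$ neutralizes the dangerous factor $1-\tfrac1{16}g(m)$ for every $g(m)\le 8$ without the algorithm ever knowing $g$ (estimate (ii)); and the equality in distribution of $M$ and $\max_i X_i$ converts the threshold-relative guarantee into one against the true prophet benchmark. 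The atom-killing perturbation lets the last step use the clean cdf identity and disposes of the jump corner case.
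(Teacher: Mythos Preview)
Your argument is correct and yields an $\Omega(1)$ guarantee, but it differs from the paper's proof in both the algorithm and the analysis. The paper uses the \emph{simpler} algorithm: set $v^*=\max_i\widetilde X_i$ and accept the first $X_i\ge v^*$, with \emph{no} Bernoulli discard. For the analysis, it invokes the single-threshold bound of \textcite{DBLP:conf/wine/CaragiannisGLW21}---a threshold with $x(\tau):=\sum_i\Pr[X_i\ge\tau]=t$ is $\min(1-t,\,t/(t+1))$-competitive in the PI setting---and then integrates this against the distribution of $x(v^*)$, using their Lemma~1 bound $t/(t+1)\le\Pr[x(v^*)\le t]\le t$. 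After integration by parts this gives the explicit constant $3-\sqrt5-\ln 2\approx 0.071$.

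Your self-contained route (Chebyshev to control $\Pr[g(M)\le 8]$, then the union-bound-plus-pairwise-factoring estimate on the collected value) is a legitimate alternative and avoids the black-box citation, at the price of a substantially worse constant (roughly $1/384$). The one claim you should retract is that the discard is ``genuinely necessary'': the paper shows precisely that the no-discard algorithm is $\Omega(1)$-competitive. Your own estimate~(i) already explains why---the random threshold $M$ lands in the regime $g(M)=O(1)$ with constant probability, so the ``many items clear $M$'' scenario simply does not dominate. The discard in your algorithm is harmless but redundant; if you drop it and instead restrict estimate~(ii) to thresholds $m$ with $g(m)\le 1$ (replacing the Chebyshev target $8$ by $1$, where the paper's $t/(t+1)$ bound still gives $\Pr[g(M)\le 1]\ge 1/2$), you can run essentially the same argument without the coins and without the Caragiannis et al.\ black box.
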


The results stated so far appeared in the original version of this paper \cite{DBLP:conf/ipco/GuptaHKL24}. A natural follow-up question is: what happens if one assumes $t$-wise independence instead of $2$-wise? How do the quality guarantees improve with $t$? In \cref{sec:twise}, we characterize exactly the quality of CRSs in the $t$-wise independent setting for the single item case.

\begin{restatable}[$t$-wise Independent Prophet Inequality]{theorem}{twise}
    \label{thm:twise-CRS-tight-bounds}
    For $t$-wise independent distributions, the best achievable single item CRS quality is $\phi(t) + O(1/n)$, where  
    \[\phi(t) := 1 - \sum_{k=0}^{2 \cdot \lfloor t/2\rfloor} \frac{(-1)^k}{k!}.\]
\end{restatable}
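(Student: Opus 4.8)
The plan is to prove matching bounds on the best single-item CRS quality over $t$-wise independent instances: a CRS achieving quality at least $\phi(t)$, and a single $t$-wise independent instance on which no CRS exceeds $\phi(t)+O(1/n)$. In both directions the extremal behaviour is controlled by the same truncated-exponential quantity $1-\phi(t)=\sum_{k=0}^{M}\frac{(-1)^k}{k!}$ with $M:=2\lfloor t/2\rfloor\le t$.

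\textbf{Achievability ($\ge\phi(t)$).} First I would observe that an offline single-item CRS of quality $c$ exists iff the fractional demands $c\,x_e$ can be simultaneously routed through the active set, i.e., by the Hall / max-flow--min-cut condition for the bipartite ``pick one active element'' assignment, iff $\Pr[R\cap T\neq\emptyset]\ge c\sum_{e\in T}x_e$ for every $T\subseteq[n]$. Since the restriction of a $t$-wise independent distribution to the coordinates of $T$ is again $t$-wise independent with marginal sum at most $1$, it suffices to prove $\Pr[R\neq\emptyset]\ge\phi(t)\sum_e x_e$ for every $t$-wise independent instance. Here I apply the Bonferroni inequality at the \emph{even} truncation level $M$: the $s$-th inclusion--exclusion term of $\Pr[\bigcup_e\{e\in R\}]$ equals $e_s(x)$ exactly for $s\le t$, so $\Pr[R\neq\emptyset]\ge\sum_{s=1}^{M}(-1)^{s+1}e_s(x)$. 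The remaining (analytic) step is the symmetric-function inequality $\sum_{s=1}^{M}(-1)^{s+1}e_s(x)\ge\phi(t)\sum_e x_e$ for all $x\ge 0$ with $\sigma:=\sum_e x_e\le 1$. I would prove it by a smoothing argument: merging two coordinates $x_i,x_j\mapsto x_i+x_j$ changes $\sum_{s=1}^{M}(-1)^{s+1}e_s$ by exactly $+x_ix_j\,T_{M-2}(x_{\mathrm{rest}})\ge 0$, where $T_{M-2}$ is the even-level truncation of $\prod_k(1-x_k)$ and hence nonnegative (it dominates the product, which is $\ge 0$). Thus the infimum over configurations with fixed sum $\sigma$ is approached in the fully spread-out limit, where $e_s\to\sigma^s/s!$ and the expression tends to $H(\sigma):=1-\sum_{s=0}^{M}\frac{(-\sigma)^s}{s!}$; a one-line calculus check (its first derivative is a positive odd truncation of $e^{-\sigma}$ and its second a negative even truncation on $[0,1]$, so $H(\sigma)/\sigma$ is decreasing) gives $H(\sigma)\ge H(1)\sigma=\phi(t)\sigma$.

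\textbf{Hardness ($\le\phi(t)+O(1/n)$).} I would use the fully symmetric uniform instance: marginals $1/n$, and conditioned on $|R|=r$ the set $R$ is uniform among $r$-subsets. A symmetrization argument (average any CRS over $S_n$; this does not lower the min-over-elements quality, and a symmetric CRS is without loss uniform over $R$) shows the optimal CRS picks a uniformly random active element, with quality exactly $1-\Pr[R=\emptyset]$. So the task reduces to maximizing $\Pr[R=\emptyset]$ over distributions of $|R|$ subject to the $t$-wise independence constraints, which are exactly that the falling-factorial moments satisfy $\Exp[(|R|)_s]=(n)_s/n^s=1-O(s^2/n)$ for $s=0,1,\dots,t$ (here $(n)_s=n(n-1)\cdots(n-s+1)$). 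This is a linear program; by duality its value equals $\min\{\Exp_{\mathrm{Poisson}(1)}[q]:\deg q\le t,\ q(0)\ge 1,\ q\ge 0\text{ on }\mathbb{Z}_{\ge 1}\}$ (using that $\mathrm{Poisson}(1)$ has all falling-factorial moments $1$). The polynomial $q^{\star}(x)=\prod_{j=1}^{M}(1-x/j)$ has degree $M\le t$, equals $1$ at $0$, and is nonnegative on $\mathbb{Z}_{\ge 0}$ because $M$ is even; a direct computation gives $\Exp_{\mathrm{Poisson}(1)}[q^{\star}]=\sum_{k=0}^{M}\frac{(-1)^k}{k!}=1-\phi(t)$. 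For a matching primal one solves the moment equations on the support $\{0,1,\dots,M\}$ (a triangular/Vandermonde system with a nonnegative solution); when $t$ is odd this support fails the degree-$t$ moment equation, and the fix is to add a tiny ``escaping'' atom at $r=n$, which absorbs that constraint at a cost of only $O(1/n)$ in $\Pr[R=\emptyset]$ — this is precisely why $\phi$ depends only on $\lfloor t/2\rfloor$. Hence $\Pr[R=\emptyset]=1-\phi(t)-O(1/n)$ is realized and every CRS on this instance has quality $\phi(t)+O(1/n)$.

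\textbf{Main obstacle.} The hardest ingredient is the symmetric-function inequality in the achievability direction: $\sum_{s=1}^{M}(-1)^{s+1}e_s$ has mixed signs and is neither Schur-convex nor Schur-concave, so one must find the right smoothing operation (coordinate merging, which increases it) together with the identity expressing its increment via the nonnegative truncated product $T_{M-2}(x_{\mathrm{rest}})$. A secondary difficulty is getting the $O(1/n)$ accounting right and, for odd $t$, the escaping-atom construction — without it a naively chosen symmetric instance (e.g.\ one supported on $|R|\le t$) only certifies a strictly weaker bound. I would also note in passing why the obvious scheme ``pick a uniformly random active element'' does not suffice once $t\ge 4$ (on an instance with one almost-surely-active element its quality is at most $1/2<\phi(t)$), which is what forces the Hall-based argument above.
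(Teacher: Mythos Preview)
Your proposal is correct and follows the same two-part architecture as the paper (Bonferroni-based achievability plus symmetric hard instances with marginals $1/n$), but several technical routes genuinely differ.

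\textbf{Achievability.} Both you and the paper reduce to lower-bounding $\Pr[R\cap T\neq\varnothing]/\sum_{e\in T}x_e$; your Hall/max-flow formulation is equivalent to the paper's use of the Chekuri--Vondr\'ak--Zenklusen dual (the layer-cake decomposition of any $y\ge 0$ into level-set indicators shows the two are the same). For the symmetric-function step, the paper replaces a pair $(x_p,x_q)$ by their average and shows the Bonferroni sum \emph{decreases}, so the minimum is at the fully equal point $x_i=\sigma/n$; it then needs a separate nontrivial inequality (their Claim comparing $\sum_k(-1)^{k-1}\binom{n}{k}(\sigma/n)^k$ to $\sum_k(-1)^{k-1}\sigma^k/k!$) to pass from finite $n$ to the Taylor truncation. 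Your merging smoothing goes the other direction and lets you take the spread-out limit directly, so you bypass that claim entirely; this is a cleaner route, and your increment identity (the change equals $x_ix_j$ times the even truncation of $\prod(1-x_k)$, hence nonnegative) is exactly right.

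\textbf{Hardness.} The paper writes down the $t+1$ linear equations in $z_0,\dots,z_t$, solves them explicitly as $z_j^*=\tfrac{1}{j!}\sum_{b=0}^{t-j}(-1)^b/b!$, and uses a Cramer's-rule/perturbation argument to transfer to finite $n$. Your LP-duality view with the Poisson$(1)$ certificate $q^\star(x)=\prod_{j=1}^{M}(1-x/j)=\sum_{k=0}^M\tfrac{(-1)^k}{k!}(x)_k$ is more conceptual and correctly gives $\mathbb{E}_{\mathrm{Poisson}(1)}[q^\star]=1-\phi(t)$. For odd $t$ you and the paper do the same thing (an atom at $|R|=n$ of mass $n^{-t}$). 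One point you should not leave unargued: you assert the primal on support $\{0,\dots,M\}$ has a \emph{nonnegative} solution, but this is not automatic from triangularity. The paper checks it via the explicit formula above (each $z_j^*$ is a nonnegative truncation of $e^{-1}/j!$); in your framework complementary slackness with $q^\star$ pins the support to the zeros of $q^\star$ on $\mathbb{Z}_{\ge1}$ together with $0$, but you still need to exhibit the signs.
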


Note that this is the truncation to the first $2 \cdot \lfloor t/2\rfloor$ terms of the Taylor expansion for $1-e^{-1}$ at $0$, and converges to $1-e^{-1}$ in the fully independent case as expected \cite{DBLP:conf/focs/FeigeV06,DBLP:journals/toc/FeigeV10}.

Finally, in \cref{app:almighty-ub}, we also show that there is no $(\nicefrac{1}{4} + \eps)$-balanced OCRS against the almighty adversary (we define the almighty adversary formally in \cref{app:preliminaries}).

\subsection{Related Work}
\label{sec:related-work}

In independent and concurrent work, \textcite{DBLP:journals/corr/abs-2310-05240} also
study the pairwise-independent versions of prophet inequalities and
(online) contention resolution schemes. This work can be considered
complementary to ours: they show that for arbitrary linear matroids,
at best $O(1/r)$ factors can be achieved for
pairwise-independent versions of OCRSs, and at best
$O(1/(\log r))$ factors can be achieved for pairwise-independent
versions of matroid prophet inequalities, where $r$ is the rank of the matroid. They also obtain $\Omega(1)$-selectable OCRSs for uniform, graphical, and bounded degree transversal matroids by observing that these have the $\alpha$-partition property (see \cite{DBLP:conf/soda/BabaioffDGIT09}), reducing to the single-item setting. 
Another motivation for our work is the famous matroid secretary problem, since the latter is known to be equivalent to the existence of good OCRSs for \emph{arbitrary} distributions that admit $\Omega(1)$-balanced CRSs against a \emph{random-order} adversary \cite{DBLP:conf/innovations/Dughmi22}.

The original single-item prophet inequality for product distributions was proven by \textcite{MR0515432};  applications to computer science were given by \textcite{DBLP:conf/aaai/HajiaghayiKS07} and \textcite{DBLP:conf/stoc/ChawlaHMS10}. There is a vast literature on variants and extensions of prophet inequalities, which we cannot survey here for lack of space.
Contention resolution schemes were introduced by \textcite{DBLP:journals/siamcomp/ChekuriVZ14} in the context of constrained submodular function maximization, and these were generalized by \textcite{feldman2016online} to the online setting in order to give prophet inequalities for richer constraint families.

Limited-independence versions of prophet inequalities were studied from the early days e.g. by \textcite{MR1160620,MR1228074,MR1160620}. Many stochastic optimization problems have been studied recently in correlation-robust settings, e.g., by \textcite{DBLP:conf/esa/BateniDHS15, DBLP:conf/approx/0001GMT23,Immorlica0W20,DBLP:journals/corr/abs-2310-07809}; pairwise-independent prophet inequalities were introduced by \textcite{DBLP:conf/wine/CaragiannisGLW21}.

There is a growing body of work on single-sample prophet inequalities in the i.i.d. setting \cite{azar14prophet,azar19prior,rubinstein20optimal,caramanis21single,DBLP:journals/corr/abs-2304-02063}. We are the first to study these for pairwise-independent distributions.

\section{Preliminaries}
\label{app:preliminaries}

We assume the reader is familiar with the basics of matroid theory, and refer to \cite{Schrijver-book} for definitions. We will call elements of the ground set \emph{parallel} if they form a circuit of size $2$. The matroid polytope is defined to be  $\cP_M := \{\bx \in \RR^{E} \mid \bx\geq \mathbf 0,\: \bx(S) \leq \rank(S) \:\: \forall S \sse E\}.$

We focus on pairwise-independent versions of \emph{contention resolution schemes} (CRSs), both in the offline and online settings. For polytope $\cP$ and scalar $b \in \RR$, define $b\cP := \{b \bx \mid \bx \in \cP\}$.

\begin{definition}[PI-consistency]
  \label{def:pi-cons}
  Given a polytope $\cP \sse [0,1]^n$ and a vector $\bx = (x_1, \ldots, x_n) \in \cP$, a probability distribution $\cD$ over subsets of $[n]$ is \emph{PI-consistent} with $\bx$ if
  \begin{enumerate}[nosep]
    \item $\Pr_{R \sim \cD}[i \in R] = x_i$, for all $i\in[n]$, and
    \item $\Pr_{R \sim \cD}[\{i,j\} \sse R] = x_i \cdot x_j$ for all $i,j\in[n],i\ne j$.
  \end{enumerate}
\end{definition}

\begin{claim}
  \label{clm:scale}
  Consider the distribution $\cD'$ obtained by drawing $R \sim \cD$
  and then subsampling each element \emph{independently} with
  probability $p \in [0,1]$. If $\cD$ is PI-consistent for $\bx$ then
  $\cD'$ is PI-consistent for $b\bx \in b\cP$.
\end{claim}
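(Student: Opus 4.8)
The plan is to verify directly the two conditions of \cref{def:pi-cons} for the distribution $\cD'$ with respect to the scaled vector $p\bx$ (so $b = p$ in the statement). The containment $p\bx \in p\cP$ is immediate from the definition of $p\cP$, since $\bx \in \cP$; so the content is entirely in the two probability identities.

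First I would fix notation for the two-stage sampling process. Let $R \sim \cD$, and let $\langle B_i \rangle_{i \in [n]}$ be mutually independent $\Ber(p)$ random variables, also independent of $R$, where $B_i = 1$ indicates that element $i$ survives the independent subsampling. Then $R' := \{\, i \in R : B_i = 1 \,\}$ is distributed according to $\cD'$. For the marginal condition, I would compute
\[
\Pr_{R' \sim \cD'}[i \in R'] = \Pr[\, i \in R \text{ and } B_i = 1 \,] = \Pr_{R \sim \cD}[i \in R] \cdot \Pr[B_i = 1] = x_i \cdot p,
\]
where the second equality uses that $B_i$ is independent of $R$ and the third uses PI-consistency of $\cD$ for $\bx$; note $x_i p$ is exactly the $i$-th coordinate of $p\bx$.

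For the pairwise condition, fix distinct $i, j \in [n]$ and compute
\[
\Pr_{R' \sim \cD'}[\{i,j\} \sse R'] = \Pr[\, \{i,j\} \sse R,\ B_i = 1,\ B_j = 1 \,] = \Pr_{R \sim \cD}[\{i,j\} \sse R] \cdot \Pr[B_i = 1] \cdot \Pr[B_j = 1] = (x_i x_j)\, p^2,
\]
where the middle equality uses the mutual independence of $B_i, B_j$ and their joint independence from $R$, and the last uses PI-consistency of $\cD$. Since $(x_i x_j) p^2 = (p x_i)(p x_j)$, this matches the product of the $i$-th and $j$-th coordinates of $p\bx$, which completes the verification.

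I do not expect a genuine obstacle here; the claim is a routine fact about mixing pairwise independence with fresh independent coins. The only point requiring care is the bookkeeping of which independence is invoked where: we may use independence among the subsampling coins $B_i$ and independence of the $B_i$ from $R$ freely, but we must invoke the structure of $\cD$ only through its two guaranteed properties (marginals and pairwise products), never through any higher-order independence, since $\cD$ need not be a product distribution.
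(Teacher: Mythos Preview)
Your proof is correct; the paper states this claim without proof, treating it as a routine observation, and your direct verification of the two conditions in \cref{def:pi-cons} is exactly the natural argument one would supply.
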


\begin{definition}[$(b,c)$-balanced PI-CRS]
  \label{def:picrs}
  Let $b,c \in [0,1]$. A \emph{$(b,c)$-balanced pairwise-independent contention resolution scheme} (or \emph{$(b,c)$-balanced PI-CRS} for short) for a convex polytope $\cP$ is a (randomized) algorithm $\pi:2^{[n]} \to 2^{[n]}$ that given any distribution $\cD$ over subsets of $[n]$ and any set $R\sse[n]$, returns a random set $\pi(R)$ satisfying the following properties:
  \begin{enumerate}[nosep]
    \item (subset) $\pi(R) \sse R$,
    \item (feasibility) $\chi_{\pi(R)} \in \cP$, and
    \item ($(b,c)$-balanced) Let $\bx$ be any vector in $b\cP$, and $\cD$ be any distribution PI-consistent with $\bx$. Let $R$ be sampled according to $\cD$, then for any $i \in [n]$,
    $\Pr[ i \in \pi(R) \mid i \in R] \geq c$, where the probability is over
    both the distribution $\cD$ and any internal randomization of $\pi$. 
  \end{enumerate}
  We use \emph{$c$-balanced PI-CRS} as an abbreviation for $(1,c)$-balanced PI-CRS.
\end{definition}

\begin{claim}\label{clm:scaleCRS}
    Given a $(b,c)$-balanced PI-CRS $\pi$ for a polytope $\cP$, we can construct a $bc$-balanced PI-CRS $\pi'$ for $\cP$.
\end{claim}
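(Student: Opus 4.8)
The plan is to obtain $\pi'$ by prepending an independent $b$-subsampling step to $\pi$, and then to check the three defining conditions of a $bc$-balanced (i.e.\ $(1,bc)$-balanced) PI-CRS one at a time. Concretely, on input a distribution $\cD$ and a realized active set $R$, the scheme $\pi'$ would first form the subsampled distribution $\cD'$ of \cref{clm:scale} (keep each element of a draw from $\cD$ independently with probability $b$), then subsample the given $R$ the same way to obtain $R' \subseteq R$, run $\pi$ on $R'$ with distribution $\cD'$, and return $\pi'(R) := \pi(R')$. Note that $\pi'$ needs only $\cD$ and $b$ to do this, and that $R' \sim \cD'$ whenever $R \sim \cD$.

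The subset and feasibility conditions are then immediate: $\pi'(R) = \pi(R') \subseteq R' \subseteq R$, and $\chi_{\pi'(R)} = \chi_{\pi(R')} \in \cP$ because $\pi$ is feasible. For balancedness, I would fix an arbitrary $\bx \in 1 \cdot \cP = \cP$ and an arbitrary $\cD$ that is PI-consistent with $\bx$. By \cref{clm:scale}, $\cD'$ is PI-consistent with $b\bx \in b\cP$, so the $(b,c)$-balanced guarantee of $\pi$ applies to the pair $(b\bx,\cD')$, giving $\Pr_{R'\sim\cD'}[\,i \in \pi(R') \mid i \in R'\,] \ge c$ for every $i$.

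Finally I would relate $\Pr[i \in \pi'(R) \mid i \in R]$ to the above. Since $\pi(R') \subseteq R' \subseteq R$, the event $i \in \pi'(R)$ forces $i \in R'$, so by the law of total probability
\[
  \Pr[\,i \in \pi'(R) \mid i \in R\,] \;=\; \Pr[\,i \in \pi(R') \mid i \in R'\,]\cdot \Pr[\,i \in R' \mid i \in R\,] \;\ge\; c \cdot b,
\]
where the last factor equals $b$ because the $b$-subsampling of element $i$ is independent of $\cD$ and of all the other randomness. This is exactly the $bc$-balanced condition.

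I do not expect a genuine obstacle — this is a short bookkeeping argument resting entirely on \cref{clm:scale} — but the one place to be careful is the conditioning identity above: one should check that $\{i\in R'\}\subseteq\{i\in R\}$ as events (true since $R'\subseteq R$), so that conditioning first on $i\in R$ and then on $i\in R'$ collapses correctly, and that the coin deciding whether $i$ survives subsampling is independent of the coins forming the rest of $R'$ and of $\pi$'s internal randomness, so that $\Pr[i\in R'\mid i\in R]=b$ exactly rather than merely $\ge b$.
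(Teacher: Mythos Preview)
Your proposal is correct and follows essentially the same route as the paper: independently subsample each element of $R$ with probability $b$ to get $R'\sim\cD'$, invoke \cref{clm:scale} to see that $\cD'$ is PI-consistent with $b\bx\in b\cP$, apply the $(b,c)$-balanced guarantee of $\pi$ to $(\cD',R')$, and chain the conditioning to pick up the extra factor of $b$. Your extra care about the conditioning identity is justified and matches the paper's one-line computation $\Pr[i\in I\mid i\in R]=b\cdot\Pr[i\in I\mid i\in R']\ge bc$.
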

\begin{proof}
    The PI-CRS $\pi'$ works as follows: on receiving $\bx\in\cP$ and $R\sim \cD$ where $\cD$ is PI-consistent with $\bx$, flips a coin for each element $e\in R$ and discards $e$ with probability $1-b$. Let $R'\subseteq R$ denote the resulting set and $\cD'$ the underlying distribution. The PI-CRS $\pi'$ then returns $\pi(R')$.

    By \Cref{clm:scale}, $\cD'$ is PI-consistent with $b\bx$. Let $I$ be the set that $\pi(R')$ returns. Then $I\sse R'\sse R$, and $\chi_I\in\cP$. Also $\Pr[i\in I\mid i\in R]=b\Pr[i\in I\mid i\in R']\ge bc$. Therefore we conclude that $\pi'$ is indeed a $bc$-balanced PI-CRS for $\cP$.
\end{proof}

An \emph{online} contention resolution scheme is one where the $n$ items are presented to the algorithm one-by-one in some order chosen by the adversary; the algorithm has to decide whether to accept an arriving item into the resulting independent set, or to irrevocably reject it~\cite{feldman2016online}. The precise definition of PI-OCRS depends on the adversarial model. We define these precisely in \Cref{app:preliminaries}, and give just the high-level idea here.

\subsection{The Almighty Adversary}
The almighty adversary knows everything about the game even before the game starts. More specifically, first the adversary sees the realization of $R \sim \cD$, together with the outcome of all the random coins the algorithm uses. It then presents the items in $R$ to the algorithm in the worst possible order. (Notice that since all the random coins of $\pi$ is already tossed and known to the adversary, $\pi$ is a deterministic algorithm from the perspective of the almighty adversary; therefore with enough computational power the almighty adversary can compute the worst order of items for $\pi$.) To describe a kind of PI-OCRS effective against the almighty adversary, we adopt ideas from \cite{feldman2016online}.

\begin{definition}[Greedy PI-OCRS]
    Let $\cP\sse[0,1]^n$ be some convex polytope, and $\cD$ be some distribution over subsets of $[n]$. We call a (randomized) algorithm $\pi: 2^{[n]} \to 2^{[n]}$ a \emph{greedy PI-OCRS} if:
    \begin{enumerate}
        \item Before the items arrive, $\pi$ has access to $\cD$, and (randomly) defines some down-closed set family $\cF\sse\cI$.
        \item Let $I$ be the set $\pi$ selects, initially empty. When an item $e$ arrives, it is added to $I$ if $I\cup\{e\}\in\cF$.
        \item Finally, $\pi$ returns $I$.
    \end{enumerate}
    We say $\cF$ is the \emph{feasible set family} defined by $\pi$, and for simplicity, we say the step 2 and 3 of this algorithm is a greedy PI-OCRS defined by $\cF$.
\end{definition}
Here $\pi$ need not list $\cF$ explicitly. In fact, the ability to judge whether a given set $S$ is in $\cF$ is enough. We now define the notion of selectable PI-OCRS, which is powerful even against the almighty adversary:

\begin{definition}[$(b,c)$-selectable PI-OCRS]
    Let $\cP\sse[0,1]^n$ be some convex polytope. We call a (randomized) algorithm $\pi: 2^{[n]}\to2^{[n]}$ a \emph{$(b,c)$-selectable PI-OCRS} if it satisfies the following:
    \begin{enumerate}
        \item Algorithm $\pi$ is a \emph{greedy} PI-OCRS.
        \item For any $\bx\in b\cP$, any distribution $\cD$ over subsets of $[n]$ that is PI-consistent with $\bx$ and any $i\in[n]$, let $\cF$ be the feasible set family defined by $\pi$. Let $R$ be sampled according to $\cD$, then 
        \begin{equation}
            \Pr_{R\sim\cD}[I\cup\{i\}\in\cF\quad\forall I\sse R, I\in\cF\mid i\in R]\ge c.\label{eqn:selectable}
        \end{equation}
        Here the probability is over $R$ and internal randomness of $\pi$ in defining $\cF$.
    \end{enumerate}
    We use \emph{$c$-selectable PI-OCRS} as an abbreviation for $(1,c)$-selectable PI-OCRS.
\end{definition}
Notice the definition here is slightly different from \cite{feldman2016online}, as we need to condition on the event $i\in R$. This is due to our limited independence over events $i\in R$: For the mutually independent case, one can prove that $\Pr_{R\sim\cD}[I\cup\{i\}\in\cF\quad \forall I\sse R, I\in\cF\mid i\in R]=\Pr_{R\sim\cD}[I\cup\{i\}\in\cF\quad \forall I\sse R, I\in\cF]$. But this may not hold in the pairwise-independent case.

We now argue that any $(b,c)$-selectable PI-OCRS is powerful even
against an almighty adversary.
\begin{claim}
    Let $\cP\sse[0,1]^n$ be a convex polytope, and $\pi$ a $(b,c)$-selectable PI-OCRS. Let $\bx$ be any vector in $b\cP$, and $\cD$ be any distribution PI-consistent with $\bx$. Let $R$ be sampled according to $\cD$. Then for any item $i\in[n]$, $\Pr[i\in\pi(R)\mid i\in R]\ge c$, even when the order of the items is decided by the almighty adversary.
\end{claim}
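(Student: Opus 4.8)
The plan is to follow the reduction of \textcite{feldman2016online}: show that the event ``item $i$ ends up selected'' is implied by an event depending only on the feasible family $\cF$ that $\pi$ chooses in step~1 and on the realized active set $R$ — and, crucially, \emph{not} on the arrival order. Once $\cF$ and $R$ are fixed, the almighty adversary's order is just one particular permutation, so if $i$ is selected for \emph{every} order then in particular it is selected for the adversary's order.

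Concretely, I would fix $i \in [n]$, condition on $i \in R$, let $\cF$ be the (random) feasible family produced by $\pi$, and introduce the ``survival event''
\[
  G_i \;:=\; \bigl\{\, I \cup \{i\} \in \cF \ \text{ for every } I \sse R \text{ with } I \in \cF \,\bigr\}.
\]
By $(b,c)$-selectability — inequality~\eqref{eqn:selectable} applied with this $\bx \in b\cP$ and this PI-consistent $\cD$ — we have $\Pr[G_i \mid i \in R] \ge c$, where the probability is over $R \sim \cD$ and the internal coins of $\pi$. It therefore suffices to prove that on the event $\{i \in R\} \cap G_i$, the greedy PI-OCRS determined by $\cF$ outputs a set containing $i$, regardless of the order in which the items of $R$ are revealed.

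To establish this, run the greedy rule for an arbitrary order and let $J$ be the partial solution held at the moment $i$ arrives. Two invariants hold throughout the run, by induction on arrivals: (i)~$J \sse R$, since the greedy rule only ever adds revealed items, i.e.\ items of $R$; and (ii)~$J \in \cF$, since $J$ starts as $\emptyset$ (which lies in the down-closed family $\cF$) and a new element $e$ is appended only when $J \cup \{e\} \in \cF$. Applying the defining property of $G_i$ with $I = J$ gives $J \cup \{i\} \in \cF$, so the greedy rule adds $i$ upon arrival; since elements are never removed, $i \in \pi(R)$. Chaining the bounds yields $\Pr[i \in \pi(R) \mid i \in R] \ge \Pr[G_i \mid i \in R] \ge c$, as claimed (and feasibility $\chi_{\pi(R)} \in \cP$ follows since $\cF \sse \cI$).

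The one point that needs care — the ``obstacle'', such as it is — is verifying that $G_i$ is genuinely a function of $(\cF, R)$ alone, so that conditioning on $G_i$ does not interact with the adversary's (possibly malicious) choice of permutation, and that invariants~(i)--(ii) hold for \emph{every} permutation rather than merely a random or algorithm-chosen one. Both are immediate from the greedy structure; the real substance is already baked into the definition of selectability, which was designed precisely to make this argument go through.
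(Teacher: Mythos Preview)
Your proposal is correct and follows essentially the same approach as the paper's proof: both observe that the selectability condition~\eqref{eqn:selectable} bounds the probability of an order-independent event (your $G_i$), and that on this event the greedy rule necessarily picks $i$ regardless of how the adversary permutes the arrivals. Your version is simply a more explicit and carefully written expansion of the paper's two-sentence argument, spelling out the invariants $J \sse R$ and $J \in \cF$ that the paper leaves implicit.
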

\begin{proof}
  The constraint \eqref{eqn:selectable} guarantees that no matter what set $I$ we choose before $i$ arrives, the event $I\cup\{i\}\in\cF$ holds (and therefore $i$ is selected when it is in $R$) with probability at least $c$. Since this argument holds regardless of the items appearing before $i$, we conclude that $\Pr[i\in\pi(R)\mid i\in R]\ge c$.
\end{proof}

Using the same idea of subsampling each item independently as \Cref{clm:scale}, we can also obtain a $bc$-selectable PI-OCRS from a $(b,c)$-selectable PI-OCRS.
\begin{claim}\label{clm:scaleOCRS}
    Given a $(b,c)$-selectable PI-OCRS $\pi$ for some polytope $\cP$, we can construct a $bc$-selectable PI-OCRS $\pi'$ for $\cP$.
\end{claim}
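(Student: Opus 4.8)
The plan is to mirror the independent‑subsampling trick behind \Cref{clm:scaleCRS} and \Cref{clm:scale}, but to perform the subsampling \emph{online}, one arrival at a time. Given $\bx\in\cP$ and a distribution $\cD$ that is PI-consistent with $\bx$, I would have $\pi'$ behave as follows. Before any item arrives, $\pi'$ forms the distribution $\cD'$ obtained from $\cD$ by keeping each element independently with probability $b$; by \Cref{clm:scale}, $\cD'$ is PI-consistent with $b\bx$, and $b\bx\in b\cP$ since $\bx\in\cP$. Now $\pi'$ runs $\pi$ on $\cD'$ to (randomly) obtain a down-closed family $\cF\sse\cI$, independently draws coins $Z_e\sim\Ber(b)$ for each $e\in[n]$, and declares the feasible family
\[
  \cF' \;:=\; \{\, S\sse[n] \;:\; S\in\cF \text{ and } Z_e=1 \text{ for all } e\in S \,\}.
\]
Running the greedy rule with $\cF'$ is exactly ``discard each arriving element with probability $1-b$ and feed the survivors into the greedy rule of $\pi$ with $\cF$''; so $\pi'$ simulates online the offline construction of \Cref{clm:scaleCRS}.

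The first thing to check is that $\pi'$ is a legitimate greedy PI-OCRS: $\cF'$ is the intersection of the down-closed family $\cF$ with the down-closed family $\{S:Z_e=1\ \forall e\in S\}$, hence down-closed, and $\cF'\sse\cF\sse\cI$. Next I would unwind the selectability inequality. Writing $R'=\{e\in R:Z_e=1\}$, for any $I\sse R$ the conjunction ``$I\in\cF'$ and $I\cup\{i\}\in\cF'$'' is equivalent to ``$I\sse R'$, $I\in\cF$, $Z_i=1$, and $I\cup\{i\}\in\cF$''. Since $\emptyset\in\cF'$, this shows that the quantified event $\{I\cup\{i\}\in\cF'\ \forall I\sse R,\,I\in\cF'\}$ coincides with $\{Z_i=1\}\cap\{I\cup\{i\}\in\cF\ \forall I\sse R',\,I\in\cF\}$.

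It then remains to lower bound the probability of this event conditioned on $i\in R$. Since $Z_i$ is independent of $R$, of the other coins, and of $\cF$ (which $\pi$ fixes as a function of the distribution $\cD'$ alone, before any realization is seen), I can peel off $\Pr[Z_i=1\mid i\in R]=b$. Conditioning additionally on $Z_i=1$ is the same as conditioning on $i\in R'$; under this conditioning $R'$ has the law $\cD'$ conditioned on $i\in R'$, while $\cF$ keeps its unconditional law and is independent of $R'$. Hence the $(b,c)$-selectability of $\pi$ applied to the distribution $\cD'$ (which is PI-consistent with $b\bx\in b\cP$) shows the remaining conditional probability is at least $c$, and multiplying gives $\Pr[\,\cdot\mid i\in R]\ge bc$. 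Thus $\pi'$ is a $(1,bc)$-selectable, i.e.\ $bc$-selectable, PI-OCRS.

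The step I expect to require the most care is the conditioning bookkeeping above. Because pairwise independence forces us to keep the conditioning ``$i\in R$'' throughout (the identity $\Pr[\,\cdot\mid i\in R]=\Pr[\,\cdot\,]$ available in the mutually independent analysis of \cite{feldman2016online} fails here), I must verify carefully that (i) the realized subsampled set $R'$ is independent of $\pi$'s chosen family $\cF$, and (ii) the conditional law of $R'$ given ``$i\in R$ and $Z_i=1$'' is precisely $\cD'$ conditioned on $i\in R'$, so that the $(b,c)$-selectability guarantee for $\pi$ on $\cD'$ can be invoked verbatim.
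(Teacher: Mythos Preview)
Your proof is correct and follows essentially the same approach as the paper: independently subsample each element with probability $b$ (your coins $Z_e$ are the paper's random set $S$), feed the subsampled distribution $\cD'$ to $\pi$ to obtain $\cF$, and set $\cF'=\{I\in\cF: I\subseteq\{e:Z_e=1\}\}$. Your treatment of the equivalence between the two quantified events and the conditioning bookkeeping is slightly more explicit than the paper's, but the argument is the same.
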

\begin{proof}
    Let $\cD$ be any distribution PI-consistent with some vector $\bx\in\cP_M$. First, $\pi'$ defines a random set $S$, where for each item $i$, $\pi'$ flips a random coin to include $i$ in $S$ with probability $b$. Let $R$ be sampled according to $\cD$, and let $\cD'$ denote the distribution of $R\cap S$, then one can verify that $\cD'$ is PI-consistent with $b\bx$. Let $\cF$ denote the feasible set family defined by $\pi$ for $\cD'$. Then $\pi'$ defines $\cF'$ as
    \[
        \cF':=\{I\in\cF\mid I\sse S\}
    \]
    Here $\cF'\sse\cF\sse\cI$, and it remains to bound the selectability of the greedy PI-OCRS defined by $\cF'$. Observe that for any $i\in S$, the following two propositions are equivalent:
    \begin{enumerate}
        \item $I\cup\{i\}\in\cF$ holds for any $I\in\cF,I\sse R\cap S$.
        \item $I\cup\{i\}\in\cF'$ holds for any $I\in\cF',I\sse R$.
    \end{enumerate}
    Then for any item $i$, we have
    \begin{align*}
        &\Pr[I\cup \{i\}\in\cF'\quad\forall I\in \cF',I\sse R\mid i\in R]\\
        &=\Pr[i\in S\mid i\in R]\Pr[I\cup \{i\}\in\cF'\quad\forall I\in \cF',I\sse R\mid i\in R\cap S]\\
        &=b\Pr[I\cup\{i\}\in\cF\quad\forall I\in\cF,I\sse R\cap S\mid i\in R\cap S]\\
        &\ge bc.\tag{$R\cap S\sim\cD'$, $(b,c)$-selectability of $\pi$}
    \end{align*}
    This completes the proof.
\end{proof}

\subsection{The Offline Adversary}
A much weaker type of adversary is the offline adversary, who knows nothing about the outcome of the random events, and has to decide the order of the items before they start to arrive. In other words, the adversary presents the worst, fixed order to the algorithm before $R \sim \cD$ is drawn or the randomness of the algorithm is determined. Here we use \emph{$(b,c)$-balanced} to denote PI-OCRS effective against offline adversary, compared to the \emph{$(b,c)$-selectable} against almighty adversary.

\begin{definition}[$(b,c)$-balanced PI-OCRS]\label{def:balancedOCRS}
    Let $\cP\sse[0,1]^n$ be some convex polytope. We call a (randomized) algorithm $\pi:2^{[n]} \to 2^{[n]}$ a $(b,c)$-balanced PI-OCRS for $\cP$, if given any distribution $\cD$ over subsets of $[n]$ and any set $R\in E$, returns a set $\pi(R)$ satisfying the following properties:
    \begin{enumerate}
        \item (subset) $\pi(R) \sse R$,
        \item (feasibility) $\chi_{\pi(R)} \in \cP$, and
        \item ($(b,c)$-balanced) Let $\bx$ be any vector in $b\cP$, and $\cD$ be any distribution PI-consistent with $\bx$, and let $R$ be sampled according to $\cD$. Suppose the adversary has to choose the order of the items before $R$ is sampled or the random coins of $\pi$ are tossed, and when each item $i$ is presented to $\pi$ and is in $R$, $\pi$ can make an irrevocable choice to select it or discard it. Then for any $i$,
        \[
            \Pr[i\in \pi(R)\mid i\in R]\ge c.
        \] 
    \end{enumerate}
    We use $c$-balanced PI-OCRS as an abbreviation for $(1,c)$-balanced PI-OCRS.
\end{definition}

Using the same idea of subsampling each item independently as \Cref{clm:scaleCRS}, we can also obtain a $bc$-balanced PI-OCRS from a $(b,c)$-balanced PI-OCRS. The proof is similar to \Cref{clm:scaleCRS} and therefore omitted.
\begin{claim}
    Given a $(b,c)$-balanced PI-OCRS $\pi$ for some polytope $\cP$, we can construct a $bc$-balanced PI-OCRS $\pi'$ for $\cP$.
\end{claim}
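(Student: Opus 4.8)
The plan is to adapt the proof of \Cref{clm:scaleCRS}, the only change being that the per-element subsampling now has to be performed online rather than offline. Concretely, I would define $\pi'$ as follows. Given $\bx \in \cP$ and a distribution $\cD$ PI-consistent with $\bx$, the adversary first commits to an order on $[n]$; then the items of $R \sim \cD$ arrive in that order. When an item $e \in R$ arrives, $\pi'$ flips an independent coin and with probability $1-b$ discards $e$ immediately, and with probability $b$ forwards $e$ to a running copy of $\pi$, selecting $e$ into $\pi'(R)$ exactly when $\pi$ selects it. Let $R' \sse R$ denote the (random) subsampled substream that is actually presented to $\pi$, and let $\cD'$ be its distribution.

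First I would dispatch the structural conditions: by construction $\pi'(R) = \pi(R') \sse R' \sse R$, so the subset property holds, and $\chi_{\pi(R')} \in \cP$ by feasibility of $\pi$, so the feasibility property holds. Then, applying \Cref{clm:scale} with subsampling probability $b$, the distribution $\cD'$ is PI-consistent with $b\bx \in b\cP$, which is precisely the class of inputs for which the $(b,c)$-balancedness guarantee of $\pi$ is available.

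Next I would verify the $bc$-balance bound. The offline adversary for $\pi'$ fixes its order before $R$ and before all coin flips (including the subsampling coins); restricting that order to $R'$ is a legitimate offline order for $\pi$, so the $(b,c)$-balancedness of $\pi$ on input $\cD'$ yields $\Pr[i \in \pi(R') \mid i \in R'] \ge c$ for every $i$. Since $\{i \in \pi(R')\} \sse \{i \in R'\} \sse \{i \in R\}$ and the subsampling coin of $i$ is independent of $\cD$,
\[
  \Pr[i \in \pi'(R) \mid i \in R] = \Pr[i \in R' \mid i \in R]\cdot \Pr[i \in \pi(R') \mid i \in R'] = b\cdot \Pr[i \in \pi(R') \mid i \in R'] \ge bc,
\]
as required.

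The only point that needs care — and the place where the offline setting is genuinely simpler than the almighty setting of \Cref{clm:scaleOCRS} — is checking that doing the subsampling online does not interact badly with the adversary. Because the order is committed up front, in particular before the Bernoulli$(b)$ subsampling coins, the stream handed to $\pi$ is simply the fixed order thinned by independent coin flips, which is exactly the situation covered by \Cref{clm:scale}; no modification of $\pi$'s internal behavior (such as restricting a feasible set family, as was needed in \Cref{clm:scaleOCRS}) is required. Hence the argument is a routine adaptation of \Cref{clm:scaleCRS}.
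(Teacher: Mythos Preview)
Your proposal is correct and matches the approach the paper intends: the paper omits the proof as ``similar to \Cref{clm:scaleCRS},'' and your construction is exactly the online version of that subsampling argument. Your explicit check that the offline adversary's order, being fixed before both $R$ and the Bernoulli$(b)$ coins, induces a legitimate offline adversary for $\pi$ on the thinned input $\cD'$ is the only extra point needed beyond \Cref{clm:scaleCRS}, and you handle it cleanly.
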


The property of \emph{balancedness} is weaker than that of \emph{selectability}: $(b,c)$-selectable PI-OCRSs are also $(b,c)$-balanced, but the converse might not be true. However, a \emph{balanced} PI-OCRS is enough to solve the prophet inequality problem against offline adversary, as we will see in the next section.

\subsection{Prophet Inequalities and OCRSs}

Feldman et al.~\cite{feldman2016online} showed connections between OCRSs and prophet inequalities; here we extend this relationship to the pairwise-independent setting.

\begin{definition}[Pairwise-Independent Matroid Prophet Game]
    Let $M=(E,\cI)$ be a matroid. For simplicity, we identify $E$ as $[n]$. Let $W_1,\ldots,W_n$ be $n$ pairwise-independent non-negative random variables denoting the weight of the items, and let $\cD_W$ be their joint distribution, which is known to the gambler. Now the $n$ items are presented to the gambler one by one in some \emph{fixed} order \emph{unknown} to the gambler, and when each item arrives, the gambler has to make an irrevocable choice to select the item or not. The items selected must form an independent set of $M$. The gambler wants to maximize the sum of weights of the items he selects. We define the offline optimum $\mathsf{OPT}:=\mathbb E[\max_{I\in\cI}\sum_{i\in I}W_i]$, and let $\mathsf{ALG}$ denote the expected reward of the algorithm the gambler uses. We call an algorithm an $\alpha$-approximation if $\mathsf{ALG}/\mathsf{OPT}\ge\alpha$.
\end{definition}

We claim that a good PI-OCRS gives a good approximation for the pairwise-independent matroid prophet game:

\begin{claim}
    Let $M=(E,\cI)$ be a matroid. If there is a $c$-balanced PI-OCRS for $\cP_M$, then there is a $c$-competitive algorithm for the pairwise-independent matroid prophet game over $M$.
\end{claim}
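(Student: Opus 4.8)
The plan is to adapt the threshold-based reduction of \textcite{feldman2016online} to the pairwise-independent setting, routing it through an \emph{ex-ante relaxation}. For each item $i$ and each $q\in[0,1]$ define $v_i(q):=\sup\{\Exp[W_i\mathbbm 1_A]\mid A\text{ an event},\ \Pr[A]=q\}$; by an exchange argument the supremum is attained by a threshold event $A=\{W_i\ge\tau_i(q)\}$ (put the mass on the largest values of $W_i$), so $v_i$ is concave and nondecreasing with $v_i(0)=0$ and $v_i(1)=\Exp[W_i]$. Put $\mathsf{OPT}_{\mathrm{ea}}:=\max_{\bx\in\cP_M}\sum_i v_i(x_i)$. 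The first step is to show $\mathsf{OPT}_{\mathrm{ea}}\ge\mathsf{OPT}$: for each realization let $I^{\star}(\mathbf W)\in\cI$ be an optimal independent set and set $x_i:=\Pr[i\in I^{\star}]$; then $\bx\in\cP_M$ as a convex combination of independent-set indicators, and $\Exp[W_i\mathbbm 1_{\{i\in I^{\star}\}}]\le v_i(x_i)$ by definition of $v_i$, so $\mathsf{OPT}=\sum_i\Exp[W_i\mathbbm 1_{\{i\in I^{\star}\}}]\le\sum_i v_i(x_i)\le\mathsf{OPT}_{\mathrm{ea}}$.

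Next I would set up the algorithm. Fix $\bx\in\cP_M$ maximizing $\sum_i v_i(x_i)$ and thresholds $\tau_i$ (independently randomized to break ties at atoms of $W_i$) with $\Pr[W_i\ge\tau_i]=x_i$; declare $i$ \emph{active} iff $W_i\ge\tau_i$ and let $R=\{i:W_i\ge\tau_i\}$. Since $W_1,\dots,W_n$ are pairwise independent, so are the indicators $\mathbbm 1_{\{i\in R\}}$, with $\Pr[i\in R]=x_i$ and $\Pr[\{i,j\}\subseteq R]=x_ix_j$; hence the law $\cD$ of $R$ is PI-consistent with $\bx\in\cP_M$. The gambler runs the given $c$-balanced PI-OCRS $\pi$ on $\cD$, feeding it the items in their (fixed, gambler-unknown) arrival order and revealing only each item's activity status, and selects $i$ exactly when $\pi$ does. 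The subset and feasibility properties of $\pi$ guarantee that this returns an independent set of $M$ consisting only of items the gambler actually saw, so the strategy is legal, and the $c$-balanced property (which holds for whatever fixed order the prophet adversary chose) gives $\Pr[i\in\pi(R)\mid i\in R]\ge c$ for every $i$.

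It then remains to lower-bound the reward: $\mathsf{ALG}=\sum_i\Exp[W_i\mathbbm 1_{\{i\in\pi(R)\}}]=\sum_i\Exp[W_i\mathbbm 1_{\{W_i\ge\tau_i\}}\,D_i]$, where $D_i\in\{0,1\}$ is the decision $\pi$ makes on $i$ given that $i$ is active --- a function only of $\pi$'s internal coins and the activity bits of items that arrive before $i$. If one can show $\Exp[W_i\mathbbm 1_{\{W_i\ge\tau_i\}}D_i]\ge c\cdot v_i(x_i)$ for each $i$, then $\mathsf{ALG}\ge c\sum_i v_i(x_i)=c\cdot\mathsf{OPT}_{\mathrm{ea}}\ge c\cdot\mathsf{OPT}$, which is exactly the claim. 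Note that $\Exp[D_i\mid W_i\ge\tau_i]=\Pr[i\in\pi(R)\mid i\in R]\ge c$, so the per-item bound is immediate \emph{if} $D_i$ is independent of $W_i$, since then $\Exp[W_i\mathbbm 1_{\{W_i\ge\tau_i\}}D_i]=v_i(x_i)\cdot\Exp[D_i\mid W_i\ge\tau_i]$.

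I expect this last point to be the crux, and the place where the pairwise setting genuinely diverges from the independent one. In the independent case $D_i$ is a function of $\{W_j\}_{j\ne i}$ and coins, all independent of $W_i$, so $D_i\perp W_i$ and the argument closes. Under mere pairwise independence $W_i$ is uncorrelated with every \emph{single} activity bit $\mathbbm 1_{\{W_j\ge\tau_j\}}$ but need not be independent of the \emph{joint} activity vector of the other items, so $D_i$ can correlate with $W_i$, and one must rule out that $\pi$ selects $i$ preferentially exactly when $W_i$ is small. My intended attack is to use that the OCRS may be taken \emph{monotone} --- then $D_i$ is antitone in the other activity bits, hence antitone in the other $W_j$'s, while $W_i\mathbbm 1_{\{W_i\ge\tau_i\}}$ is monotone in $W_i$ --- and to push through a correlation inequality; if bare monotonicity turns out not to suffice, the fallback is to weaken the coupling between activity and value (mixing the threshold rule with fresh independent randomness, or invoking the specific structure of the OCRSs constructed elsewhere in the paper) and redo the accounting. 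Summing the per-item bounds then yields $\mathsf{ALG}\ge c\cdot\mathsf{OPT}$.
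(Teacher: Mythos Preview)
Your reduction is essentially the paper's: set thresholds $\tau_i$ so that the activity event $\{W_i\ge\tau_i\}$ has probability $x_i$ (the paper takes $x_i=\Pr[i\in I_{\max}]$ rather than the ex-ante optimizer, but the resulting bound $\sum_i x_i\,\Exp[W_i\mid W_i\ge\tau_i]\ge\mathsf{OPT}$ is the same), feed the activity bits to the given PI-OCRS in the prophet's arrival order, and select whatever it selects. The step you isolate as the crux is precisely where the paper is briefest: it simply writes ``since whether $i$ is selected by $\pi$ when it is in $R$ is independent of $W_i$, we have $\Exp[W_i\mid i\in\pi(R)]=\Exp[W_i\mid i\in R]$'' and moves on.

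Your worry about that sentence is legitimate. Pairwise independence of the $W_j$ gives $W_i\perp\mathbbm 1_{\{W_j\ge\tau_j\}}$ for each single $j\ne i$, but not $W_i\perp R_{-i}$ jointly; hence $D_i$, being a function of $R$ and fresh coins, can in principle correlate with $W_i$ even after conditioning on $i\in R$. Reading the paper's sentence as ``$\pi$ does not take $W_i$ as an input'' is true by construction but does not yield the \emph{statistical} independence the computation uses. So the issue you flag is present in the paper's argument as well, not only in yours.

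Your proposed patches, however, do not close it either. Assuming the OCRS is monotone adds a hypothesis the claim does not grant, and even with monotonicity a correlation inequality between $D_i$ and $W_i$ would require some positive/negative association structure on the $W_j$ that mere pairwise independence does not supply. Decoupling activity from value via fresh independent coins for $R$ does make $D_i\perp W_i$, but then the per-item contribution collapses to $c\,x_i\,\Exp[W_i]$ rather than $c\,x_i\,\Exp[W_i\mid W_i\ge\tau_i]$, and $\sum_i x_i\,\Exp[W_i]$ need not dominate $\mathsf{OPT}$. In short: your outline matches the paper's, you have correctly put your finger on the delicate step, and neither your sketch nor the paper's one-line assertion supplies a complete argument for it.
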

\begin{proof}
    For simplicity, we assume all the $W_i$ are distributed continuously. (The extension to discrete distributions is standard, see e.g. \cite[page~2]{DBLP:conf/innovations/RubinsteinWW20}.) Let $I_{\max}$ be an arbitrary optimal set, i.e., $I_{\max}:=\argmax_{I\in\cI}\sum_{i\in I}W_i$. Let $p_i :=\Pr[i\in I_{\max}]$ denote the probability that element $i$ is in the optimal set. Then we claim that $\mathbf p$ is in $\cP$. This is because for any $S\sse E$,
    \[
        \sum_{i\in S} p_i=\sum_{i\in S}\Pr[i\in I_{\max}]=\mathbb E[|I_{\max}\cap S|]\le\rank(S).
    \]
    After we obtain $\mathbf p$, we set a threshold $\tau_i$ for each $i\in E$ such that $\Pr[W_i\ge\tau_i]=p_i$. We then have the following lemma regarding the expected optimum value.

    \begin{lemma}\label{lemma:optprob}
        Let $M=(E,\cI)$ be a matroid, and $W_i,p_i,\tau_i$ be defined as above. Then
        $$
            \mathsf{OPT}=\mathbb E\bigg[\max_{I\in\cI}\sum_{i\in I}W_i\bigg]\le\sum_{i\in E}p_i\cdot\mathbb E[W_i\mid W_i\ge\tau_i].
        $$
    \end{lemma}
    \begin{proof}
        First notice that $\mathsf{OPT}=\sum_{i\in E}p_i\cdot\mathbb E[W_i\mid i\in I_{\max}]$. Since $p_i=\Pr[i\in I_{\max}]=\Pr[W_i\ge\tau_i]$, we have
        \begin{align*}
            p_i\: \mathbb E[W_i|i\in I_{\max}]&=p_i\int_{t\ge0}\Pr[W_i\ge t\mid i\in I_{\max}]\, \dt =\int_{t\ge0}\Pr[W_i\ge t, i\in I_{\max}]\, \dt\\
            &\le\int_{t\ge0}\Pr[W_i\ge t,W_i\ge\tau_i]\, \dt =p_i\int_{t\ge0}\Pr[W_i\ge t\mid W_i\ge\tau_i]\, \dt\\
            &=p_i\: \mathbb E[W_i\mid W_i\ge\tau_i].
        \end{align*}
        Summing over $i$ finishes the proof.
    \end{proof}
    Now we describe the procedure for translating PI-OCRSs into approximations for prophet games. Let $\pi$ be a $c$-balanced PI-OCRS for $M$. Let the random set $R$ denote the items with weights exceeding $\tau_i$, i.e., $R:=\{i\mid W_i\ge\tau_i\}$. Let $\cD$ denote the distribution of $R$, and $\cD$ is known to $\pi$ before the items begin to arrive. When item $i$ arrives, we add $i$ to $R$ and send it to $\pi$ if $W_i\ge\tau_i$. Then we select $i$ if and only if $\pi$ selects $i$.

    First, by our argument before, $\mathbf p$ is in $\cP_M$. Then notice that since the $W_i$ are pairwise-independent, the events $\{i\in R\}$, or $\{W_i\ge\tau_i\}$, are also pairwise-independent. Therefore $R$ is PI-consistent with $\mathbf p$. Then the definition of $c$-balanced PI-OCRS guarantees that for each $i$, $\Pr[i\in \pi(R)\mid i\in R]\ge c$. Since whether $i$ is selected by $\pi$ when it is in $R$ is independent of $W_i$, we have $\mathbb E[W_i\mid i\in \pi(R)]=\mathbb E[W_i\mid i\in R]=\mathbb E[W_i\mid W_i\ge\tau_i]$. Then we can bound the expected reward of our algorithm by
    \begin{align*}
        \mathsf{ALG}&=\sum_{i\in E}\Pr[i\in \pi(R)]\cdot\mathbb E[W_i\mid i\in \pi(R)] =\sum_{i\in E}\Pr[i\in \pi(R)]\cdot\mathbb E[W_i\mid W_i\ge\tau_i]\\
        &\ge\sum_{i\in E}c\Pr[i\in R]\cdot\mathbb E[W_i\mid W_i\ge\tau_i] \tag{\Cref{def:balancedOCRS}}\\
        &=\sum_{i\in E}c\,p_i\cdot\mathbb E[W_i\mid W_i\ge\tau_i] \ge c\,\mathsf{OPT}. \tag{\Cref{lemma:optprob}}
    \end{align*}
    Therefore we conclude that our algorithm gives a $c$-approximation for the pairwise-independent matroid prophet game.
\end{proof}

\section{Uniform Matroids}
\label{sec:uniform}

Recall that the independent sets of a \emph{uniform matroid} $M = (E, \cI)$
of rank $k$ are all subsets of $E$ of size at most $k$; hence our goal is to
pick some set of size at most $k$. Identifying $E$ with $[n]$, the
corresponding matroid polytope is $\cP_M := \{ \bx \in [0,1]^n \mid \sum_{i=1}^n x_i \leq k \}.$
Our main results for uniform matroids are the following, which imply~\Cref{thm:uniformprophet}.

\begin{restatable}[Uniform Matroids]{theorem}{Uniform}
  \label{thm:uniform}
  For uniform matroids of rank $k$, there is
  \begin{enumerate}[nosep,label=(\roman*)]
  \item a $(1 - O(k^{-\nicefrac13}))$-balanced PI-CRS, and 
  \item a $(1 - O(k^{-\nicefrac15}))$-selectable PI-OCRS.
  \end{enumerate}
\end{restatable}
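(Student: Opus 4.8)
The plan is to derive both parts from a single template: a ``top-$k$'' greedy selection rule whose guarantee reduces to controlling an upper tail of $|R|$ \emph{conditioned on a fixed coordinate being present}. By \Cref{clm:scaleCRS} (resp.\ \Cref{clm:scaleOCRS}) it is enough to produce a $(1-\delta,\,1-O(\delta))$-balanced PI-CRS (resp.\ selectable PI-OCRS) for a parameter $\delta=\delta(k)$ optimized at the end, since composing the two bounds gives an overall $1-O(\delta)$; we will find that $\delta=\Theta(k^{-1/3})$ works for (i) and $\delta=\Theta(k^{-1/5})$ for (ii). So assume the input is $\bx\in(1-\delta)\cP_M$, i.e.\ $\mu:=\sum_i x_i\le(1-\delta)k$, and that $R\sim\cD$ with $\cD$ PI-consistent with $\bx$.

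For part (i) the offline CRS fixes a uniformly random priority order on $[n]$ and outputs the $k$ highest-priority elements of $R$ (equivalently: keep all of $R$ if $|R|\le k$, else a uniformly random $k$-subset); the subset and feasibility conditions are immediate. By symmetry, when $i\in R$ the rank of $i$ within $R$ is uniform on $\{1,\dots,|R|\}$, so using $\min(1,k/x)\ge 1-(x-k)^{+}/k$,
\[
\Pr[i\in\pi(R)\mid i\in R]=\mathbb{E}\big[\min(1,k/|R|)\,\big|\,i\in R\big]\ \ge\ 1-\tfrac1k\,\mathbb{E}\big[(|R|-k)^{+}\,\big|\,i\in R\big],
\]
so it suffices to prove $\mathbb{E}[(|R|-k)^{+}\mid i\in R]=O(\delta k)$ for every $i$. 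The core estimate invokes pairwise independence twice. Write $\mathbb{E}[(|R|-k)^{+}\mid i\in R]=x_i^{-1}\,\mathbb{E}[\mathbf{1}_{i\in R}(|R|-k)^{+}]$; Cauchy--Schwarz gives $\mathbb{E}[\mathbf{1}_{i\in R}(|R|-k)^{+}]\le\sqrt{x_i}\cdot\sqrt{\mathbb{E}[((|R|-k)^{+})^{2}]}$. For the second factor, since $\mu\le k$ we have $((|R|-k)^{+})^{2}\le(|R|-\mu)^{2}$ pointwise, and now \emph{without any conditioning} pairwise independence gives $\mathbb{E}[(|R|-\mu)^{2}]=\Var(|R|)=\sum_j x_j(1-x_j)\le\mu\le k$. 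Hence $\mathbb{E}[(|R|-k)^{+}\mid i\in R]\le\sqrt{k/x_i}$, which is $O(\delta k)$ as soon as $x_i\ge\Omega(1/(\delta^{2}k))=\Omega(k^{-1/3})$.

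For coordinates with $x_i$ below this threshold a further idea is needed, and \emph{this is the main obstacle}: conditioning on $i\in R$ destroys pairwise independence among the remaining coordinates, so the conditional triple correlations $\mathbb{E}[Y_iY_jY_l]$ are controlled only through $\mathbb{E}[Y_jY_l]=x_jx_l$, and the conditional variance of $|R|$ can genuinely be as large as $\Theta(\mu^{2})$ (indeed larger when $x_i$ is tiny) — so a direct Chebyshev bound on the conditional law is worthless, and even the Cauchy--Schwarz route above degrades as $x_i\to 0$. I would handle small $x_i$ by a separate argument that does not route through the conditional law of $|R|$ at all (for instance by reorganizing the low-probability coordinates so that they contend only among themselves, or against a reduced budget), and it is the interplay between this regime and the choice of $\delta$ that pins down the exponent $1/3$.

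For part (ii) (almighty adversary) we run a greedy PI-OCRS with feasible family $\cF=\{S:|S|\le k\}$ — i.e.\ accept the first $k$ active arrivals — preceded, via \Cref{clm:scaleOCRS}, by an independent $b$-subsample of the ground set; write $R'$ for the surviving active set and $\mu'=\mathbb{E}|R'|\le b(1-\delta)k$. Against an adversarial order there is no random-priority averaging, so selectability equals $\Pr[|R'|\le k\mid i\in R']$, and we must bound the conditional upper tail $\Pr[|R'|>k\mid i\in R']\le\sqrt{\Pr[|R'|\ge k]\,/\,\Pr[i\in R']}$ directly, using the \emph{unconditional} bound $\Pr[|R'|\ge k]\le\mu'/(k-\mu')^{2}$. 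Losing the $\min(1,k/|R'|)$ smoothing costs a power in the exponent; re-optimizing jointly over $\delta$ and the subsample rate $b$ (again balanced against the worst coordinate scale $x_i$, and combined with the small-$x_i$ treatment from part (i)) yields the stated $1-O(k^{-1/5})$.
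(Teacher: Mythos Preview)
Your proposal has a genuine gap that you yourself flag but do not close: the small-$x_i$ regime. Your Cauchy--Schwarz route gives $\mathbb{E}[(|R|-k)^{+}\mid i\in R]\le\sqrt{k/x_i}$ (and analogously $\Pr[|R'|>k\mid i\in R']\le\sqrt{\Pr[|R'|>k]/x_i'}$ for the OCRS), and both blow up as $x_i\to 0$. You write that you ``would handle small $x_i$ by a separate argument'' and that ``it is the interplay between this regime and the choice of $\delta$ that pins down the exponent,'' but you never supply that argument. Since pairwise independence places no constraint on the conditional law of $|R|$ given $i\in R$ beyond fixing the conditional first moments, there really can be elements of tiny marginal whose presence forces $|R|$ to be huge --- so this is not a technicality you can patch by tweaking constants. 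Neither the random-priority CRS nor the accept-first-$k$ OCRS, as you have set them up, delivers a uniform bound over all coordinates.

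The paper avoids per-coordinate estimates entirely by working with the aggregate quantity $\sum_i\Pr[|R|\ge k,\,i\in R]=\mathbb{E}[|R|\cdot\mathbf{1}_{|R|\ge k}]$, which a direct Chebyshev computation bounds by $O(1/\delta^{2})$ (\Cref{clm:averaging2}). For the CRS this is combined with an \emph{iterative averaging} step: some element has $\Pr[|R|\ge k\mid i\in R]\le O(1/(\delta^{2}k))$; place it last, delete it, and repeat on the residual instance. Each element then gets the bound relative to its own prefix, regardless of how small its marginal is; optimizing $\delta$ gives the $k^{-1/3}$. For the OCRS the paper uses a \emph{two-bucket} scheme: call $i$ ``bad'' if its conditional tail exceeds a threshold $b$, and reserve a small $\Theta(\epsilon k)$ bucket for bad items. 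Good items are handled by definition; bad items are controlled because the aggregate bound forces $\sum_{j\text{ bad}}x_j=O(1/(b\delta^{2}))$, so Markov on the bad bucket suffices. Balancing the two cases yields the $k^{-1/5}$. Both mechanisms sidestep the small-$x_i$ problem structurally rather than by a sharper analytic inequality; your proposal is missing an idea of this kind.
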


A simple greedy PI-OCRS follows by choosing the feasible set family $\cF=\cI$, i.e. selecting $R$ as the resulting set if $|R|\le k$. However, conditioning on $i\in R$, pairwise independence only guarantees the
marginals of the events $j\in R$ (and they might have arbitrary correlation), so we can only use Markov's inequality to bound $\Pr[|R|\le k\mid i\in R]$. Therefore this analysis only gives a $(b,1-b)$-selectable PI-OCRS for $k$-uniform matroid. (For details
see~\Cref{sec:app-uniform-simpleOCRS}.)

Hence instead of
conditioning on some $i\in R$ and using Markov's inequality, we consider all of the items together, and use Chebyshev's inequality to give a bound for $\Pr[|R|\ge k, i\in R]$. The following lemma is key for both our PI-CRS and PI-OCRS.

\begin{restatable}{lemma}{averagingtwo}
    \label{clm:averaging2}
    Let $M=(E,\cI)$ be a $k$-uniform matroid, where $E$ is identified as $[n]$. Given $\bx \in (1-\delta)\cP_M$ and a distribution $\cD$ over subsets of $E$ that is PI-consistent with $\bx$, let $R\sse E$ be the random set sampled according to $\cD$. Then 
    \begin{gather*}
        \sum_{i=1}^n\Pr[\abs{R}\ge k,i\in R]\le\frac{1-\delta^2}{\delta^2}.
    \end{gather*}
\end{restatable}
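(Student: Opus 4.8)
The plan is to reduce the statement to a tail estimate for the single integer-valued random variable $Z := \abs{R} = \sum_{i=1}^n \mathbf{1}[i\in R]$, and then apply Chebyshev's inequality --- essentially the only concentration tool available under mere pairwise independence. The first step is to rewrite the quantity of interest: since $\sum_{i=1}^n \mathbf{1}[i\in R] = Z$ pointwise,
\[
\sum_{i=1}^n \Pr[\abs{R}\ge k,\, i\in R] \;=\; \Exp\!\left[ \mathbf{1}[Z\ge k]\sum_{i=1}^n \mathbf{1}[i\in R] \right] \;=\; \Exp\!\left[ Z\cdot \mathbf{1}[Z\ge k] \right],
\]
so it suffices to bound this truncated expectation by $(1-\delta^2)/\delta^2$ (we may assume $\delta>0$, as otherwise the claimed bound is vacuous).

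Next I would record the first two moments of $Z$. Writing $\mu := \Exp[Z] = \sum_i x_i$, the two conditions of PI-consistency (\cref{def:pi-cons}) give $\Exp[Z^2] = \sum_i x_i + \sum_{i\ne j} x_i x_j$, so that
\[
\Var(Z) \;=\; \sum_i x_i(1-x_i) \;\le\; \sum_i x_i \;=\; \mu \;\le\; (1-\delta)k,
\]
where the last inequality is exactly the hypothesis $\bx \in (1-\delta)\cP_M$ for a rank-$k$ uniform matroid. In particular $k-\mu \ge \delta k > 0$, so Chebyshev bounds on the upper tail of $Z$ at levels $\ge k$ are meaningful.

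The crux is to bound $\Exp[Z\,\mathbf{1}[Z\ge k]]$ sharply, rather than by the far-too-lossy $n\cdot\Pr[Z\ge k]$. For this I would invoke the layer-cake (Abel summation) identity
\[
\Exp\!\left[ Z\,\mathbf{1}[Z\ge k] \right] \;=\; k\,\Pr[Z\ge k] \;+\; \sum_{s>k}\Pr[Z\ge s],
\]
and estimate each piece with Chebyshev. The first term satisfies $\Pr[Z\ge k] \le \Var(Z)/(k-\mu)^2 \le \Var(Z)/(\delta k)^2$. For the tail sum, $\Pr[Z\ge s]\le \Var(Z)/(s-\mu)^2$ for every integer $s>k>\mu$, and comparing the resulting series with $\int_k^\infty (t-\mu)^{-2}\,\dt = (k-\mu)^{-1}$ gives $\sum_{s>k}\Pr[Z\ge s]\le \Var(Z)/(k-\mu)\le \Var(Z)/(\delta k)$. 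Adding the two contributions,
\[
\Exp\!\left[ Z\,\mathbf{1}[Z\ge k] \right] \;\le\; \frac{\Var(Z)}{\delta k}\left(\frac1\delta + 1\right) \;=\; \frac{(1+\delta)\,\Var(Z)}{\delta^2 k} \;\le\; \frac{(1+\delta)(1-\delta)k}{\delta^2 k} \;=\; \frac{1-\delta^2}{\delta^2},
\]
as required.

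The only genuinely delicate point is this last step. One must resist bounding $\Exp[Z\,\mathbf{1}[Z\ge k]]$ by $k\,\Pr[Z\ge k]$ (which is simply false, since $Z$ can greatly exceed $k$) or by $n\,\Pr[Z\ge k]$ (which loses a factor of order $n/k$); it is precisely the quadratic decay of the Chebyshev tail that makes the series $\sum_{s>k}\Pr[Z\ge s]$ sum to something of the correct order $\Var(Z)/(\delta k)$, matching the leading term up to the factor $\delta$. The moment computation and the two applications of Chebyshev are otherwise entirely routine.
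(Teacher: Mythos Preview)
Your proposal is correct and follows essentially the same approach as the paper: rewrite the sum as $\Exp[Z\cdot\mathbf{1}[Z\ge k]]$, split it via Abel summation into $k\Pr[Z\ge k]+\sum_{s>k}\Pr[Z\ge s]$, and bound each piece with Chebyshev using $\Var(Z)\le(1-\delta)k$ and $k-\mu\ge\delta k$. The only cosmetic difference is that you control $\sum_{s>k}(s-\mu)^{-2}$ by an integral comparison while the paper uses the telescoping bound $\sum_{j\ge1}(x+j)^{-2}\le 1/x$; both yield the same $\Var(Z)/(\delta k)$ estimate.
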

\begin{proof}
    The expression on the left can be written as
    \begin{align*}
        \sum_{i=1}^n\Pr[\abs{R}\ge k, i\in R] &= \sum_{i=1}^n \sum_{t=k}^n \sum_{\substack{S: \\ \abs{S} = t}} \mathbbm{1}[i\in S] \Pr[R=S]
            =  \sum_{t=k}^n \sum_{\substack{S: \\ \abs{S} = t}} \Pr[R=S] |S| \\
                                             &=\sum_{t=k}^n t\cdot\Pr[\abs{R}=t] 
                                             =k\Pr[\abs{R}\ge k]+\sum_{t=k+1}^n\Pr[\abs{R}\ge t].
      \end{align*}
      We now bound the two parts separately using Chebyshev's inequality.
      Let $X_i:=\mathbbm{1}[i\in R]$ be the indicator
      for
      $i$ being active, and let $X = \sum_{i\in E} X_i$.
      Since the $X_i$ are pairwise independent,
      $\Var[X]=\sum_i \Var[X_i] \le \sum_i \Exp[X_i^2] = \sum_i \Exp[X_i]
      = \Exp[X]$. For the first part, we have
    \begin{align}
      k \cdot \Pr[\abs{R}\ge k]&=k \cdot \Pr[X\ge k] \le k\cdot \frac{\Var[X]}{(k-\mathbb{E}[X])^2} \tag{Chebyshev's ineq.}\\
                           &\le k\cdot\frac{1-\delta}{\delta^2k} =\frac{1-\delta}{\delta^2}. \label{eq:firstcheby}
    \end{align}
    For the second part,
    \begin{align}
      \sum_{t=k+1}^n\Pr[\abs{R}\ge t]&=\sum_{t=k+1}^n \Pr[X\ge t] \le \sum_{t=k+1}^n\frac{\Var[X]}{(t-\mathbb E[X])^2} \tag{Chebyshev's ineq.}\\
                                 &\le \sum_{t=k+1}^n\frac{(1-\delta) k}{(t-(1-\delta)k)^2} 
                                 \le
                                   (1-\delta)k\cdot\sum_{t\ge1}\frac1{(\delta
                                   k+t)^2}  \notag\\
                                 &\le (1-\delta)k\cdot\frac1{\delta k} =\frac{1-\delta}{\delta}, \label{eq:secondcheby}
    \end{align}
    where we used the inequality
    \begin{align*}
      \sum_{j\ge1}\frac1{(x+j)^2} \le\sum_{j\ge1}\frac1{(x+j-1)(x+j)}
                                 =\sum_{j\ge1}\left(\frac1{x+j-1}-\frac1{x+j}\right)
                                 =\frac1x.
    \end{align*}
    Summing up the \eqref{eq:firstcheby} and \eqref{eq:secondcheby} finishes the proof.
\end{proof}

Using this lemma, we can bound $\min_i\Pr[\abs{R}\ge k\mid i\in R]$ and obtain a $(1-O(k^{-1/3}))$-balanced PI-CRS (in the same way that \cite[Lemma 4.13]{DBLP:journals/siamcomp/ChekuriVZ14} implies a $(b,1-b)$-CRS in the i.i.d. setting). The details are deferred to \Cref{sec:uniform-CRS}.

\subsection{A $(1-O(k^{-\nicefrac15}))$-Selectable PI-OCRS for Uniform Matroids} \label{sec:uniformOCRS}

Our PI-CRS has to consider the elements in a specific
order, and therefore it does not work in the online setting where the items come in adversarial order. The key idea for our PI-OCRS is to separate ``good'' items and ``bad'' items, and control each part separately. Let us assume $R$ is sampled according to some distribution $\cD$ PI-consistent with $\bx$, and that $\bx$ is on a face of $(1-\eps)\cP_M$, i.e.
\begin{gather}
  \sum_{i=1}^n \Pr[i\in R]=(1-\eps)k.
\end{gather}
We will choose the value of $\eps$ later. For some other constants $r,b\in(0,1)$ define an item $i$ to be \emph{good} if $\Pr[\abs{R}>\lfloor(1-r\eps)k\rfloor \mid i\in R]\le b$. Let $E_g$ denote the set of good items, and $E_b:= E\setminus E_g$ the remaining \emph{bad} items. Our algorithm keeps two buckets, one for the good items and one for the bad, such that
\begin{enumerate}[nosep,label=(\roman*)]
    \item the good bucket has a capacity of $\lfloor(1-r\eps)k\rfloor$, and
    \item the bad bucket has a capacity of $\lceil r\eps k\rceil$.
\end{enumerate}
When an item arrives, we put it into the corresponding bucket as long
as that bucket is not yet full. Finally, we take the union of the items
in the two buckets as the output of our OCRS. This algorithm is indeed a greedy PI-OCRS with the feasible set family
$$
    \cF=\{I\in\cI\mid |I\cap E_g|\le\lfloor(1-r\eps)k\rfloor,|I\cap E_b|\le\lceil r\eps k\rceil\}.
$$
We show that for any item $i$, $\Pr[I\cup\{i\}\in\cF\quad\forall I\in\cF,I\sse R\mid i\in R]\ge1-o(1)$. First, for a good item $i$, by definition
\begin{align}
    &\Pr[I\cup\{i\}\in\cF \ \ \forall I\in\cF, \ I\sse R\mid i\in R] \notag \\
    &=1-\Pr[|R\cap E_g|>\lfloor(1-r\eps)k\rfloor \mid i\in R] \notag \\
    &\ge1-\Pr[\abs{R}>\lfloor(1-r\eps)k\rfloor \mid i\in R]
    \ge1-b, \notag
    \intertext{since $i$ is good. Next, for a bad item $i$,
we can use Markov's inequality conditioning on $i\in R$:}
    &\Pr[I\cup\{i\}\in\cF \ \ \forall I\in\cF,I\sse R\mid i\in R]\notag \\
    &=1-\Pr[|R\cap E_b| >\lceil r\eps k\rceil \mid i\in R] \notag \\
    &\ge1-\frac{\sum_{j\in E_b}\Pr[j\in R\mid i\in R]}{r\eps k} 
    =1-\frac{\sum_{j\in E_b}\Pr[j\in R]}{r\eps k}, \label{eq:3}
\end{align}
where we use Markov's inequality, and 
the last step uses pairwise independence of events $i\in R$. We now need to bound $\sum_{j\in E_b}\Pr[j\in R]$. If we define $\eps'$ as
$1-\eps'=\frac{1-\eps}{1-r\eps}$, then we have 
\begin{align*}
\sum_{j \in E_b} \Pr[j \in R] &= \sum_{j \in E_b} \frac{\Pr[\abs{R}\ge\lfloor(1-r\eps)k\rfloor, j\in R]}{\Pr[\abs{R}\ge\lfloor(1-r\eps)k\rfloor \mid j\in R]} \\
&\leq \sum_{j \in E_b} \frac{\Pr[\abs{R}\ge\lfloor(1-r\eps)k\rfloor, j\in R]}{b} \tag{since $j$ is bad}\\
&\stackrel{(\star)}{\leq} \frac{(1-(\eps')^2)/(\eps')^2}{b} 
 \leq \frac{1}{(1-r)^2\eps^2 b},
\end{align*}
where $(\star)$ uses \Cref{clm:averaging2}.
Substituting back into~(\ref{eq:3}),
\[\Pr[I\cup\{i\}\in\cF\quad\forall I\in\cF,I\sse R\mid i\in R]\ge1-((1-r)^2r\eps^3bk)^{-1}.\]
To balance the good and bad items, we set $b =
((1-r)^2r\eps^3bk)^{-1} = ((1-r)^2r\eps^3k)^{-\nicefrac12}$.
If we set $r=\nicefrac13$,
then we have an $(1-\eps,1-(\frac{4}{27}\eps^3k)^{-\nicefrac12})$-selectable PI-OCRS. Finally, if we set $\eps=k^{-\nicefrac15}$, then by \Cref{clm:scaleOCRS} we have a $(1-O(k^{-\nicefrac15}))$-selectable PI-OCRS.

\section{Laminar Matroids}
\label{sec:laminar}

In this section we give an $\Omega(1)$-selectable PI-OCRS for laminar
matroids. A laminar matroid is defined by a laminar family
$\cA$ of subsets of $E$, and a capacity function
$c:\cA \rightarrow \mathbb{Z}$; a set $S \sse E$ is independent if
$|S \cap A| \leq c(A)$ for all $A \in \cA$.

The outline of the algorithm is as follows: we construct a new capacity function $c'$ by rounding down $c(A)$ to powers of two; satisfying these more stringent constraints loses only a factor of two. Then we run greedy PI-OCRSs for uniform matroids from
\Cref{sec:uniformOCRS} \emph{independently} for each capacity constraint $c'(A)$, $A \in \cA$. Finally, we output the intersection of these feasible sets. For our analysis, we apply a union bound on probability of an item being discarded by some greedy PI-OCRS; this is a geometric series by our choice of $c'$.

As the first step, we define $c'(A)$ to be the largest power of $2$
smaller than $c(A)$, for each $A \in \cA$. (For simplicity we assume
that $E \in \cA$.) Moreover, if sets $A, B \in \cA$ with $A \sse B$
and $c'(A) \geq c'(B)$, then we can discard $A$ from the collection.  
In conclusion, the final constraints satisfy:
\begin{enumerate}[nosep]
    \item The new laminar family is $\cA'\sse\cA$.
    \item For any $A\in\cA'$, $c'(A)$ is power of 2, and $c(A)/2<c'(A)\le c(A)$.
    \item (Strict Monotonicity) For any $A,B\in\cA'$ with $A\subsetneq B$, we have $c'(A)<c'(B)$.
\end{enumerate}
Let $M'$ denote the laminar matroid defined by the new set of
constraints. We can check that any $c$-selectable PI-OCRS for $M'$ is
a $(\nf12,c)$-selectable PI-OCRS for $M$.  Hence, it suffices to give
a $\Omega(1)$-selectable PI-OCRS for $M'$.

Now we run greedy OCRSs for uniform matroids to get a
$(\nf{1}{25}, \nf{1}{2.661})$-selectable PI-OCRS: for a set $A$ with
capacity $c'(A)$, from \Cref{sec:uniform} we have both a $(1-b,b)$-selectable PI-OCRS and a
$(1-b,1-(\frac{4}{27}b^3c'(A))^{-1/2})$-selectable PI-OCRS: the former
is better for small capacities, whereas the latter is better for
larger capacities. Setting a threshold of $t=13$ and choosing $b = \nf{24}{25}$, we use the former when $c'(A)< 2^t$, else we use the latter. Now a union bound over the the various sets containing an element gives us the result: the crucial fact is that we get a
contribution of $t(1-b)$ from the first smallest scales and a geometric sum giving $O(2^{-t/2}b^{-3/2})$ from the larger ones. The details appear in \Cref{sec:app-laminar}.

\section{Graphic Matroids}
\label{sec:graphic}

Recall that graphic matroids correspond to forests (acyclic subgraphs)
of a given (multi)graph. For these matroids we show the following.
\begin{theorem}
    \label{thm:graphic}
  For $b\in(0,\nicefrac12)$, there is a $(b,1-2b)$-selectable PI-OCRS scheme for 
  graphic matroids. 
\end{theorem}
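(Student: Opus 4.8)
The plan is to realize the forest (graphic matroid) constraint by a \emph{partition matroid that refines it} and run the obvious greedy scheme for that partition matroid. Write $G=(V,E)$ for the underlying multigraph, and for $U\sse V$ let $E[U]$ be the set of edges with both endpoints in $U$. From $\cD$ the scheme first recovers the marginals $x_e:=\Pr_{R\sim\cD}[e\in R]$; since $\cD$ is PI-consistent with a vector in $b\cP_M$ we get $\bx\in b\cP_M$, and hence $\bx(E[U])\le b(|U|-1)$ for every nonempty $U\sse V$ (loops may be discarded outright, as they have rank $0$ and are forced to have $x_e=0$).

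Next I would fix an acyclic orientation of $E$ by a \emph{minimum-weighted-degree elimination order}: peel off vertices one at a time, at each step removing from the current graph a vertex of minimum total incident $x$-mass; orient each edge toward whichever of its two endpoints is removed first, write $O^{-1}(v)$ for the set of edges with head $v$, and note that $O^{-1}(v)$ is exactly the set of edges incident to $v$ at the moment $v$ is removed. Since summing incident $x$-masses over the current vertex set $W$ gives $2\,\bx(E[W])$, at each step the minimum incident mass is at most $\tfrac{2\bx(E[W])}{|W|}\le 2b\cdot\tfrac{|W|-1}{|W|}<2b$, so $\bx(O^{-1}(v))<2b$ for every $v\in V$. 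Now define
\[\cF:=\{I\sse E:\ |I\cap O^{-1}(v)|\le 1\ \text{ for all }v\in V\}.\]
I claim $\cF\sse\cI$: in any cycle $C$ the vertex of $C$ removed earliest has both of its $C$-edges oriented into it, so it is the head of two edges of $C$, and therefore no member of $\cF$ contains a cycle. The scheme is then simply the greedy PI-OCRS for this down-closed family $\cF$.

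For the selectability bound I would argue exactly as in the uniform-matroid OCRS of \Cref{sec:uniformOCRS}: fixing an edge $e$ with $\mathrm{head}(e)=v$, one checks that for every feasible $I\sse R$ the set $I\cup\{e\}$ lies in $\cF$ precisely when $R\cap O^{-1}(v)=\{e\}$, so the selectability event of \cref{eqn:selectable} fails only if some other edge with head $v$ is also active. Conditioning on $e\in R$ and applying a union bound together with pairwise independence (so $\Pr[f\in R\mid e\in R]=x_f$ for $f\neq e$) yields $\Pr[\,R\cap O^{-1}(v)=\{e\}\mid e\in R\,]\ge 1-\bx(O^{-1}(v))>1-2b$, which is the claimed $(b,1-2b)$-selectability. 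Disconnectedness of $G$ needs no separate treatment, since the only property of $\cP_M$ used is $\bx(E[W])\le b(|W|-1)$.

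I expect the crux to be the choice of orientation. Any acyclic orientation already makes $\cF$ a legitimate refinement of the graphic matroid, so the real difficulty is controlling $\bx(O^{-1}(v))$ for \emph{every} head $v$ simultaneously: a single heavy-degree vertex (think of a heavily weighted star) rules out naive or fixed orientations, and it is the minimum-weighted-degree elimination order, fed by the polytope inequality $\bx(E[W])\le b(|W|-1)$, that keeps every head light. This is exactly where the factor $2$ — and hence the loss $2b$ — enters.
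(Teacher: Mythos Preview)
Your argument is correct, but it takes a different and more elementary route than the paper's. The paper instantiates the chain-based OCRS framework of Feldman, Svensson, and Zenklusen: it iteratively peels off layers $E_i\setminus E_{i+1}$ of edges that are ``safe'' in the matroid $M/E_{i+1}$ (in the sense of \cref{ineq1}), and sets $\cF=\{I:\ I\cap(E_i\setminus E_{i+1})\in\cI(M/E_{i+1})\ \text{for all }i\}$. Termination of that peeling procedure is proved via the same averaging observation you use (a vertex of small incident $\bx$-mass always exists), but the resulting $\cF$ is in general richer than a partition matroid, and its construction may depend on $\cD$ beyond the marginals. Your approach is the $\alpha$-partition argument: an acyclic orientation via minimum-weighted-degree elimination refines the graphic matroid by a partition matroid with one block $O^{-1}(v)$ per vertex, and the elimination order together with $\bx(E[W])\le b(|W|-1)$ keeps every block's mass below $2b$. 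Both proofs then finish with the same union-bound-plus-pairwise-independence step and land on exactly $(b,1-2b)$. Your route is shorter, uses only the marginals $\bx$, and is self-contained; the paper's route is an instance of a template that applies beyond graphic matroids and yields a potentially larger feasible family $\cF$. One small quibble: the pattern you invoke is the simple Markov/union-bound computation of \Cref{sec:app-uniform-simpleOCRS}, not the Chebyshev-based scheme of \Cref{sec:uniformOCRS}.
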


\begin{proof}

Let $M=(E,\cI)$ be a graphic matroid defined on (multi)graph $G=(V,E)$. Let $\cD$ be any distribution over subsets of $E$ that is PI-consistent with some $\bx\in b\cP_M$, and $R$ sampled according to $\cD$.
We follow the construction of OCRS of
\textcite{feldman2016online}. Our goal
is to construct a chain of sets: $\varnothing=E_l\subsetneq E_{l-1}\subsetneq\cdots\subsetneq E_0=E$
where for any $i\in \{0,\ldots,l-1\}$ and any $e\in E_i\setminus E_{i+1}$,
\begin{equation}
    \Pr\Big[e\in\spn_{M/E_{i+1}}((R\cap (E_i\setminus E_{i+1}))\setminus\{e\})\ \Big|\ e\in R\Big]\le 2b.\label{ineq1}
\end{equation}
If we have this chain, then we can define the feasible set for our greedy PI-OCRS as $\cF=\{I\sse E\mid\forall i,I\cap(E_i\setminus E_{i+1})\in\cI(M/E_{i+1})\}$.
By definition of contraction, $\cF\sse\cI(M)$. It remains to check selectability. For an edge $e$ in $E_i\setminus E_{i+1}$, we have
\begin{align*}
    &\Pr[I\cup\{e\}\in\cF\quad\forall I\in\cF,I\sse R\mid e\in R]\\
    &=\Pr\Big[e\notin\spn_{M/E_{i+1}}((R\cap (E_i\setminus E_{i+1}))\setminus\{e\})\ \Big|\ e\in R\Big] \ge1-2b.\tag{\Cref{ineq1}}
\end{align*}
Therefore this is a $(b,1-2b)$-selectable PI-OCRS.
All is left is to show how to construct such a chain. We use the following recursive procedure:
\begin{enumerate}[nosep]
    \item Initialize $E_0=E,i=0$.
    \item Set $S=\varnothing$.
    \item While there exists $e\in E_i\setminus S$ such that $\Pr[e\in\spn_{M/S}((R\cap(E_i\setminus S))\setminus\{e\})\mid e\in R]>2b$, add $e$ into $S$.
    \item $i \leftarrow i+1$, set $E_i=S$.
    \item If $E_i\ne\varnothing$, goto step 2; otherwise set $l=i$ and terminate.
\end{enumerate}
Inequality \eqref{ineq1} is automatically satisfied by
this procedure. It remains to show that the process
always terminates, i.e. that step 3 always leaves at least one element
unidentified, and hence $E_{i+1}\subsetneq E_i$. Let $\mathcal E(u)$
denote the edges incident to the vertex $u$. We start with the following claim.

\begin{claim}\label{clm:graphic-good-vertex}
    If $u_0\in V$ satisfies $\sum_{e\in\mathcal E(u_0)}x_e\le 2b$, then in the above procedure generating $E_1$ from $E$, we have that 
    for all $ e\in\mathcal E(u_0)$, $e\notin S$.
\end{claim}
\begin{proof}
    We prove our claim using induction. For any edge $e\in\cE(u_0)\cap R$, $e\in\spn(R\setminus\{e\})$ implies the existence of a circuit $C\sse R$ containing $e$. By the definition of circuits, $C$ must contain some edge $e'\in\cE(u_0)\setminus\{e\}$. By the pair-wise independence of events $e\in R$, we have
    \begin{align*}
        \Pr[e\in\spn(R\setminus\{e\})\mid e\in R]
        &\le\Pr[\exists e'\in\mathcal E(u_0)\setminus\{e\},e'\in R\mid e\in R]\\
        &\le \sum_{e\in\mathcal E(u_0)}x_e
        \le 2b.
    \end{align*}
    Therefore we do not add any $e\in\cE(u_0)$ into $S$ in the first iteration. Suppose no $e\in\cE(u_0)$ has been added to $S$ during the first $i$ iterations, then before the $(i+1)^{th}$ iteration starts, $u_0$ has not been merged with any other vertex in the contracted graph $G/S$, so $\cE(u_0)$ in $G/S$ is the same as the original graph $G$. Thus $\sum_{e\in\cE(u_0)}x_e\le 2b$ still holds for $u_0$ in $G/S$, and by the same argument as the first iteration, no $e\in\mathcal E(u_0)$ will be added to $S$ in the $(i+1)^{th}$ iteration. 
\end{proof}

Since $\bx\in b\cP_M$, we have $\sum_{e\in E}x_e\le b(n-1)$, which implies $\sum_{u\in V}\sum_{e\in\cE(u)}x_e\le 2b(n-1)$. By averaging, there exists a vertex $u_0\in V$ such that $\sum_{e\in\cE(u_0)} x_e\le 2b (n-1)/n \le2b$, and by \Cref{clm:graphic-good-vertex}, $\cE(u_0)\cap E_1=\varnothing$. Assuming no isolated vertex in $V$, $E_1\subsetneq E_0$. Similarly, for any $i$, since $M|_{E_i}$ is
also a graphic matroid and $\bx|_{E_i}\in b\cP_{M|_{E_i}}$, the same argument holds for it. Therefore
$E_{i+1}\subsetneq E_i$ always holds, which finishes our proof of
termination for our construction. \end{proof}
We show limitations of our approach above in \cref{sec:limit-graphic}.

\section{Transversal Matroids}
\label{sec:transversal}

In this section we show a $(b,1-b)$-selectable PI-OCRS for transversal matroids. Recall that a transversal matroid is represented by a bipartite graph $G=(U,V,E)$, with vertex sets $U,V$ and edge set $E\sse U\times V$. Let $M=(U,\cI)$ denote the transversal matroid, and let $\cD$ be a distribution over $U$ that is PI-consistent with some $\bx\in b\cP_M$. Let $\mathcal N(i)$ denote the neighbour vertices of vertex $i$. The OCRS works as follows: first, for each edge $(i,j)\in E$, we assign some probability $y_{i,j}\in[0,1]$ to it, such that
\begin{align}
    \sum_{j\in\mathcal N(i)}y_{i,j}&=1 \qquad \forall i\in U, \label{eqmatching1}\\
    \sum_{i\in\mathcal N(j)}x_i \: y_{i,j}&\le b 
 \qquad \forall j\in V. \label{eqmatching2}
\end{align}
These $y_{i,j}$ exist (and can be efficiently computed) by the following lemma:
\begin{lemma}\label{lemmamatch}
    Let $G=(U,V,E)$ denote a bipartite graph, and $M$ the transversal matroid defined on $U$. Then for any $x \in b\cP_M$, there exist corresponding $y_{i,j} \in [0,1]$ satisfying \eqref{eqmatching1} and \eqref{eqmatching2}, and such $y_{i,j}$ can be computed efficiently.
\end{lemma}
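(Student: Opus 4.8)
The plan is to recognize the existence of the weights $y_{i,j}$ as the feasibility of a fractional $b$-matching, equivalently of an $s$–$t$ flow, and to extract the governing cut (Hall-type) condition directly from $\bx \in b\cP_M$. First I would dispose of degenerate vertices: an isolated $i \in U$ is a loop of $M$, so $\bx(\{i\}) \le b\cdot\rank_M(\{i\}) = 0$, and it can be deleted without affecting the statement; hence we may assume every $i \in U$ has at least one neighbor, so that \eqref{eqmatching1} is even well-posed.

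\textbf{The flow network.} Build a network with source $s$, sink $t$, an arc $s \to i$ of capacity $x_i$ for each $i \in U$, an arc $i \to j$ of capacity $+\infty$ for each $(i,j) \in E$, and an arc $j \to t$ of capacity $b$ for each $j \in V$. The claim to establish is that the maximum $s$–$t$ flow value equals $\sum_{i \in U} x_i$. Given such a flow $f$, its value forces every arc $s\to i$ to be saturated; set $y_{i,j} := f(i,j)/x_i$ when $x_i > 0$, and when $x_i = 0$ let $y_{i,\cdot}$ be any nonnegative weights on $\mathcal{N}(i)$ summing to $1$ (possible since $\mathcal{N}(i)\ne\varnothing$). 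Flow conservation at $i$ gives $\sum_{j\in\mathcal{N}(i)} f(i,j) = x_i$, so $\sum_j y_{i,j} = 1$ and each $y_{i,j}\in[0,1]$; conservation at $j$ together with the capacity of the arc $j\to t$ gives $\sum_{i\in\mathcal{N}(j)} x_i\,y_{i,j} = \sum_i f(i,j) \le b$. These are exactly \eqref{eqmatching1} and \eqref{eqmatching2}, and since all capacities are rational, a max flow and hence $y$ can be computed in polynomial time.

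\textbf{Evaluating the max flow via min-cut.} Use max-flow/min-cut. The cut separating $\{s\}$ alone has value $\sum_i x_i$, so it suffices to show every finite cut has value at least $\sum_i x_i$. A finite cut places some $U_s \subseteq U$ and $V_s \subseteq V$ on the source side with $\mathcal{N}(U_s) \subseteq V_s$ (otherwise an $\infty$-capacity arc is cut), and its value is $\sum_{i \notin U_s} x_i + b|V_s| \ge \sum_{i \notin U_s} x_i + b\,|\mathcal{N}(U_s)|$. Thus everything reduces to the inequality $b\,|\mathcal{N}(S)| \ge \bx(S)$ for all $S \subseteq U$. This is where $\bx \in b\cP_M$ is used: $\bx(S) \le b\cdot\rank_M(S)$, and for a transversal matroid $\rank_M(S)$ equals the maximum size of a matching saturating a subset of $S$, which occupies that many distinct vertices of $\mathcal{N}(S)$, so $\rank_M(S) \le |\mathcal{N}(S)|$. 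Combining, $\bx(S) \le b\,|\mathcal{N}(S)|$, which completes the min-cut bound and therefore the lemma.

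\textbf{Main obstacle.} There is no deep obstacle: the substance is the translation to a flow problem plus the one-line observation $\rank_M(S) \le |\mathcal{N}(S)|$. The only points requiring care are the bookkeeping around vertices with $x_i = 0$ (so that the equality \eqref{eqmatching1} holds exactly), and making the case analysis of finite cuts exhaustive.
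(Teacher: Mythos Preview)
Your proof is correct and follows essentially the same max-flow/min-cut construction as the paper: build the flow network with infinite-capacity $U\to V$ arcs, argue via min-cut and the matroid-polytope rank constraint that the max flow saturates all $s\to i$ arcs, and read off $y_{i,j}=f(i,j)/x_i$. Your version is in fact slightly more careful than the paper's---you set the $j\to t$ capacities to $b$ (rather than $1$), which is what is actually needed to get \eqref{eqmatching2} directly, and you explicitly handle the degenerate $x_i=0$ and isolated-vertex cases.
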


For each left vertex $i\in U$, let $Y_i\in \mathcal N(i)$ be a random variable such that $\Pr[Y_i=j]=y_{i,j}$. Here $\{Y_i\}_{i\in U}$ are mutually independent, and are independent from the events $\{i\in R\}_{i\in U}$. Then the feasible sets $\cF$ are defined as
$$
    \cF=\{S\sse U\mid Y_i\ne Y_j\quad\forall i,j\in S,i\ne j\}
$$
It's not hard to see that $\cF\sse\cI$, because for any $S\in\cF$, every $i\in S$ is matched to $V$ by $Y_i$. It remains to bound the selectability of $\cF$. For any $i\in U$, we have
\begin{align*}
    &\Pr[I\cup\{i\}\in\cF\quad\forall I\in\cF,I\sse R\mid i\in R]\\
    &=\sum_{j\in\mathcal N(i)}\Pr[Y_i=j]\Pr[I\cup\{i\}\in\cF\quad\forall I\in\cF,I\sse R\mid i\in R,Y_i=j]\\
    &=\sum_{j\in\mathcal N(i)}y_{i,j}\Pr[Y_{i'}\ne j\quad\forall i'\in R\setminus\{i\}\mid i\in R]\\
    &\ge\sum_{j\in\mathcal N(i)}y_{i,j}\bigg(1-\sum_{i'\in\mathcal N(j)\setminus\{i\}}\Pr[Y_{i'}=j\land i'\in R\mid i\in R]\bigg)\tag{Union bound}\\
    &=\sum_{j\in\mathcal N(i)}y_{i,j}\bigg(1-\sum_{i'\in\mathcal N(j)\setminus\{i\}}x_{i'}y_{i',j}\bigg)\tag{Pairwise independence of $i\in R$}\\
    &\ge\sum_{j\in\mathcal N(i)}y_{i,j}(1-b)\tag{Inequality \eqref{eqmatching2}}\\
    &=1-b.\tag{Equation \eqref{eqmatching1}}
\end{align*}
Therefore we conclude this algorithm is indeed a $(b,1-b)$-selectable PI-OCRS, and by \Cref{clm:scaleOCRS}, setting $b=\nf12$ gives a $\nf14$-selectable PI-OCRS.

\medskip
\begin{proof}[Proof of \Cref{lemmamatch}]
    We give an algorithmic proof. First, construct a network flow model $G'$ based on $G$ by adding a source vertex $s$ and a sink vertex $t$. For every $i\in U$ add an arc from $s$ to $i$ with capacity $x_i$, and for every $j\in V$ add an arc from $j$ to $t$ with capacity $1$. Then for each edge $(i,j)\in E$, add the arc $(i,j)$ to $G'$ with infinite capacity. Then compute a maximum $s\rightarrow t$ flow.

    We claim that the maximum flow is equal to $\sum_{i\in U}x_i$. To see this, we argue that $\{s\}\times U$ is a minimum cut of the flow graph. Otherwise, let $C$ be a minimum cut where $C\ne \{s\}\times U$. Let $C_U=C\cap (\{s\}\times U)$ and $C_V=C\cap (V\times \{t\})$, and let $\overline C_U=(\{s\}\times U)\setminus C_U$. Here $\{j\in N(i)\mid (s,i)\in\overline C_U\}\times \{t\}\sse C_V$ must hold, since otherwise $C$ must contain infinite-capacity arcs. Then by the matroid polytope constraint, $\sum_{(s,i)\in \overline C_U}x_i\le \rank(\overline C_U)\le |C_V|$, meaning that the capacity of $\{s\}\times U=(C\setminus C_V)\cup\overline C_U$ is no larger than $C$. Therefore $\{s\}\times U$ is a minimum cut. And by the max-flow min-cut theorem, the maximum flow is equal to $\sum_{i\in U}x_i$.

    Then finally set $y_{i,j}=f_{i,j}/x_i$, where $f_{i,j}$ is the flow on edge $(i,j)$ in the maximum flow. The fact that $f$ satisfies the capacity constraints on $s \times U$ and $V \times t$ implies \eqref{eqmatching1} and \eqref{eqmatching2}, respectively.
\end{proof}

\section{Cographic Matroids}
\label{sec:cographic}

In this section we give a PI-OCRS for cographic matroids.

\begin{restatable}[Cographic Matroids PI-OCRS]{theorem}{cographicCRS}
  \label{thm:cographic}
  There is a $\nf{1}{12}$-selectable PI-OCRS for cographic matroids.
\end{restatable}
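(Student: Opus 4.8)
The plan is to pass to the \emph{simplification} of the cographic matroid. Let $M=M^*(G)$ be the cographic matroid of a connected (multi)graph $G$: a set $S$ of edges is independent in $M$ iff $G-S$ is connected, so the circuits of $M$ are the bonds (minimal edge cuts) of $G$. Mimicking the graphic case directly fails here --- the dual of a low-degree vertex is a low-$\bx$-mass cycle, but these need not exist: if $G$ is a bounded-degree expander then $\bx=b(1-\nf2d)\,\onevec\in b\cP_M$ (for degree $d$ large enough), yet every cycle has $\bx$-mass at least $b(1-\nf2d)\cdot\mathrm{girth}(G)=\omega(1)$. Instead I would split $M$ (restricted WLOG to its non-loop ground set $E$, since loops of $M$ are bridges of $G$ and carry $x_e=0$) into: (a) the parallel classes of $M$, each a maximal family of mutually-in-series edges of $G$ with $\rank_M(P)=1$, hence $\bx(P)\le b$, and from which an independent set picks at most one element; and (b) the simplification $\overline M$, which --- since deleting a loop of $M$ contracts a bridge of $G$, and deleting all but one element of a parallel class of $M$ contracts all but one edge of a series class of $G$ --- equals $M^*(\overline G)$, where $\overline G$ is the connected loopless multigraph obtained from $G$ by those contractions. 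Having no bridge and no $2$-edge bond, $\overline G$ is \emph{$3$-edge-connected}.

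The technical heart is a ``protected random spanning tree'' for $\overline M=M^*(\overline G)$: for a connected $3$-edge-connected multigraph $\overline G$ there is a distribution over its spanning trees $T$ with $\Pr[e\in T]\le\nf23$ for every edge $e$. Indeed, over all distributions on spanning trees, $\min\max_e\Pr[e\in T]$ equals, by standard matroid polytope duality for $M(\overline G)$,
\[
\max_{\varnothing\ne S\sse E(\overline G)}\ \frac{\mathrm{cc}(\overline G\setminus S)-1}{|S|},
\]
where $\mathrm{cc}(\cdot)$ counts connected components. For a nontrivial $S$, let $C_0,\dots,C_k$ ($k\ge1$) be the components of $\overline G\setminus S$ and let $\delta_{\overline G}(C)$ be the edges with exactly one endpoint in $C$; each $C_i$ is proper and nonempty, so $|\delta_{\overline G}(C_i)|\ge 3$, and since $\bigcup_i\delta_{\overline G}(C_i)\sse S$ with each edge of $S$ counted at most twice, $|S|\ge\tfrac12\sum_i|\delta_{\overline G}(C_i)|\ge\tfrac32(k+1)$. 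Hence $\tfrac{k}{|S|}\le\tfrac{2k}{3(k+1)}<\tfrac23$. This is the one step where $3$-edge-connectivity, and thus the detour through the simplification, is essential; it is the main obstacle, and everything else is assembly.

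To finish, I would glue the two schemes into one greedy PI-OCRS. Identify each parallel class $P$ of $M$ with the corresponding element of $\overline M=M^*(\overline G)$, sample $T$ as above, and take the random down-closed family
\[
\cF:=\bigl\{\,I\sse E\ :\ |I\cap P|\le1\ \text{for every class }P,\ \text{and}\ P\notin T\ \text{whenever}\ I\cap P\ne\varnothing\,\bigr\}.
\]
Because picking at most one representative per parallel class of $M$ is independent iff the chosen representatives are independent in $\overline M$, and ``the used classes avoid the spanning tree $T$'' forces the latter, we get $\cF\sse\cI(M)$. In the greedy scheme defined by $\cF$, an arriving active $e\in P$ is selected iff no earlier element of $P$ was selected and $P\notin T$; for fixed $e\in P$ the event that $I\cup\{e\}\in\cF$ for all feasible $I\sse R$ is exactly ``$P\notin T$ and $R\cap(P\setminus\{e\})=\varnothing$'', and since $T$ is drawn independently of $R$,
\[
\Pr\bigl[\,I\cup\{e\}\in\cF\ \ \forall I\in\cF,\,I\sse R\ \big|\ e\in R\,\bigr]
=\Pr[P\notin T]\cdot\Pr\bigl[R\cap(P\setminus\{e\})=\varnothing\ \big|\ e\in R\bigr]
\ge\tfrac13\bigl(1-\bx(P\setminus\{e\})\bigr)\ge\tfrac13(1-b),
\]
the last step using pairwise independence of the events $\{f\in R\}$ together with $\bx(P)\le b$. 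Thus we obtain a $(b,\tfrac13(1-b))$-selectable PI-OCRS for $M$ for every $b\in(0,1)$; choosing $b=\nf12$ (so $c=\nf13\cdot\nf12=\nf16$) and applying \Cref{clm:scaleOCRS} yields a $\nf1{12}$-selectable PI-OCRS.
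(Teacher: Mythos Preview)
Your proof is correct and follows the same architecture as the paper's: pass to the simplification of $M$, observe it is the cographic matroid of a graph with no small bonds, obtain a $\tfrac13$-selectable scheme there by sampling a random independent set with marginals $\ge\tfrac13$, and combine with a rank-one (union-bound) argument on each parallel class to get $(b,\tfrac13(1-b))$-selectability, then set $b=\tfrac12$.

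The one substantive difference is how you establish the $\tfrac13$ factor on the simplification. The paper shows that min-degree $\ge3$ in $\overline G$ forces the cographic \emph{density} $\gamma(M^*(\overline G))\le3$, so $\tfrac13\mathbf{1}\in\cP_{M^*(\overline G)}$ and can be written as a convex combination of independent sets; sampling one such set is the random $I_0$. You argue via the dual picture: $3$-edge-connectivity bounds the Nash--Williams ratio $\max_{S}\frac{cc(\overline G\setminus S)-1}{|S|}$ strictly below $\tfrac23$, yielding a random spanning tree with all marginals $\le\tfrac23$, whose complement is a random \emph{basis} of $M^*(\overline G)$ with marginals $\ge\tfrac13$. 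These are equivalent --- indeed $\max_S\frac{cc(\overline G\setminus S)-1}{|S|}=1-1/\gamma(M^*(\overline G))$ --- so neither gains over the other; your cut-counting is self-contained and yields bases rather than arbitrary independent sets, while the paper's density formulation plugs into a general ``low-density matroid'' lemma reusable elsewhere. One minor caveat: describing the parallel classes of $M^*(G)$ as ``mutually-in-series edges'' is slightly loose (a $2$-bond need not consist of edges sharing a degree-$2$ vertex), but since you only use the abstract parallel-class structure and the inequality $\bx(P)\le b$, this does not affect the argument.
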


The idea is similar to \cite{soto2013matroid}. We first transform a given cographic matroid into a ``low density'' matroid by removing the parallel items, then combine the OCRS for low density matroids with a single-item OCRS to get an OCRS for cographic matroids. We first define the density of a matroid.

\begin{definition}
    Let $M=(E,\mathcal I)$ be a loopless matroid (i.e. all circuits are of size $\ge2$). Then the \emph{density} of $M$ is defined as
    $$
        \gamma(M):=\max_{S\sse E,S\ne \varnothing}\frac{|S|}{\rank(S)}.
    $$
\end{definition}

Just like the matroid secretary problem \cite{soto2013matroid}, we have a good PI-OCRS for low-density matroids.

\begin{lemma}\label{lem:lowdenOCRS}
    There is a $\nicefrac{1}{\gamma(M)}$-selectable PI-OCRS for matroids with density $\gamma(M)$.
\end{lemma}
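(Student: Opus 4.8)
The plan is to mimic the classical argument that a matroid of density $\gamma$ can be fractionally decomposed into $\gamma$ spanning sets (a weighted version of the matroid union / base covering theorem), and then to use independence of events only through a union-bound over a small collection of independent sets. Concretely, given $\bx \in \cP_M$, I would first show that $\bx$ can be written as $\bx = \sum_{t} \lambda_t \chi_{I_t}$ where each $I_t \in \cI$ and $\sum_t \lambda_t \le \gamma(M)$. This is a standard polyhedral fact: $\cP_M$ has vertices that are indicator vectors of independent sets, so $\bx$ is a convex combination $\sum_t \mu_t \chi_{I_t}$ with $\sum_t \mu_t = 1$; then the density bound $|S| \le \gamma \cdot \rank(S)$ for all $S$ guarantees $\bx(E) \le \gamma \cdot \rank(E)$ isn't quite enough by itself, so instead I would invoke the characterization that $\bx/\gamma$ lies in the independent-set polytope scaled so that it is dominated by a convex combination of bases — more cleanly, $\gamma(M) \ge |E|/\rank(E)$ and by a theorem of Edmonds the all-ones-scaled point $\tfrac{1}{\gamma}\mathbf 1$ (and hence any $\tfrac{1}{\gamma}\bx$ with $\bx \le \mathbf 1$) can be covered, giving $\bx = \sum_t \lambda_t \chi_{I_t}$ with $\sum_t \lambda_t \le \gamma$.

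Next I would turn this decomposition into the OCRS. Independently of $R$, the scheme samples one index $T$ from the distribution that picks $t$ with probability $\lambda_t/\gamma$ (this is a sub-probability distribution; with the leftover mass $1 - \sum_t \lambda_t/\gamma$ it picks the empty set), and then sets the feasible family to $\cF = 2^{I_T}$, i.e. it greedily accepts an arriving active element $e$ iff $e \in I_T$. Since $I_T \in \cI$ and $\cF \subseteq \cI$, feasibility holds, and $\cF$ is trivially downward closed, so this is a greedy PI-OCRS. For selectability: fix $i$ and condition on $i \in R$. The event $I \cup \{i\} \in \cF$ for all $I \subseteq R$, $I \in \cF$ holds precisely when $i \in I_T$, which is independent of everything, so
\[
  \Pr\big[I\cup\{i\}\in\cF\ \ \forall I\sse R,I\in\cF \ \big|\ i\in R\big] \;=\; \Pr[i \in I_T] \;=\; \sum_{t : i \in I_t}\frac{\lambda_t}{\gamma} \;=\; \frac{x_i}{\gamma} .
\]
This is at least $x_i/\gamma(M)$; but we need a guarantee that does not depend on $x_i$. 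The fix is to take $\bx$ on a face where $x_i = 1$, or more simply to note that the definition of $(b,c)$-selectability only needs a uniform bound when $\bx \in b\cP_M$ — so I would instead run the above with $\bx' := \bx/x_i$-type scaling handled by \Cref{clm:scaleOCRS}: start from $\bx$ with all coordinates equal to $1$ (the worst case, after using \Cref{clm:scaleOCRS} to reduce $b$), obtaining a $1/\gamma(M)$-selectable scheme.

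The main obstacle is the decomposition step: getting $\bx = \sum_t \lambda_t \chi_{I_t}$ with $\sum \lambda_t \le \gamma(M)$ cleanly. The cleanest route is probably to apply the matroid base-covering theorem to a truncation/scaling of $M$ so that the "fractional cover number" is exactly controlled by the density, and then observe that every point of $\cP_M$ inherits the same bound because each $\chi_{I_t}$ can be extended to a base. Everything after that is a one-line union-bound-free calculation, since the scheme's randomness ($T$) is independent of $R$, so pairwise independence of $R$ is not even needed here — only the marginals $\Pr[i \in R] = x_i$ matter, which is why the argument goes through verbatim in the PI setting.
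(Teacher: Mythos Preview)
Your core sampling idea---pick a random independent set $I_T$ upfront and greedily accept $e$ iff $e\in I_T$---is exactly right, and your observation that this needs no independence assumption on $R$ is spot on. The gap is in \emph{what} you decompose. Writing the input point as $\bx=\sum_t\lambda_t\chi_{I_t}$ and sampling $T$ with probability $\lambda_t/\gamma$ yields $\Pr[i\in I_T]=x_i/\gamma$, which is not a uniform lower bound: for any element with small $x_i$ the guarantee collapses. Your attempted fixes do not repair this. \Cref{clm:scaleOCRS} trades off $b$ against $c$ but both parameters there are already uniform over $i$; it cannot convert an $i$-dependent guarantee $x_i/\gamma$ into a uniform one. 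And ``take $\bx$ on a face where $x_i=1$'' is not something you can do for a fixed input $\bx$ and a fixed $i$ (let alone all $i$ simultaneously).

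The clean fix---which you actually brush past in the middle of your first paragraph---is to ignore $\bx$ entirely and decompose the constant vector $\tfrac{1}{\gamma(M)}\mathbf 1$. By the definition of density, $\tfrac{1}{\gamma(M)}\mathbf 1(S)=|S|/\gamma(M)\le\rank(S)$ for every $S$, so $\tfrac{1}{\gamma(M)}\mathbf 1\in\cP_M$ and is therefore a \emph{convex} combination $\sum_I p_I\chi_I$ of independent sets. Sample $I_0$ from $\{p_I\}$ and set $\cF=2^{I_0}$. Now $\Pr[i\in I_0]=\tfrac{1}{\gamma(M)}$ for every $i$, independently of $R$, so the scheme is $\tfrac{1}{\gamma(M)}$-selectable for every $\bx\in\cP_M$. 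This is precisely the paper's proof; no base-covering theorem, no scaling via \Cref{clm:scaleOCRS}, and no reference to $\bx$ is needed.
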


\begin{proof}
    Let $\mathbf y\in\mathbb R^E$ be a vector with all its coordinates equal to $\frac1{\gamma(M)}$. Then for any $S\subseteq E$, by the definition of density, $\mathbf y(S)\le\rank(S)$ holds. Therefore $\mathbf y\in \cP_{M}$, and hence we can write $\mathbf y=\sum_{I\in\mathcal I} p_I\chi_I$. The PI-OCRS $\pi$ works as follows. Before items start arriving, $\pi$ samples an independent set $I_0$ according to distribution $\{p_I\}_I$, and set $\cF=2^{I_0}$ for the greedy PI-OCRS. By definition $\cF\sse\cI$. For any item $i\in E$, we have
    \begin{align*}
        \Pr[I\cup\{i\}\in\cF\quad\forall I\in\cF,I\sse R\mid i\in R] &=\Pr[I\cup\{i\}\sse I_0\quad\forall I\in R\cap I_0\mid i\in R]\\
        &=\Pr[i\in I_0] =\sum_{I\ni i}p_I =\frac1{\gamma(M)}.
    \end{align*}
    Therefore this PI-OCRS is $1/\gamma(M)$-selectable.
\end{proof}

Let $M$ be a loopless cographic matroid associated with graph $G$. $M$ is not necessarily low-density: for example, if $G$ is a cycle of length $n$, the density of $M$ is $n$. Therefore to give a PI-OCRS for cographic matroids, first we show that if each vertex of $G$ has degree at least $3$, the density $M$ is at most $3$; Then we show how to reduce arbitrary graphs to a graph where degree of each vertex is at least $3$.

\begin{lemma}\label{lemmadensity}
    Let $G=(V,E)$ be a graph such that $\text{deg}(v)\ge 3$ for all $v\in V$. Then the cographic matroid $M$ associated with $G$ has density $\gamma(M)\le 3$.
\end{lemma}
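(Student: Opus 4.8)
The plan is to translate the density of $M=M^*(G)$ into a statement about how many components removing an edge set can create, and then to bound that using the fact that every edge cut of $G$ is large. First I would reduce to the case that $G$ is connected: the cographic matroid of a disconnected graph is the direct sum of the cographic matroids of its components, and the density of a direct sum is the maximum of the densities of the summands (by the mediant inequality). Next, writing $r_G$ for the rank function of the \emph{graphic} matroid of $G$, the cographic matroid has rank function $r^{*}_G(S)=|S|+r_G(E\setminus S)-r_G(E)=|S|-c(G\setminus S)+1$, where $c(G\setminus S)$ is the number of connected components of the spanning subgraph $(V,E\setminus S)$. Since $M$ is loopless we have $r^{*}_G(S)\ge 1$ for every nonempty $S$, so $\gamma(M)\le 3$ is equivalent to the claim that
\[
  c(G\setminus S)\;\le\;1+\tfrac23\,|S|\qquad\text{for every nonempty }S\subseteq E.
\]

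To prove this, fix a nonempty $S$ and set $c:=c(G\setminus S)$. If $c=1$ there is nothing to show, so assume $c\ge 2$ and let $V_1,\dots,V_c$ be the vertex sets of the components of $G\setminus S$. Every edge of $G$ leaving $V_i$ must lie in $S$ (otherwise it would survive in $G\setminus S$ and merge $V_i$ with another component), so $\delta_G(V_i)\subseteq S$; and $\delta_G(V_i)$ is an edge cut of $G$, hence $|\delta_G(V_i)|\ge 3$. Each edge of $S$ between two distinct components is counted in exactly two of the sets $\delta_G(V_i)$ and every other edge in at most one, so
\[
  |S|\;\ge\;\bigl|\{\text{edges of }S\text{ between two components}\}\bigr|\;=\;\tfrac12\sum_{i=1}^{c}|\delta_G(V_i)|\;\ge\;\tfrac{3c}{2}\;>\;\tfrac32(c-1),
\]
which rearranges to the desired inequality (indeed to the stronger $c\le\tfrac23|S|$), and hence $\gamma(M)\le 3$.

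The one step that requires care is the inequality $|\delta_G(V_i)|\ge 3$. For a singleton component $V_i=\{v\}$ this is exactly the hypothesis $\deg(v)\ge 3$, which is presumably why the lemma is phrased via minimum degree; but for a component spanning several vertices, minimum degree alone gives no lower bound on $|\delta_G(V_i)|$ (a dense component can have a tiny boundary — for instance four copies of $K_4$ joined cyclically by single edges is $3$-regular and bridgeless yet has cographic density $4$). What actually makes the argument go through is that \emph{every} edge cut of $G$ has at least three edges, i.e.\ $G$ is $3$-edge-connected, and this is precisely what the preceding ``remove the parallel items'' reduction supplies: a bridge of $G$ is a loop of $M^{*}(G)$ and a two-edge cut of $G$ is a pair of parallel elements of $M^{*}(G)$, so once one passes to the simple, loopless cographic matroid the underlying graph is $3$-edge-connected. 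I would therefore run the above computation under that (implicitly available) assumption; the main obstacle is recognizing that $3$-edge-connectivity — not merely minimum degree $3$ — is what is being used, and that the reduction before this lemma is what guarantees it.
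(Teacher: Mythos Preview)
Your analysis is correct, and you have in fact caught a genuine gap: as stated, with only the hypothesis $\deg(v)\ge 3$, the lemma is false. Your counterexample works (with the minor caveat that four $K_4$'s joined cyclically by single edges has minimum degree~$3$ but is not $3$-regular; for a genuinely $3$-regular instance replace each $K_4$ by $K_4-e$). Taking $S$ to be the four joining edges, $G\setminus S$ has four components, so $r^*_G(S)=4-4+1=1$ and the density on $S$ is $4$. The paper's own proof contains the same error: it asserts that contracting $E\setminus E'$ yields a graph $G'$ of minimum degree~$3$, but in this example $G'$ is a $4$-cycle, all of whose vertices have degree~$2$.

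What both arguments actually require is $3$-edge-connectivity of $G$, and you are right that the surrounding reduction supplies it: looplessness of the cographic matroid forbids bridges, and the removal of parallel elements forbids $2$-edge cuts. Under that hypothesis your cut-counting argument is correct and complete. The paper's contraction argument can also be repaired under $3$-edge-connectivity, since each vertex of $G/(E\setminus E')$ corresponds to a vertex subset of $G$ whose boundary is an edge cut of size~$\ge 3$ (the single-vertex case, where $G'$ consists only of loops, has cographic density~$\le 1$ trivially). Your route is the more direct of the two, working entirely in $G$ and avoiding the reduction to contracted minors; the paper's route has the appeal of reducing the local density bound to the global ratio $|E|/\rank(M)$, but then needs the extra observation that $3$-edge-connectivity is preserved under contraction---equivalently, that simplicity of the cographic matroid is inherited by restrictions.
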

\begin{proof}
    Since each vertex has degree at least $3$, we have $|E|\ge \frac32|V|$. Also since $M^*$ (the dual matroid of $M$) is a graphic matroid on $G$, we have $\rank(M^*)\le|V|-1$, and thus $\rank(M)=|E|-\rank(M^*)\ge|E|-|V|+1$. Then we have
    $$
        \frac{|E|}{\rank(M)}\le\frac{|E|}{|E|-|V|+1}\le\frac{|E|}{\frac13|E|}=3
    $$
    Now for any $E'\subset E$, we want to give an upper bound for $\frac{|E'|}{\rank(E')}$. Let $M':=M|_{E'}$, then the density of $E'$ is equal to $\frac{|E'|}{\rank(M')}$. Since deletion in cographic matroids is equivalent to contraction in graphic matroid, $M'$ is associated with the graph $G'$ obtained by contracting $E\setminus E'$ in $G$. It's not hard to see that each vertex in $G'$ also has degree at least $3$. Therefore the above argument about density also holds for $M'$, i.e. $\frac{|E'|}{\rank(M')}\le 3$. Therefore $\gamma(M)\le 3$.
\end{proof}

\begin{proof}[Proof of \Cref{thm:cographic}]
    By \Cref{lem:lowdenOCRS} and \Cref{lemmadensity}, we already have a $\nf13$-selectable PI-OCRS for cographic matroids, where each vertex of the underlying graph has degree at least $3$. We now consider the case of general graphs. Given a cographic matroid $M=(E,\cI)$, we say two elements $a,b\in E$ are parallel if $a=b$ or $\{a,b\}$ is a circuit of size $2$. We say two sets $A,B\sse E$ are parallel if there exists a bijection $\phi:A\to B$ such that for any $a\in A$, $a$ is parallel to $\phi(a)$. Then one can verify that parallelism defines an equivalence relation over $E$, and thus $E$ can be partitioned into a disjoint union of equivalence classes: $E=\bigsqcup_{j=1}^m E_j$. For each equivalence class $E_j$, we select a representative $e_j$, and define $E':=\{e_j:j\in[m]\}$. We then define $M'=M|_{E'}$, then $M'$ is also a cographic matroid, let $G'=(V',E')$ denote the underlying graph for $M'$. Assuming no isolated vertex in $G'$, since $M'$ contains no parallel elements, the degree of each vertex in $G'$ is at least $3$. Therefore by \Cref{lem:lowdenOCRS} and \Cref{lemmadensity}, we have a $1/3$-selectable PI-OCRS for $M'$. Let $I_0\sse E'$ denote the independent set selected in \Cref{lem:lowdenOCRS}. Now for any $\bx\in b\cP_M$ and distribution $\cD$ PI-consistent with $\bx$, we set $\cF=\{I\sse E\mid I\text{ is parallel to some } I'\sse I_0\}$. Then for any $e\in E_j$, we have
    \begin{align*}
        \Pr[I\cup \{e\}\in\cF\quad\forall I\in\cF,I\sse R\mid e\in R] &=\Pr[E_j\cap R=\{e\}\mid e\in R]\cdot \Pr[e_j\in I_0]\\
        &\ge\frac13\Pr\bigg[\bigwedge_{e'\in E_j\setminus\{e\}}e'\notin R\mid e\in R\bigg]\tag{\Cref{lem:lowdenOCRS}}\\
        &\ge\frac13\bigg(1-\sum_{e'\in E_j\setminus\{e\}}\Pr[e'\in R]\bigg)\tag{union bound and pairwise independence}\\
        &\ge\frac13(1-b)\tag{$\bx(E_j)\le b\rank(E_j)=b$}
    \end{align*}
    Therefore we have a $(b,\frac13(1-b))$-selectable PI-OCRS. By \Cref{clm:scaleOCRS}, setting $b=\nf{1}{2}$ gives a $\nf{1}{12}$-selectable PI-OCRS for cographic matroids.
\end{proof}

\section{Regular Matroids}
\label{sec:regular}

We now give a $\Omega(1)$-competitive PI-OCRS for regular matroids. We
use the regular matroid decomposition theorem of
\textcite{seymour1986decomposition} and its modification by
\textcite{dinitz2014matroid}, which decomposes any regular matroid into
1-sums, 2-sums, and 3-sums of graphic matroids, cographic matroids,
and a specific 10-element matroid $R_{10}$. (These matroids are called
the \emph{basic} matroids of the decomposition). We now define
$1$,$2$,$3$-sums, and argue that it suffices to run a PI-OCRS for each of the basic matroids and to output the union of their outputs.  

\begin{definition}[Binary Matroid Sums \cite{dinitz2014matroid,seymour1986decomposition}]
  Given two binary matroids $M_1=(E_1,\mathcal I_1)$ and
  $M_2 = (E_2,\mathcal I_2)$, the \emph{matroid sum} $M$ defined on the ground set
  $E(M_1) \Delta E(M_2)$ is as follows. The set $C$ is a cycle
  in $M$ iff it can be written as $C_1\Delta C_2$, where $C_1$ and
  $C_2$ are cycles of $M_1$ and $M_2$, respectively. Furthermore,
\begin{enumerate}
	\item If $E_1\cap E_2=\varnothing$, then $M$ is called 1-sum
          of $M_1$ and $M_2$. 
	\item If $|E_1\cap E_2|=1$, then we call $M$ the 2-sum of
          $M_1$ and $M_2$. 
	\item If $|E_1\cap E_2|=3$, let $Z=E_1\cap E_2$. If $Z$ is a
          circuit of both $M_1$ and $M_2$, then we call $M$ the 3-sum
          of $M_1$ and $M_2$. 
\end{enumerate}
\end{definition}

(The $i$-sum is denoted $M_1 \oplus_i M_2$.) Our definition differs from \cite{seymour1986decomposition,dinitz2014matroid} as we have dropped some conditions on the sizes of $M_1$ and $M_2$ that we do not need.
A $\{1,2,3\}$-\emph{decomposition} of a matroid $\widetilde M$ is a set of matroids $\cM$ called the \emph{basic
matroids}, together with a rooted binary tree $T$ in which $\widetilde M$ is the root and the leaves are the elements of $\cM$. Every internal vertex in the tree is either the $1$-, $2$-, or $3$-sum of its children.
Seymour's decomposition theorem for regular matroids
\cite{seymour1986decomposition} says that 
  every regular matroid $\widetilde M$ has a (poly-time computable) $\{1,2,3\}$-decomposition
  with all basic matroids being graphic, cographic or
  $R_{10}$. 

\paragraph{The Dinitz-Kortsarz modification.} \textcite{dinitz2014matroid} modified Seymour's
decomposition to give an $\Omega(1)$-competitive algorithm for the
regular-matroid secretary problem, as follows. Given a $\{1, 2, 3\}$-decomposition $T$ for binary matroid $\widetilde M$ with basic matroids $\cM$, we define $Z_M$, the \emph{sum-set} of a non-leaf vertex $M$ in $T$, to be the intersection of the ground sets of its children (the sum-set is thus not in
the ground set of $M$). A sum-set $Z_M$ for internal vertex $M$ is either the empty set (if $M$ is the $1$-sum of its children), a single element (for $2$-sums), or three elements that form a circuit in its children (for $3$-sums). A $\{1, 2, 3\}$-decomposition is \emph{good} if for every sum-set $Z_M$ of size $3$ associated with internal vertex $M = M_1 \oplus_3 M_2$, the set $Z_M$ is contained in the ground set of a single basic matroid below $M_1$, and in the ground set of a single basic matroid below $M_2$.
For a given $\{1,2,3\}$-decomposition of a matroid $\widetilde M$ with
basic matroids $\mathcal{M}$, define the \emph{conflict graph} $G_T$
to be the graph on $\cM$ where basic matroids $M_1$ and $M_2$ share an edge if their ground sets intersect.
\cite{dinitz2014matroid} show that 
    if $T$ is a good $\{1, 2, 3\}$-decomposition of $\widetilde M$, then $G_T$ is a forest.
We can root each tree in such a forest arbitrarily, and define the
parent $p(M)$ of each non-root matroid $M \in \cM$. Let $A_M$ be the sum-set
for the edge between matroid $M$ and its parent, i.e.,
$A_M = E(M) \cap E(p(M))$.
\begin{theorem}[Theorem 3.8 of \cite{dinitz2014matroid}]
  \label{thm:good_123decomp}
  There is a good $\{1, 2, 3\}$-decomposition $T$ for any binary matroid $\widetilde M$ with basic matroids $\cM$ such that (a)~each basic matroid $M \in \cM$ has no circuits of size $2$ consisting of an element of $A_{M}$ and an element of $E(\widetilde
  M)$, and (b)~every basic matroid $M \in \cM$ can be obtained from some $M' \in \widetilde \cM$ by deleting elements and adding parallel elements.
\end{theorem}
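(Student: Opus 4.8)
This refinement of Seymour's theorem is due to \textcite{dinitz2014matroid}, and I would prove it by starting from Seymour's decomposition and performing controlled surgery on the decomposition tree, introducing only parallel elements and small graphic ``gadget'' matroids, so that the class of basic matroids never leaves \{graphic, cographic, $R_{10}$\} and every basic matroid remains a deletion-plus-parallel copy of one of Seymour's originals. Concretely, first invoke Seymour's theorem \cite{seymour1986decomposition} to get a $\{1,2,3\}$-decomposition $T_0$ of $\widetilde M$ with basic matroids graphic, cographic, or $R_{10}$. The only obstruction to goodness is a $3$-sum node $M = M_1 \oplus_3 M_2$ whose sum-set $Z_M$ --- a three-element circuit of $M_1$ (and of $M_2$) --- has its three elements spread over two or more basic matroids below $M_1$ (resp.\ $M_2$); sum-sets of $1$- and $2$-sums are empty or singletons and are trivially localized. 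To repair such a split triangle $Z$ inside a child subtree, process the subtree top-down: at the topmost $3$-sum that splits $Z$, splice in a graphic gadget $\Delta$ (a triangle carrying a copy of $Z$ together with a fresh three-element circuit $Z'$), reassociating the sums so that $Z$ is produced at the leaf $\Delta$ and the rest of the subtree touches $\Delta$ only through $Z'$. Because $\Delta$ is graphic and is itself a three-element circuit, and every $3$-sum boundary is a circuit sitting inside an existing basic matroid, this preserves both the matroid class and property~(b). Iterating over all $3$-sum nodes yields a good decomposition $T$; \textcite{dinitz2014matroid} then show its conflict graph is a forest, which we root, defining $A_M$ as the sum-set on the edge from $M$ to its parent.

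Next I would enforce property~(a). If a basic matroid $M$ has a size-$2$ circuit $\{a,e\}$ with $a \in A_M$ and $e \in E(\widetilde M)$, then $e$ is parallel to the boundary element $a$; delete $e$ from $M$ and re-insert it as a parallel copy of $a$ in the basic matroid on the other side of the conflict-forest edge labelled $A_M$. Sliding a parallel element across a shared $2$- or $3$-sum boundary leaves the cycle structure of every node of the tree unchanged, so this is a legitimate rewriting; orienting every such slide toward the root, each move strictly decreases the number of violations at $M$ and creates none higher up, so after finitely many moves (a) holds. Since only deletions and parallel additions are used, (b) is still satisfied.

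The hard part is the bookkeeping in the goodness step: one must show the surgery does not cascade --- the freshly created boundary circuit $Z'$ must itself become localized, possibly after a bounded number of further gadget splices --- and that the conflict graph of the final decomposition is still a forest (the property \textcite{dinitz2014matroid} prove for good decompositions). One must also check carefully that each gadget and each modified basic matroid really is obtainable from a Seymour original by deletions and parallel additions; here the fact that every $3$-sum boundary is a circuit of both children is exactly what guarantees a suitable source circuit always exists. The rest is routine.
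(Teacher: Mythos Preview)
The paper does not prove this theorem at all: it is stated verbatim as ``Theorem 3.8 of \cite{dinitz2014matroid}'' and used as a black box in the regular-matroid PI-OCRS construction. There is therefore no proof in the paper to compare your proposal against. Your sketch is a reasonable outline of the Dinitz--Kortsarz argument (start from Seymour, localize split $3$-sum triangles via graphic gadgets, then slide parallel elements across sum boundaries to kill the size-$2$ circuits of type~(a)), and you correctly flag the delicate bookkeeping points; but for the purposes of this paper the appropriate ``proof'' is simply the citation, which is exactly what the authors do.
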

Dinitz and Kortsarz showed that a good $\{1,2,3\}$-decomposition for a matroid $\widetilde M$ can be used to construct independent sets for $\widetilde M$ as follows. Below, $\cdot \vert_S$ denotes restriction to the set $S$.
\begin{lemma}[Lemma 4.4 of \cite{dinitz2014matroid}]
\label{lem:gluing_the_pieces}
    Let $T$ be a good $\{1,2,3\}$-decomposition for $\widetilde M$ with basic matroids $\cM$. For each each $M \in \cM$, let $I_M$ be an independent set of $(M/A_M )\vert_{(E(M) \cap
E(\widetilde M))}$. 
Then $I = \bigcup_{M \in \cM} I_M$ is independent
in $\widetilde M$. 
\end{lemma}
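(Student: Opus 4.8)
The plan is to induct on the number of basic matroids $|\cM|$, working throughout with the $\mathbb F_2$ cycle space. I will use the standard facts: in a binary matroid a set is independent iff it contains no nonempty \emph{cycle} (member of the cycle space); for a sum $M=M_1\oplus_i M_2$ the cycles are exactly the sets $C_1\Delta C_2$ with $C_j$ a cycle of $M_j$, and such a set lies in $E(M)$ (hence is a cycle of $M$) exactly when $C_1$ and $C_2$ agree on the shared elements; the cycle space of $M/A$ equals $\{C\setminus A: C\text{ a cycle of }M\}$; the circuits of a binary matroid are precisely its minimal nonempty cycles; and the cycle space is closed under symmetric difference.

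First I would reduce to the case that $G_T$ is connected, hence a tree: a $1$-sum contributes no edge to $G_T$ and splits $\widetilde M$ into a direct sum whose two sides each carry a good decomposition on a strictly smaller basic family, and independence in a direct sum is tested side by side. Given a tree $G_T$, pick a leaf $M^*$ with parent $p=p(M^*)$ and sum-set $Z:=A_{M^*}$, so $|Z|\in\{1,3\}$. Re-associating the decomposition tree, I would write $\widetilde M = N\oplus_{|Z|}M^*$ where $N$, on ground set $(E(\widetilde M)\setminus(E(M^*)\cap E(\widetilde M)))\cup Z$, is assembled from the rest of the tree; one checks that $N$ inherits a good $\{1,2,3\}$-decomposition with basic family $\cM\setminus\{M^*\}$, that the sum-sets $A_M$ for $M\ne M^*$ are unchanged, and that each $I_M$ ($M\ne M^*$) is still an independent set of $(M/A_M)\vert_{E(M)\cap E(N)}$ --- for $M=p$ this is because $I_p$ is disjoint from $Z$, so enlarging the ground set by $Z$ cannot create dependence. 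By goodness, $Z$ lies inside the single basic matroid $p$ on $N$'s side and involves no sum-set of $N$'s decomposition; since for a $3$-sum $Z$ is a circuit of $p$, this makes $Z$ itself a cycle of $N$.

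For the core step, suppose toward a contradiction that $I=\bigcup_M I_M$ contains a nonempty cycle $C$ of $\widetilde M$, and write $C=C_N\Delta C_{M^*}$ with $C_N$ a cycle of $N$, $C_{M^*}$ a cycle of $M^*$, and $C_N\cap Z=C_{M^*}\cap Z$. Since $E(N)\setminus Z$ and $E(M^*)\setminus Z$ partition $E(\widetilde M)$, the $I_M$ are pairwise disjoint, and $I\cap E(M^*)=I_{M^*}$, we get $C_{M^*}\setminus Z=C\cap E(M^*)\subseteq I_{M^*}$; as $I_{M^*}$ is independent in $M^*/Z$, this forces $C_{M^*}\setminus Z=\varnothing$, i.e.\ $C_{M^*}\subseteq Z$. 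Hence $C_{M^*}$ is $\varnothing$ or, only when $|Z|=3$, the circuit $Z$ (when $|Z|=1$ the shared element is a non-loop, so only $\varnothing$ occurs). If $C_{M^*}=\varnothing$ then $C_N\cap Z=\varnothing$ and $C=C_N$ is a nonempty cycle of $N$ contained in $\bigcup_{M\ne M^*}I_M$, contradicting the inductive hypothesis applied to $N$. If $C_{M^*}=Z$, then $C_N\supseteq Z$ and $C=C_N\setminus Z$; since $Z$ and $C_N$ are both cycles of $N$, so is $C_N\Delta Z=C_N\setminus Z=C$, again a nonempty cycle of $N$ inside $\bigcup_{M\ne M^*}I_M$ --- the same contradiction. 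The base case $|\cM|=1$ is immediate, since then $A_M=\varnothing$ and $I=I_M$ is independent in the minor $M\vert_{E(M)\cap E(\widetilde M)}$ of $\widetilde M=M$.

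The step I expect to be the main obstacle is the re-association exhibiting $\widetilde M=N\oplus_{|Z|}M^*$ with $N$ carrying a \emph{good} decomposition on $\cM\setminus\{M^*\}$: this requires the associativity-type behaviour of binary $\{1,2,3\}$-sums together with the bookkeeping that goodness is preserved, in particular that $Z$ stays confined to the single basic matroid $p$ and is disjoint from the other sum-sets incident to $p$ --- which is exactly what makes $Z$ a cycle of $N$ and lets the induction close. Everything else reduces to routine manipulation of $\mathbb F_2$ cycle spaces.
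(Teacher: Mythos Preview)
The paper does not give its own proof of this statement; it is quoted as Lemma~4.4 of Dinitz--Kortsarz and used as a black box. Your inductive argument via the $\mathbb F_2$ cycle space is the natural approach and is essentially how the original proof goes: peel off a leaf $M^*$ of the conflict tree, use that $I_{M^*}$ is independent in $M^*/Z$ to force any putative cycle to satisfy $C_{M^*}\subseteq Z$, and then push the remaining cycle into $N$ where induction applies.

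Two small points to tighten. First, in the $|Z|=1$ case you assert the shared element is a non-loop of $M^*$; this is indeed standard in Seymour's framework (a $2$-sum with a loop degenerates), but the paper's stripped-down definition of matroid sums omits that hypothesis, so you should cite it explicitly rather than assume it. Second --- and you already identify this as the crux --- the re-association $\widetilde M = N \oplus_{|Z|} M^*$ with $N$ inheriting a \emph{good} decomposition on $\cM\setminus\{M^*\}$ is where the real work sits. The key observations that make it go through are exactly the ones you list: since $M^*$ is a leaf of the conflict tree, $Z=A_{M^*}$ is disjoint from every other sum-set $A_M$ (any overlap would force an extra edge at $M^*$), and goodness confines $Z$ to the single basic matroid $p$ on the $N$ side. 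Together with the associativity of binary $\{1,2,3\}$-sums, this gives the desired $N$, and then your verification that $Z$ itself is a cycle of $N$ (take the circuit $Z$ of $p$ and $\varnothing$ elsewhere) is correct, so the two cases $C_{M^*}\in\{\varnothing,Z\}$ close as you describe.
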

\paragraph{Our Algorithm.} Given the input matroid $\widetilde M$, our
idea is to take a good decomposition $T$ and run a PI-OCRS for
$(M/A_M )\vert_{(E(M) \cap E(\widetilde M))}$ for each vertex $M$ in the
conflict graph $G_T$. Then we need to glue the pieces together using
\cref{lem:gluing_the_pieces}. 
  One technical obstacle is  that the input to an OCRS is a feasible point in the matroid polytope, so to use the framework of \cite{dinitz2014matroid}  we need to convert it into a feasible solution to the polytopes of the (modified) basic matroids. Our insight is captured by the following lemma.
\begin{lemma}
    \label{lem:projecting_x}
    Let $T$ be a good $\{1,2,3\}$-decomposition of regular matroid $\widetilde M$ with basic matroids $\cM$, and let vector $\mathbf{x} \in \frac{1}{3}\cP_{\widetilde M}$. Then for every basic matroid $M \in \cM$, if $\widehat M := (M/A_M )\vert_{(E(M) \cap
E(\widetilde M))}$, then $\mathbf{x} \vert_{\widehat M} \in \cP_{\widehat M}$.
\end{lemma}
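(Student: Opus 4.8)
The plan is to verify directly the rank inequalities defining the matroid polytope. Since $\cP_{\widehat M}=\{\mathbf z\ge 0 \mid \mathbf z(S)\le\rank_{\widehat M}(S)\ \text{for all } S\sse E(\widehat M)\}$ and $\mathbf x\ge 0$ already (being in $\tfrac13\cP_{\widetilde M}$), it suffices to show $\mathbf x(S)\le\rank_{\widehat M}(S)$ for every $S\sse G:=E(M)\cap E(\widetilde M)$. Writing $A:=A_M$ and using the contraction-restriction formula, $\rank_{\widehat M}(S)=\rank_M(S\cup A)-\rank_M(A)$, while $\mathbf x\in\tfrac13\cP_{\widetilde M}$ gives $\mathbf x(S)\le\tfrac13\rank_{\widetilde M}(S)$. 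So the lemma reduces to the purely matroidal bound $\rank_{\widetilde M}(S)\le 3\big(\rank_M(S\cup A)-\rank_M(A)\big)$ for $S\sse G$, together with a separate treatment of the coordinates that $\widehat M$ treats as loops.

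The first ingredient is $\rank_{\widetilde M}(S)\le\rank_M(S)$ for all $S\sse G$. I would prove it from the binary cycle-description of $i$-sums: in $N=N_1\oplus_i N_2$, taking $C_2=\varnothing$ shows every cycle of $N_1$ contained in $E(N_1)\cap E(N)$ is a cycle of $N$. Inducting over the decomposition tree $T$ — at each sum, a cycle of $M$ lying inside $E(\widetilde M)$ must avoid that sum's sum-set, and since a good decomposition keeps each sum-set inside a single basic matroid on each side, the cycle stays on the side containing $M$ and is pushed through — yields the subspace inclusion $\{C\in\cC(M):C\sse G\}\sse\{C\in\cC(\widetilde M):C\sse G\}$ over $\mathrm{GF}(2)^{G}$. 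Intersecting both with $2^{S}$ and using $\rank_N(S)=|S|-\dim_{\mathrm{GF}(2)}\cC(N|S)$ for binary $N$ gives $\rank_{\widetilde M}(S)\le\rank_M(S)$.

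The second ingredient is that $\rank_M(A)\le 2$: for $|A_M|\le 1$ this is immediate, and for $|A_M|=3$ the set $A$ is the sum-set of a $3$-sum in the good decomposition, hence a circuit of $M$, so $\rank_M(A)=|A|-1=2$. Consequently $\rank_{\widehat M}(S)=\rank_M(S\cup A)-\rank_M(A)\ge\rank_M(S)-2$. Now suppose $S\ne\varnothing$ and $S$ contains no loop of $\widehat M$, so $\rank_{\widehat M}(S)\ge 1$. Then $\rank_M(S)\le\rank_{\widehat M}(S)+2\le 3\rank_{\widehat M}(S)$, and combining with the first ingredient,
\[
  \mathbf x(S)\ \le\ \tfrac13\rank_{\widetilde M}(S)\ \le\ \tfrac13\rank_M(S)\ \le\ \rank_{\widehat M}(S),
\]
which is what we want; and $S=\varnothing$ is trivial.

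It remains to account for loops of $\widehat M$, and this is the step I expect to be the main obstacle. An element $e\in G$ is a loop of $\widehat M=(M/A)\vert_G$ exactly when $e\in\mathrm{cl}_M(A)$. If $e$ is a loop of $M$ then, by the cycle-space inclusion above, $\{e\}$ is a cycle of $\widetilde M$, so $x_e\le\tfrac13\rank_{\widetilde M}(\{e\})=0$. If $e$ is not a loop of $M$ but $e\in\mathrm{cl}_M(A)$, choose a circuit $C$ with $e\in C\sse A\cup\{e\}$; then $|C|\ge 2$ and $C\setminus\{e\}$ is a nonempty subset of $A$. When $|A_M|\le 1$ this forces $C$ to be a parallel pair $\{a,e\}$ with $a\in A_M$. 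When $|A_M|=3$ (so $A$ is a triangle) either $C=\{a,e\}$ is a parallel pair, or $C=\{a,b,e\}$ is a triangle and then $C\Delta A=\{c,e\}$ is a parallel pair, or $C=\{a,b,c,e\}$ and then $C\Delta A=\{e\}$, contradicting that $e$ is not a loop. Each surviving case exhibits a size-$2$ circuit of $M$ containing an element of $A_M$ and the element $e\in E(\widetilde M)$, contradicting \Cref{thm:good_123decomp}(a). Hence every loop of $\widehat M$ in $G$ carries $x$-value $0$; deleting these coordinates from an arbitrary $S\sse G$ reduces to the loopless case treated above, giving $\mathbf x(S)\le\rank_{\widehat M}(S)$ in general, i.e. $\mathbf x\vert_{\widehat M}\in\cP_{\widehat M}$.
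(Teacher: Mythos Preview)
Your argument is correct and follows the same skeleton as the paper: reduce to $\rank_{\widehat M}(S)\ge\tfrac13\rank_{\widetilde M}(S)$, establish $\rank_{\widehat M}(S)\ge\rank_M(S)-2$ together with $\rank_{\widehat M}(S)\ge 1$ for nonempty $S$ via the no-parallel property of \Cref{thm:good_123decomp}(a), and combine. You are more careful than the paper in two respects---you prove $\rank_{\widetilde M}(S)\le\rank_M(S)$ explicitly via the cycle-space inclusion (the paper's write-up appears to conflate $\rank_M$ and $\rank_{\widetilde M}$ at this step), and you separately dispatch elements that are loops of $M$ itself---but the route is the same.
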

\begin{proof}
    Fix a set $S \subseteq E(M) \cap
E(\widetilde M)$. We will show that $\rank_{\widehat M}(S) \geq \frac{1}{3} \rank_{\widetilde M}(S)$, from which the claim follows. 

\textbf{Case 1}: $A_M = \{z\}$. For any maximal independent set $I\subset S$ according to $M$, there always exists $a \in I$ such that $(I\cup \{z\})\setminus \{a\}$ is independent in $M$, therefore $\rank_{\widehat M}(S)\geq\rank_{\widetilde M}(S)-1$. Also since no element in $S$ is parallel to $z$, for any non-empty $S$  we have $\rank_{M/A_M}(S)\geq 1$, and we can conclude that $\rank_{M/A_M}(S)\ge\frac{1}{3}\rank_{\widetilde M}(S)$.

\textbf{Case 2}: $A_M$ is some 3-cycle $\{z_1, z_2, z_3\}$. For any maximal independent set $I\subset S$ according to $M$ where $|I| \geq 3$, there always exists $a,b\in I$ such that $(I\cup \{z_1,z_2\})\setminus\{a,b\}$ is independent in $M$. Therefore $\rank_{M_2/Z}(S)\ge\rank_M(S)-2$. We claim that there does not exist $e$ in $E(M)\setminus A_M$ such that $e \in \spn(A_M)$.

Suppose for contradiction such an $e$ exists. Then there is some circuit in $A_M\cup e$ containing $e$. Since there are no parallel elements, this circuit have size $3$. Without loss of generality, assume this circuit is $C=\{z_1,z_2,e\}$. Since $A_M$ is a circuit, by definition of binary matroids, the set $C\Delta A_M=\{z_3,e\}$ is a cycle, and thus $e$ is parallel to $z_3$, a contradiction. Therefore for any non-empty $S$, we have that $\rank_{M/A_M}(S)\geq 1$, and we conclude that $\rank_{M/A_M}(S)\ge\frac{1}{3}\rank_{\widetilde M}(S)$.
\end{proof}

Next, we argue that for every $M$ in the conflict graph, there exists a suitable PI-OCRS for $\widehat M := (M/A_M)\vert_{(E(M) \cap
 E(\widetilde M))}$. For a matroid $M$, let $S(M)$ denote the set of matroids that can be obtained from $M$ by iteratively removing elements, adding parallel elements, and contracting either nothing, a single element, or a circuit of size $3$.

\begin{lemma}
    \label{cor:ocrs_for_sm}
    If matroid $M$ is either graphic, cographic, or $R_{10}$, then every matroid $M' \in S(M)$ admits a $\nf{1}{12}$-selectable PI-OCRS.
\end{lemma}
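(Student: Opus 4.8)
The plan is to dispose of the three kinds of basic matroid ($M$ graphic, $M$ cographic, $M=R_{10}$) separately: in each case I would show that $S(M)$ lies inside (the closure under adding parallel elements and loops of) a class for which we already have an $\Omega(1)$-selectable PI-OCRS, and then invoke Claim~\ref{clm:scaleOCRS} to reach the bound $\nicefrac{1}{12}$. Throughout I use that a $c$-selectable PI-OCRS is trivially also $c'$-selectable for any $c'\le c$.

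\emph{Graphic and cographic $M$.} First I would check $S(M)$ stays inside the class. Deleting or contracting an arbitrary subset (in particular nothing, a single element, or a $3$-element circuit) yields a minor, and graphic/cographic matroids are minor-closed; adding an element parallel to $e$ is a parallel edge in the graphic case and, dualizing, a subdivision of the edge dual to $e$ in the cographic case, while any loops that appear are self-loops (graphic) or bridges in the dual graph (cographic)---all of these preserve the class. So if $M$ is graphic then $S(M)\sse\{\text{graphic matroids}\}$, and Theorem~\ref{thm:graphic} together with Claim~\ref{clm:scaleOCRS} gives a $b(1-2b)$-selectable PI-OCRS, which at $b=\nicefrac{1}{4}$ is $\nicefrac{1}{8}\ge\nicefrac{1}{12}$; and if $M$ is cographic then $S(M)\sse\{\text{cographic matroids}\}$, where Theorem~\ref{thm:cographic} is already a $\nicefrac{1}{12}$-selectable PI-OCRS (loops have $x_e=0$ and are immaterial).

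\emph{The matroid $R_{10}$.} For $M'\in S(R_{10})$ I would pass to the simplification $\mathrm{si}(M')$ and argue it is isomorphic to a minor of $R_{10}$: a parallel-addition followed later by a deletion or contraction of that element changes $M'$ only in its loops/parallels, so in computing $\mathrm{si}(M')$ one may discard every parallel- or loop-creating step and is left with $\mathrm{si}(N)$ for an honest minor $N$ of $R_{10}$; and $\mathrm{si}(N)$, being obtained from $N$ by deletions, is itself a minor of $R_{10}$. If $\mathrm{si}(M')$ is a \emph{proper} minor of $R_{10}$, I would invoke the classical fact that $R_{10}\setminus e\cong M(K_{3,3})$ and $R_{10}/e\cong M^{*}(K_{3,3})$ for every element $e$: then $\mathrm{si}(M')$ is a minor of $M(K_{3,3})$ or of $M^{*}(K_{3,3})$, hence graphic or cographic, and since $M'$ is recovered from $\mathrm{si}(M')$ just by adding parallel elements and loops, $M'$ itself is graphic or cographic and we are back in the previous case. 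This leaves only $\mathrm{si}(M')\cong R_{10}$, i.e.\ $M'$ is $R_{10}$ with parallel elements (and loops) added; deleting the loops (which are never active) we may assume $M'$ is $R_{10}$ with a parallel class over each of its elements.

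\emph{$R_{10}$ with parallel classes.} Here I would mimic the proof of Theorem~\ref{thm:cographic}. First, $\gamma(R_{10})\le 3$: it has $10$ elements and rank $5$, and a simple regular matroid of rank $\rho$ has at most $\binom{\rho+1}{2}$ elements, so every restriction of $R_{10}$ has density at most $\max_{1\le\rho\le 5}\min(10/\rho,(\rho+1)/2)=\nicefrac{5}{2}$ (in fact $\gamma(R_{10})=2$). By Lemma~\ref{lem:lowdenOCRS} applied to $R_{10}$ there is a distribution over independent sets $I_0$ of $R_{10}$ with $\Pr[e\in I_0]=\nicefrac{1}{\gamma(R_{10})}$ for each element $e$ of $R_{10}$, chosen independently of $R$. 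Writing $E_1,\dots,E_m$ for the parallel classes of $M'$ and $e_j\in E_j$ for the corresponding $R_{10}$-element, I would run the greedy PI-OCRS with $\cF=\{I\mid I\text{ is parallel to some }I'\sse I_0\}$; then $\cF\sse\cI(M')$ because replacing elements of an independent set by parallel ones preserves independence, and for $e\in E_j$, using independence of $I_0$ from $R$, a union bound, and pairwise independence of the events $e'\in R$,
\begin{align*}
  \Pr[I\cup\{e\}\in\cF \;\;\forall I\in\cF,\, I\sse R\mid e\in R]
   &= \Pr[E_j\cap R=\{e\}\mid e\in R]\cdot\Pr[e_j\in I_0] \\
   &\ge \Bigl(1-\sum_{e'\in E_j\setminus\{e\}} x_{e'}\Bigr)\cdot\frac{1}{\gamma(R_{10})}
   \;\ge\; \frac{1-b}{\gamma(R_{10})},
\end{align*}
since $\sum_{e'\in E_j}x_{e'}\le b\cdot\rank(E_j)=b$ for $\bx\in b\cP_{M'}$. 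This is a $(b,\nicefrac{1-b}{\gamma(R_{10})})$-selectable PI-OCRS, so Claim~\ref{clm:scaleOCRS} turns it into a $\nicefrac{b(1-b)}{\gamma(R_{10})}$-selectable one, and $b=\nicefrac{1}{2}$ gives $\nicefrac{1}{4\gamma(R_{10})}\ge\nicefrac{1}{12}$.

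\emph{Expected main obstacle.} No new scheme is needed---the constructions are just Theorem~\ref{thm:graphic}, Theorem~\ref{thm:cographic}, and the low-density scheme of Lemma~\ref{lem:lowdenOCRS} glued to the parallel-class union bound of Theorem~\ref{thm:cographic}. The delicate part is the matroid bookkeeping for $R_{10}$: verifying that $\mathrm{si}(M')$ is a minor of $R_{10}$ when parallel-additions are interleaved arbitrarily with deletions and $3$-circuit contractions, and recalling precisely which proper minors of $R_{10}$ are graphic and which are cographic.
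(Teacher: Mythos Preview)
Your argument is correct, and for the $R_{10}$ case it takes a genuinely different route from the paper.

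For graphic and cographic $M$ you do exactly what the paper does: observe closure of these classes under the operations defining $S(M)$, then invoke Theorems~\ref{thm:graphic} and~\ref{thm:cographic}.

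For $R_{10}$ the paper instead cites \cite[Corollary~4.7]{dinitz2014matroid} for the structural fact that every $M'\in S(R_{10})$ is graphic, cographic, or partitions into ten parallel classes such that deleting any one leaves a graphic matroid; it then handles this last case by deleting a uniformly random class and running the graphic PI-OCRS on what remains, obtaining a $(b,\tfrac{9}{10}(1-2b))$-selectable scheme, hence $\nf{9}{80}$. You replace the citation by the direct minor argument ($\mathrm{si}(M')$ is a minor of $R_{10}$, and $R_{10}\setminus e\cong M(K_{3,3})$, $R_{10}/e\cong M^{*}(K_{3,3})$), and you handle the ``$R_{10}$ with parallel classes'' case by the low-density scheme of Lemma~\ref{lem:lowdenOCRS} combined with the same parallel-class union bound used in the proof of Theorem~\ref{thm:cographic}. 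Your route is more self-contained (no external structural lemma) and, with $\gamma(R_{10})=2$, actually yields $\nf18$ rather than $\nf{9}{80}$ in that subcase---either way comfortably above $\nf{1}{12}$. One small remark: your bound ``a simple regular matroid of rank $\rho$ has at most $\binom{\rho+1}{2}$ elements'' is Heller's theorem; it may be worth naming it, since the analogous statement fails for simple binary matroids.
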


\begin{proof}
    The class of graphic (resp. cographic) matroids are closed under the operations of deletion, contraction, and addition of parallel elements. Hence if $M$ is graphic or cographic, so are all matroids in $S(M)$.

    Otherwise, $M = R_{10}$. It is known (see the proof of \cite[Corollary 4.7]{dinitz2014matroid}) that for any matroid $M' \in S(R_{10})$, either $M'$ is graphic, cographic, or has a ground set can be partitioned into $10$ parts such that deleting any one of the $10$ parts makes $M$ graphic.

    If $M' \in S(M)$ is graphic or cographic, \cref{thm:graphic,thm:cographic} imply a $\nicefrac{1}{12}$-selectable OCRS. Else, deleting one of the 10 parts of $M'$ uniformly at random and then running a graphic matroid OCRS yields a $(b, \frac{9}{10}\cdot (1-2b))$-selectable PI-OCRS, which can be tuned to give a $\nf{9}{80}$-selectable PI-OCRS. In conclusion, each matroid $M'\in S(M)$ admits a $\min(\nf1{12},\nf9{80})=\nf1{12}$-selectable PI-OCRS.
\end{proof}

Finally, we have all we need to prove the main theorem of this section.

\begin{restatable}[Regular Matroids]{theorem}{regmat}
    \label{thm:regmat}
      There is a $(\nf13,\nf1{12})$-selectable PI-OCRS for regular matroids.
\end{restatable}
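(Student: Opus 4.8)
The plan is to assemble the pieces developed in this section into a single greedy PI-OCRS for $\widetilde M$. Given $\mathbf{x} \in \tfrac{1}{3}\cP_{\widetilde M}$ and a distribution $\cD$ PI-consistent with $\mathbf{x}$, the first step is to compute a good $\{1,2,3\}$-decomposition $T$ of $\widetilde M$ with basic matroids $\cM$ guaranteed by \Cref{thm:good_123decomp}; recall that the conflict graph $G_T$ is then a forest, so each non-root basic matroid $M \in \cM$ has a well-defined sum-set $A_M = E(M) \cap E(p(M))$. For each $M \in \cM$ define the contracted/restricted matroid $\widehat M := (M/A_M)\vert_{(E(M) \cap E(\widetilde M))}$. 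By \Cref{thm:good_123decomp}(b), $M \in S(M')$ for some basic matroid $M'$ of the original Seymour decomposition (which is graphic, cographic, or $R_{10}$), so $\widehat M \in S(M')$ as well; \Cref{cor:ocrs_for_sm} then furnishes a $\nf{1}{12}$-selectable PI-OCRS $\pi_M$ for each $\widehat M$.

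Next I would set up the restricted distributions. By \Cref{lem:projecting_x}, since $\mathbf{x} \in \tfrac{1}{3}\cP_{\widetilde M}$, the restriction $\mathbf{x}\vert_{\widehat M}$ lies in $\cP_{\widehat M}$ — i.e.\ it is a $1$-scaled point of the right polytope, which is exactly what a $\nf1{12}$-selectable PI-OCRS consumes. Moreover, since $R \sim \cD$ is PI-consistent with $\mathbf{x}$, the projection $R \cap E(\widehat M)$ is PI-consistent with $\mathbf{x}\vert_{\widehat M}$ (marginals and pairwise marginals are preserved under restriction to a sub-ground-set). Run each $\pi_M$ (defining its feasible family $\cF_M$) on its own copy of the arriving stream restricted to $E(M) \cap E(\widetilde M)$, and define the overall feasible family
\[
  \cF \;=\; \bigl\{\, I \sse E(\widetilde M) \;\big|\; I \cap (E(M)\cap E(\widetilde M)) \in \cF_M \ \text{ for every } M \in \cM \,\bigr\}.
\]
\Cref{lem:gluing_the_pieces} guarantees $\cF \sse \cI(\widetilde M)$: any $I \in \cF$ restricts to an independent set $I_M := I \cap (E(M)\cap E(\widetilde M))$ of $\widehat M$ for each $M$, and gluing these yields an independent set of $\widetilde M$. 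Greedily accepting an arriving element $e$ iff it keeps the running set in $\cF$ is therefore a valid greedy PI-OCRS.

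It remains to lower-bound selectability. Fix $e \in E(\widetilde M)$. The key structural fact I would use is that the conflict graph is a forest, so $e$ lies in $E(M)\cap E(\widetilde M)$ for at most \emph{one} basic matroid $M$ — here I would invoke \Cref{thm:good_123decomp}(b) together with the forest structure to argue that an element of the ambient ground set is not duplicated across distinct basic matroids' restrictions. (This is the step I expect to be the main obstacle: one must carefully check, using the definition of sum-sets and of $\widehat M$, that the restricted ground sets $\{E(M)\cap E(\widetilde M)\}_{M\in\cM}$ partition $E(\widetilde M)$, so the greedy processes do not interfere and no union bound over multiple matroids is needed for a single element.) Granting this, conditioned on $e \in R$, the event $\{I \cup \{e\} \in \cF \ \forall I \in \cF, I \sse R\}$ is exactly the event $\{I_M \cup \{e\} \in \cF_M \ \forall I_M \in \cF_M, I_M \sse R \cap E(\widehat M)\}$ for the unique $M$ containing $e$, and $\nf1{12}$-selectability of $\pi_M$ applied to the PI-consistent pair $(\mathbf{x}\vert_{\widehat M}, R\cap E(\widehat M))$ gives this probability is at least $\nf1{12}$. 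Hence the scheme is $\nf1{12}$-selectable on $\tfrac13\cP_{\widetilde M}$, i.e.\ a $(\nf13,\nf1{12})$-selectable PI-OCRS, as claimed. Finally, if instead $e$ did lie in several basic matroids, one would take a union bound and absorb the constant loss — but the forest property lets us avoid this, which is why the constant stays $\nf1{12}$.
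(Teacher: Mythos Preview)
Your proposal is correct and follows essentially the same route as the paper: compute the good decomposition, feed $\mathbf{x}\vert_{\widehat M}$ into the per-piece $\nf1{12}$-selectable PI-OCRS via \Cref{lem:projecting_x} and \Cref{cor:ocrs_for_sm}, intersect the feasible families, and invoke \Cref{lem:gluing_the_pieces} for feasibility. One small remark: the partition property you flag as the ``main obstacle'' is actually immediate from the definition of $1$-, $2$-, and $3$-sums (the ground set of an $i$-sum is the \emph{symmetric difference} of the children's ground sets, so an element of $E(\widetilde M)$ is in exactly one leaf), and the paper accordingly asserts the reduction to a single $\widehat M$ without further comment; you do not need the forest structure of the conflict graph for this step.
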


\begin{proof}
    
The input is a regular matroid $\widetilde M$ and any distribution $\cD$ over $E(M)$ PI-consistent with some vector $\mathbf{x} \in \frac{1}{3}\cP_{\widetilde M}$.

Our greedy PI-OCRS for regular matroid $\widetilde M$ proceeds as follows. First, compute a good $\{1,2,3\}$-decomposition $T$ of $\widetilde M$ with basic matroid set $\cM$, using \cref{thm:good_123decomp}. For each of the basic matroids $M \in \cM$, recall that $\widehat M := (M/A_M)\vert_{(E(M) \cap
E(\widetilde M))}$. By \cref{lem:projecting_x}, $\mathbf{x} \vert_{\widehat M} \in \cP_{\widehat M}$. Since $M$ is graphic/cographic/$R_{10}$ and $\widehat M \in S(M)$, by \cref{cor:ocrs_for_sm} there is a $\nf{1}{12}$-selectable PI-OCRS for each $\widehat M$, and we denote the feasible sets of such greedy PI-OCRS as $\cF_{\widehat M}$. Then we simply set $\cF_{\widetilde M}=\{I\sse \widetilde M\mid I\cap E(\widehat M)\in\cF_{\widehat M},\forall M\in\cM\}$. By \Cref{lem:gluing_the_pieces}, $\cF_{\widetilde M}\sse\cI(\widetilde M)$, i.e. $\cF_{\widetilde M}$ is indeed feasible. Then we bound the selectability of the greedy PI-OCRS defined by $\cF_{\widetilde M}$: for any item $i$ in $\widehat M$, we have
\begin{align*}
    &\Pr[I\cup\{i\}\in\cF_{\widetilde M}\quad\forall I\in\cF_{\widetilde M},I\sse R\mid i\in R]\\
    &=\Pr[I\cup\{i\}\in\cF_{\widehat M}\quad\forall I\in\cF_{\widehat M},I\sse R\cap\widehat M\mid i\in R\cap\widehat M] \ge\frac1{12}.\tag{\Cref{lem:projecting_x}}
\end{align*}
We conclude that $\cF_{\widetilde M}$ does indeed give a $(\nf{1}{3},\nf{1}{12})$-selectable PI-OCRS for regular matroids. \qedhere
\end{proof}

\printbibliography

\appendix

\section{The Single-Item Setting}
\label{sec:single-item}

\subsection{A Single-Sample Prophet Algorithm} 
\label{sec:single-sample}

In this section we give an $\Omega(1)$ competitive algorithm for the prophet problem under a pairwise independent distribution, in the restricted setting where the algorithm has only limited prior knowledge of the pairwise independent distribution $\cD$ in the form of a single sample from $\cD$.

Formally, the algorithm is given as input $\langle \widetilde X_1, \ldots, \widetilde X_n \rangle \sim \cD$. Subsequently, the values of a second draw $\langle X_1, \ldots, X_n \rangle \sim \cD$ are revealed one by one in order. The algorithm's goal is to stop at the largest value. We show the following theorem.

\singlesample*

\begin{proof}
    Our algorithm sets a threshold $v^*= \max_i \widetilde X_i$ (which is a random variable), and then stops at the first $X_i \geq v^*$ if such an $X_i$ exists.

    To understand the performance of this algorithm, define the quantity $x(t) := \sum_i \Pr[\widehat v_i \geq t]$. The proof of \cite[Theorem 2]{DBLP:conf/wine/CaragiannisGLW21} shows that an algorithm using threshold $t$ achieves competitive ratio $\min(1-t,\: t/(t+1))$. Letting $f$ be the probability density function of $x(v^*)$, and defining
    \[F(t):=\int_0^tf(r)\text dr = \Pr[x(v^*)\le t] = \Pr[v^*\ge x^{-1}(t)],\] 
    we can lower bound our competitive ratio as
\begin{align*}
\text{APX}&\ge\int_0^1f(t)\min\left(1-t,\frac t{1+t}\right)\text dt =\int_0^{\frac{\sqrt5-1}2}f(t)\frac t{1+t}\text dt+\int_{\frac{\sqrt5-1}2}^1f(t)(1-t)\text dt.
\end{align*}
We apply integration by parts to the first term to get
\begin{align*}
    \int_0^{\frac{\sqrt5-1}2}f(t)\frac t{1+t}\text dt &=F(t)\frac t{1+t}\big|_0^{\frac{\sqrt5-1}2}-\int_0^{\frac{\sqrt5-1}2}f(t)\frac1{(1+t)^2}\text dt\\
    &=\frac{3-\sqrt5}2F\left(\frac{\sqrt5-1}2\right)-\int_0^{\frac{\sqrt5-1}2}F(t)\frac1{(1+t)^2}\text dt,
\end{align*}
and to the second term which gives
\begin{align*}
    &\int_{\frac{\sqrt5-1}2}^1f(t)(1-t)\text dt =F(t)(1-t)\big|_{\frac{\sqrt5-1}2}^1+\int_{\frac{\sqrt5-1}2}^1F(t)\text dt =-\frac{3-\sqrt5}2F\left(\frac{\sqrt5-1}2\right)+\int_{\frac{\sqrt5-1}2}^1F(t)\text dt.
\end{align*}
Since $F(t)=\Pr[v^*\ge x^{-1}(t)]=\Pr[\bigvee_{i=1}^nv_i\ge x^{-1}(t)]$, by \cite[Lemma 1]{DBLP:conf/wine/CaragiannisGLW21}, we have $\frac t{1+t}\le F(t)\le t$. Thus
\begin{align*}
    \text{APX}&\ge\int_{\frac{\sqrt5-1}2}^1F(t)\text dt-\int_0^{\frac{\sqrt5-1}2}F(t)\frac1{(1+t)^2}\text dt \ge\int_{\frac{\sqrt5-1}2}^1\frac t{1+t}\text dt-\int_0^{\frac{\sqrt5-1}2}\frac t{(1+t)^2}\text dt\\
    &=3-\sqrt5-\ln2.
\end{align*}
Therefore this algorithm gives a competitive ratio of $(3-\sqrt5-\ln2)$, or approximately $7\%$.
\end{proof}
\subsection{Bounds for $t$-Wise Independent CRSs}
\label{sec:twise}

We know that when $\cD$ is independent, $c$-balanced single-item CRSs exist for $c = 1-(1-1/n)^n \approx 1-1/e$, and that this is tight \cite{DBLP:conf/focs/FeigeV06,DBLP:journals/toc/FeigeV10}. Indeed \textcite{DBLP:journals/siamcomp/ChekuriVZ14} show that a $c$-balanced CRS for $c=1-(1-1/n)^n$ exists for any matroid.
In this section we ask: what is the best possible $c$ for merely $t$-wise independent distributions $\cD$? We will show:

\twise*

We begin with an upper bound.

\subsubsection{Upper Bounds for $t$-Wise Independent CRSs}

As a warm-up, consider the case of 2- and 3-wise independent $\cD$. 
\begin{theorem} \label{thm:23wise-CRS-upperbound}
    For $2$- or $3$-wise independent $\cD$, no $c$-balanced CRS exists for $c > \frac{1}{2} + O(1/n)$.
\end{theorem}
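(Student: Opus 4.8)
The plan is to produce a single explicit hard instance for the single-item ($\rank$-$1$ uniform) CRS: a point $\bx$ in the polytope together with a distribution $\cD$ that is $3$-wise independent (hence, in particular, $2$-wise independent) and consistent with $\bx$, on which no balanced CRS beats $\nf12 + O(1/n)$. Because a $3$-wise-independent distribution is also $2$-wise-independent, this one instance will settle both halves of the theorem at once, so I do not need separate constructions for $t=2$ and $t=3$.

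First I would isolate the trivial direction. Suppose $\pi$ is a $c$-balanced CRS and $\cD$ is any distribution consistent with some $\bx$ with $\sum_i x_i = 1$. Summing the balancedness guarantee over all elements,
\[
 c \;=\; c\sum_i \Pr[i\in R] \;\le\; \sum_i \Pr[i\in\pi(R),\, i\in R] \;=\; \sum_i \Pr[i\in\pi(R)] \;=\; \Exp[\abs{\pi(R)}] \;=\; \Pr[\pi(R)\neq\varnothing] \;\le\; \Pr[R\neq\varnothing],
\]
using $\pi(R)\sse R$, then feasibility in the $\rank$-$1$ matroid (so $\abs{\pi(R)}\le 1$ and the expectation equals the probability), then $\pi(R)\sse R$ once more. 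So it suffices to exhibit a valid $(\bx,\cD)$ with $\Pr_{R\sim\cD}[R\neq\varnothing] \le \nf12 + O(1/n)$.

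For the construction, let $q$ be a prime with $q = \Theta(n)$ (say the largest prime at most $n$, which is more than $n/2$ by Bertrand's postulate; if $q<n$, add $n-q$ dummy elements with $x_i = 0$ that are never active). Identify the non-dummy ground set with $\mathbb F_q$, draw $a_0,a_1,a_2$ i.i.d.\ uniform in $\mathbb F_q$, set $f(X) := a_2X^2 + a_1X + a_0$, and let $R := \{\, i\in\mathbb F_q : f(i) = 0\,\}$. Since $(a_0,a_1,a_2)\mapsto f(i)$ is a surjective linear map onto $\mathbb F_q$, each $x_i = \Pr[i\in R] = \nf1q$, so $\sum_i x_i = 1$ and $\bx\in\cP$. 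For any distinct $i,j,k$ the Vandermonde matrix with rows $(1,i,i^2),(1,j,j^2),(1,k,k^2)$ is invertible over $\mathbb F_q$, so $(f(i),f(j),f(k))$ is uniform on $\mathbb F_q^3$; thus $\Pr[\{i,j,k\}\sse R] = \nf1{q^3}$, and the analogous rank-$2$ and rank-$1$ statements give $\Pr[\{i,j\}\sse R] = \nf1{q^2}$ and $\Pr[i\in R] = \nf1q$. Hence $\cD$ is $3$-wise (a fortiori $2$-wise) independent and consistent with $\bx$.

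Finally I would bound $\Pr[R\neq\varnothing] = \Pr[f\text{ has a root in }\mathbb F_q]$ by conditioning on $\deg f$: a degree-$1$ polynomial always has a root, a nonzero constant never does, and a degree-$2$ polynomial has no root iff it is irreducible. The number of reducible monic quadratics over $\mathbb F_q$ equals the number of size-$2$ multisets from $\mathbb F_q$, namely $\binom{q+1}{2}$, so a uniformly random monic quadratic is irreducible with probability $1 - \binom{q+1}{2}/q^2 = \tfrac{q-1}{2q}$. Therefore $\Pr[R=\varnothing] = \tfrac{q-1}{q^3} + \tfrac{q-1}{q}\cdot\tfrac{q-1}{2q} = \tfrac12 - \tfrac1q + O(1/q^2)$, so $\Pr[R\neq\varnothing] = \tfrac12 + O(1/n)$, and the reduction above gives $c \le \tfrac12 + O(1/n)$. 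The only genuinely non-routine step is spotting the construction: a random low-degree polynomial supplies the limited independence for free via the Vandermonde determinant, while the elementary fact that roughly half of all quadratics over $\mathbb F_q$ are irreducible pins $\Pr[R=\varnothing]$ at almost exactly $\nf12$; everything else is bookkeeping. (The same template, with a random polynomial of degree $< t$, is what I would expect to drive the general $\phi(t)$ bound in \Cref{thm:twise-CRS-tight-bounds}.)
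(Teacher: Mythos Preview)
Your proof is correct. Both your argument and the paper's rest on the same reduction (your displayed inequality is exactly the paper's \Cref{lem:CRSfromZbound}): for $\sum_i x_i = 1$, any $c$-balanced CRS has $c \le \Pr[R \neq \varnothing]$, so it suffices to exhibit a $3$-wise independent distribution with $\Pr[R = \varnothing] \approx \nf12$.

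Where you diverge is in the construction. The paper works with \emph{symmetric} distributions parametrized by the size profile $(z_0,\ldots,z_n)$ and writes down explicit $z$-vectors for $t=2$ and $t=3$, verifying the independence constraints by hand. You instead take the zero set of a uniformly random polynomial of degree $<3$ over $\mathbb F_q$, getting $3$-wise independence for free from the Vandermonde determinant, and read off $\Pr[R=\varnothing]$ from the count of irreducible quadratics. Amusingly, when $n=q$ your distribution \emph{is} the paper's $3$-wise instance: your $z_1 = 2(q-1)/q^2$, $z_2 = (q-1)^2/(2q^2)$, $z_q = 1/q^3$ match their formulas exactly, so you have rediscovered the same object via a more conceptual route. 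Your approach buys a cleaner verification of independence and a single instance covering both $t=2,3$; the paper's buys a construction that works for every $n$ without passing to a nearby prime, and (more importantly) a framework that extends uniformly to all $t$ via the linear system \eqref{eq:tIconstraints} in \Cref{lem:Dtn-instance-exists}. Your parenthetical guess that degree-$<t$ polynomials drive the general $\phi(t)$ bound is plausible but would require computing the root-free probability for random degree-$<t$ polynomials, which is less immediate than the quadratic case.
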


We demonstrate this and the subsequent claims by constructing, for each $n$, distributions for which no CRS can do better than the stated bound. We will consider distributions $\cD$ that are \emph{symmetric}, meaning that all equal-size subsets are equally likely; for all $S, S' \subseteq [n]$ if $|S|=|S'|$ then $\Pr[R=S] = \Pr[R=S']$.

For symmetric distribution $\cD$ and a given $n$, let $z_i$ for $i \in \{0, 1, \ldots, n\}$ be the probability that $\cD$ chooses exactly $i$ items from $[n]$ to be active. Then the map that sends vectors $z = (z_0, \ldots, z_n) \in \R_+^{n+1}$ for which $\|z\|_1 = 1$ to the symmetric distributions $\cD$ that chooses $S \subseteq [n]$ to be active with probability exactly $z_{|S|} / {[n] \choose |S|}$ is a bijection. Note that the marginal probabilities for $i\in R$ and symmetric distribution $\cD$ are given by $\Pr_{R \sim \cD}[i\in R] = \sum_j \nicefrac{j}{n} \cdot z_j$, and so in particular $\Pr_{R \sim \cD}[i\in R] = 1/n$ when $\sum_j j\cdot z_j = 1$. 

For such distributions $\cD$, we can provide an upper bound on the performance of any CRS in terms of these $z_i$.
\begin{claim} \label{lem:CRSfromZbound}
    Let $\cD$ be a symmetric distribution given by $z$, with marginals $x_i = 1/n$. Then no CRS is more than $(1-z_0)$-balanced on $\cD$.
\end{claim}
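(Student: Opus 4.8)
The plan is to observe that the only way any CRS can fail to select an item $i$ that is active is via a "feasibility collision" — but in the single-item (rank-1 uniform) matroid, feasibility of the output set $\pi(R)$ just means $|\pi(R)| \le 1$, so the CRS picks at most one element. The key obstacle, and the reason $z_0$ appears, is the following: conditioned on the event that $R = \varnothing$ (which happens with probability $z_0$), the CRS gets no item to select, so it contributes nothing; meanwhile the marginal $x_i = 1/n$ accounts for the "budget" the CRS is measured against. I would make this quantitative by summing the selection probabilities over all $i$.

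\textbf{Step 1.} Let $\pi$ be any CRS for the rank-1 uniform matroid (single item). For each $i \in [n]$, by the balancedness definition, the CRS achieves ratio $c_i := \Pr[i \in \pi(R) \mid i \in R]$, and the overall guarantee $c$ satisfies $c \le \min_i c_i \le \frac{1}{n}\sum_i c_i$. Since the marginals are $x_i = 1/n$, we have $\Pr[i \in R] = 1/n$ for all $i$, so $\Pr[i \in \pi(R)] = c_i \cdot \Pr[i \in R] = c_i / n$. Hence $\sum_i c_i = n \cdot \sum_i \Pr[i \in \pi(R)] = n \cdot \Exp[|\pi(R)|]$.

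\textbf{Step 2.} By the feasibility (independence) constraint for the rank-1 matroid, $|\pi(R)| \le 1$ always, and by the subset constraint $\pi(R) \subseteq R$, we have $|\pi(R)| = 0$ whenever $R = \varnothing$. Therefore $\Exp[|\pi(R)|] \le \Pr[R \ne \varnothing] = 1 - z_0$. Combining with Step 1, $\frac{1}{n}\sum_i c_i = \Exp[|\pi(R)|] \le 1 - z_0$, so $c \le \frac{1}{n}\sum_i c_i \le 1 - z_0$, which is the claim.

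\textbf{Main obstacle.} There is essentially no obstacle here — this is a short averaging argument — but the one point to be careful about is the direction of the inequality $c \le \frac1n \sum_i c_i$: the CRS guarantee $c$ is a \emph{worst-case} (minimum) over items, and we are bounding it by the average, which is valid. The substance of the section lies not in this claim but in exhibiting symmetric $t$-wise independent distributions $\cD$ (i.e., choosing the vector $z$ with $\sum_j j z_j = 1$, supported so as to satisfy the $t$-wise independence moment conditions) for which $z_0$ is as large as $\phi(t) + O(1/n)$; this claim simply converts such a construction into the desired upper bound on CRS quality.
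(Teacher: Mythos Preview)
Your proof is correct and is essentially the same averaging argument the paper gives: both use that $|\pi(R)|\le 1$ and $\pi(R)\subseteq R$ to bound $\sum_i \Pr[i\in\pi(R)]=\Exp[|\pi(R)|]\le \Pr[R\ne\varnothing]=1-z_0$, then divide by the uniform marginal $1/n$ and take the minimum over $i$. The paper phrases the first step as ``the events $\{i\in I\}$ are disjoint,'' but this is equivalent to your expectation bound.
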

\begin{proof}
    Fix a CRS that picks the set $I \subseteq R$. The events $\{\{i \in I \}\}_i$ are disjoint, and only occur if at least one element is active, which happens with probability at most $1-z_0$. By averaging, for at least one $i \in [n]$ we have $\Pr_{R \sim \cD}[i \in I] \leq (1-z_0)/n$. Consequently, using $x_i = 1/n$ we have $\Pr_{R \sim \cD}[i \in I \mid i \in R] = \Pr_{R \sim \cD}[i \in I] / \Pr_{R \sim \cD}[i \in R] \leq 1 - z_0$. In other words the CRS is no more than $(1-z_0)$-balanced.
\end{proof}

Having established this, we are ready to demonstrate our $2$- and $3$-wise independent hard distributions.
\begin{proof}[Proof of \Cref{thm:23wise-CRS-upperbound}]
    We begin with the pairwise-independent symmetric $\cD$. It is given by 
    \[
        z_0 =\frac{n-1}{2n}, \qquad z_1 = \frac{1}{n}, \qquad z_2 = \frac{n-1}{2n}, \qquad z_j = 0 \quad \forall j \geq 3.
    \]
    One can verify that this $z$ satisfies $\|z\|_1 = 1$ (so it describes a distribution), and that it is $2$-wise independent. Furthermore $\sum_j j \cdot z_j = 1$, and so the marginals are $x_i = 1/n$. On the other hand, by \Cref{lem:CRSfromZbound} no CRS can do better than $1-z_0 = \frac{1}{2} + O(1/n)$ on this instance.

    \medskip
    We may slightly modify the above example in order to create a hard $3$-wise independent instance by choosing $R = [n]$ with some small probability, as follows:
    \begin{align*}
                &z_0 =\frac{n^3 - 2n^2 + 3n - 2}{2n^3}, \qquad z_1 = \frac{2n-2}{n^2}, \qquad z_2 = \frac{(n - 1)^2}{2n^2}, \qquad z_n = \frac{1}{n^3}, \\
                &z_j = 0 \quad \forall j \in \{3 ,\ldots, n-1\}.
    \end{align*}
    
Again this satisfies $\|z\|_1 = 1$ and $\sum_j j \cdot z_j = 1$, and so $x_i = 1/n$ for each $i \in [n]$. One can again verify that this is $3$-wise independent, and so by \Cref{lem:CRSfromZbound} this gives a CRS upper bound of $1-z_0 = \frac{1}{2} + O(1/n)$.
\end{proof}

What about $t$-wise independent hard instances for $t \geq 4$? More generally the performance ratio follows the sequence
\begin{equation}
    \phi(t) := 1 - \sum_{k=0}^t \frac{(-1)^k}{k!}, \label{eq:twiseapproxseries}
\end{equation}
where $\phi(2) = 1/2$ matches the guarantee of \Cref{thm:23wise-CRS-upperbound} for pairwise-independent distributions ($t=2$).
As expected $\lim_{t \rightarrow \infty} \phi(n) = 1-1/e$, and the $\phi(t)$ for even $t$ are underestimates of $1-1/e$, while odd $t$ give overestimates.

In general, for each $t \geq 1$ let $\cD_{t,n}$ be the unique symmetric $t$-wise independent distribution which chooses $R \sim [n]$ such that $|R| \leq t$ and for which the marginals are $x_i = 1/n$. Then we have the following claim.
\begin{lemma} \label{lem:Dtn-instance-exists}
    For each $t \geq 1$ and sufficiently large $n \geq t$, the distribution $\cD_{t,n}$ is well-defined, and no $c$-balanced CRS for $\cD_{t,n}$ exists for $c > \phi(t) + O(1/n)$.
\end{lemma}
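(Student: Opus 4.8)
The plan is to realize $\cD_{t,n}$ as an explicit symmetric distribution, read off its size-profile $z=(z_0,\dots,z_t)$, and then apply \Cref{lem:CRSfromZbound}. Using the bijection between profiles and symmetric distributions described above, a symmetric distribution supported on sets of size at most $t$ is given by $z\in\R_{\ge 0}^{t+1}$ with $\|z\|_1=1$ through $\Pr[R=S]=z_{|S|}/\binom{n}{|S|}$. For a fixed set $S$ with $|S|=s$ one computes, by conditioning on $|R|=j$, that $\Pr[S\sse R\mid |R|=j]=\binom{n-s}{j-s}/\binom{n}{j}=\binom{j}{s}\,s!/n^{\underline s}$, where $n^{\underline s}:=n(n-1)\cdots(n-s+1)$; hence, writing $m_s:=\sum_{j=s}^t\binom{j}{s}z_j$, we get $\Pr[S\sse R]=\frac{s!}{n^{\underline s}}\,m_s$. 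Since $\cD$ is symmetric, being $t$-wise independent with marginals $1/n$ is equivalent to $\Pr[S\sse R]=(1/n)^s$ for all $|S|\le t$, i.e. to $m_s=\binom{n}{s}/n^s$ for $s=0,1,\dots,t$ (the $s=0$ equation being $\|z\|_1=1$ and $s=1$ fixing the marginals). This is an upper-triangular system in $z$ with unit diagonal, so it has a unique solution, and inverting the binomial transform gives the closed form $z_j=\sum_{s=j}^t(-1)^{s-j}\binom{s}{j}\binom{n}{s}/n^s$.

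Next I would check that this $z$ is nonnegative for all large $n$, which is exactly the statement that $\cD_{t,n}$ is well-defined. Since $\binom{n}{s}/n^s=\tfrac1{s!}\prod_{i=0}^{s-1}(1-i/n)=\tfrac1{s!}+O(1/n)$, the closed form gives $z_j=\tfrac1{j!}\sum_{i=0}^{t-j}\tfrac{(-1)^i}{i!}+O(1/n)$. The leading constant is a (scaled) partial sum of $e^{-1}$, and $\sum_{i=0}^m\tfrac{(-1)^i}{i!}$ is strictly positive for every $m\ge 0$ except $m=1$ (where it equals $0$); thus for all $j$ with $t-j\ne 1$ the coordinate $z_j$ is bounded away from $0$. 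The unique borderline index is $j=t-1$, where the constant vanishes; there I would expand to first order, using $\binom{n}{s}/n^s=\tfrac1{s!}\bigl(1-\binom{s}{2}/n\bigr)+O(1/n^2)$, to obtain $z_{t-1}=m_{t-1}-t\,m_t=\tfrac{1}{(t-2)!\,n}+O(1/n^2)>0$ for $n$ large. (The cases $t=1$, giving $z=(0,1)$, and $t=2$, whose profile coincides with the explicit pairwise-independent example above, are immediate.) Hence for all sufficiently large $n$ the vector $z$ is a bona fide profile summing to $1$, and $\cD_{t,n}$ is the unique symmetric $t$-wise independent distribution supported on $\{|R|\le t\}$ with marginals $1/n$.

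It then remains to apply \Cref{lem:CRSfromZbound}: since $\cD_{t,n}$ is symmetric with marginals $1/n$, no CRS is more than $(1-z_0)$-balanced on it. From the closed form, $z_0=\sum_{s=0}^t(-1)^s\binom{n}{s}/n^s=\sum_{s=0}^t\tfrac{(-1)^s}{s!}+O(1/n)$, so $1-z_0=\phi(t)+O(1/n)$, which is the claimed bound.

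The step I expect to be the main obstacle is the nonnegativity check at the borderline index $j=t-1$: there the $O(1)$ term of $z_{t-1}$ is $0$, so one must carry the asymptotic expansion to order $1/n$ and verify that the surviving term is positive (and of order exactly $\Theta(1/n)$, not negative). The remaining ingredients---the binomial-inversion identity and the asymptotics of $\binom{n}{s}/n^s$---are routine.
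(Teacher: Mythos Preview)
Your proposal is correct and in fact somewhat cleaner than the paper's own argument. Both proofs set up the same upper-triangular linear system in $(z_0,\dots,z_t)$ encoding $t$-wise independence with marginals $1/n$, and both arrive at the same limiting solution $z_j^*=\tfrac{1}{j!}\sum_{i=0}^{t-j}\tfrac{(-1)^i}{i!}$, after which \Cref{lem:CRSfromZbound} yields $1-z_0=\phi(t)+O(1/n)$. The difference is in how the finite-$n$ solution is obtained: the paper writes down the limiting system, exhibits $z^*$, proves it solves that system by an inductive telescoping identity, and then invokes Cramer's rule abstractly to conclude that the finite-$n$ solution satisfies $z_j^n=z_j^*\pm O(1/n)$. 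You instead invert the binomial transform directly to get the exact closed form $z_j=\sum_{s=j}^t(-1)^{s-j}\binom{s}{j}\binom{n}{s}/n^s$ for every $n$, from which the asymptotics and the value of $z_0$ are immediate.

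Your treatment of nonnegativity is also more careful. The paper's perturbation argument shows $z_j^n=z_j^*+O(1/n)$, which suffices whenever $z_j^*>0$, but at the borderline index $j=t-1$ one has $z_{t-1}^*=0$ and an $O(1/n)$ error of unspecified sign does not by itself establish $z_{t-1}^n\ge 0$; the paper does not isolate this case. You do, expanding to first order to obtain $z_{t-1}=\tfrac{1}{(t-2)!\,n}+O(1/n^2)>0$, which closes the gap. So your route is both more explicit (a closed form rather than an existence argument via determinants) and more complete on the well-definedness claim.
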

\begin{proof}
    We begin by considering the constraints which characterize $\ell$-wise independence for $\ell \leq t$. 
    The probability that any given $\ell$ items $i_1, \ldots, i_\ell$ are in $R \sim \cD_{t,n}$ is given by
    \begin{align*}
        \Pr_{R \sim \cD}[i_1, \ldots, i_\ell \in R] &= \sum_{j = \ell}^t \Pr[i_1, \ldots, i_\ell \in R \wedge |R|=j] \\
        &= \sum_{j = \ell}^t \Pr[i_1, \ldots, i_\ell \in R \mid |R|=j] \cdot \Pr[|R|=j] \\
        &= \sum_{j = \ell}^t {n-\ell \choose j - \ell}{n \choose j}^{-1} \cdot z_j \\
        &= \sum_{j = \ell}^t \frac{(n-\ell)!}{n!} \frac{j!}{(j-\ell)!} \cdot z_j.
    \end{align*}
    Then $\cD_{t,n}$ is $t$-wise independent so long as these probabilities are $1/n^\ell$. Letting $c_{\ell, j} := j!/(j-\ell)!$ and $G(n,\ell) := n^\ell (n-\ell)!/n!$, we may then say that a symmetric $\cD_{t,n}$ is $t$-wise independent so long as the $z_j$ satisfy
    \begin{equation}\label{eq:tIconstraints}
        \sum_{j = \ell}^t G(n, \ell) \cdot c_{\ell,j} \cdot z_j = 1, \qquad \ell \in \{0, \ldots, t\}.
    \end{equation}
    Note that the constraint for $\ell = 0$ specifies that $\cD_{t,n}$ is a distribution, and the constraint for $\ell = 1$ specifies that the marginals $x_i = 1/n$ uniformly.

    The next observation is that for fixed $\ell$, $\lim_{n \rightarrow \infty} G(n, \ell) = 1$. So for a given $t$ we may take the limit as $n$ becomes large and consider the following set of simplified independence constraints
    \begin{equation}\label{eq:tIconstraintslimit}
        \sum_{j = \ell}^t c_{\ell,j} \cdot z_j = 1, \qquad \ell \in \{0, \ldots, t\}.
    \end{equation}
    We will first demonstrate the solution to this simplified set of constraints, then recover guarantees for finite $n$ from it.

    We claim that choosing $z_j^* = \frac{1}{j!} \sum_{b=0}^{t-j}\frac{(-1)^b}{b!}$ satisfies \eqref{eq:tIconstraintslimit}. To see this we evaluate the right-hand side of \eqref{eq:tIconstraintslimit} for a given $\ell$ as 
    \begin{align}
        \sum_{j = \ell}^t c_{\ell,j} \cdot z_j^* &= \sum_{j = \ell}^t \frac{j!}{(j-\ell)!} \cdot \frac{1}{j!} \sum_{b=0}^{t-j}\frac{(-1)^b}{b!} 
        =\sum_{j = \ell}^t \sum_{b=0}^{t-j}\frac{(-1)^b}{b!(j-\ell)!} = \sum_{a=0}^s \sum_{b=0}^{s-a} \frac{(-1)^b}{a! b!} =:F(s), \label{eq:Lwiseindepsimplified}
    \end{align}
    where \eqref{eq:Lwiseindepsimplified} is derived by substituting $s = t - \ell$ and $a = j - \ell$. We show that $F(s)$ is always equal to $1$ by induction on $s$. The base case of $s=0$ is clear. To induct, consider the difference
    \begin{align}
        F(s) - F(s-1) &= \sum_{a=0}^{s} \sum_{b=0}^{s-a} \frac{(-1)^b}{a! b!} - \sum_{a=0}^{s-1} \sum_{b=0}^{s-a-1} \frac{(-1)^b}{a! b!} \notag \\
        &= \sum_{b = 0}^{s} \frac{(-1)^b}{b! (s-b)!} \notag \\
        &= \frac{1}{s!} \sum_{b=0}^s (-1)^{b} {s \choose b} \notag \\
        &= \frac{1}{s!} \left(1 + (-1)^s + \sum_{b=1}^{s-1} (-1)^{b} \cdot \left({s-1 \choose b} + {s-1 \choose b-1} \right) \right) \notag \\
        &= \frac{1}{s!} \left(1 + (-1)^s + \sum_{b=1}^{s-1} (-1)^{b} {s-1 \choose b} + \sum_{b=1}^{s-1} (-1)^{b} {s-1 \choose b-1} \right) \notag \\
        &= \frac{1}{s!} \left(1 + (-1)^s -1 + (-1)^{s-1} \right) = 0, \notag 
    \end{align}
    where above we used that ${n \choose k} = {n-1 \choose k} + {n-1 \choose k-1}$ for $1 \leq k \leq n-1$ and cancelled all but the first and last terms of the two resulting sums. Since this difference is uniformly zero, $F(s) = 1$ for all $s \geq 0$, and so $z^*$ satisfies \eqref{eq:tIconstraintslimit} for all $t$. Additionally, since the system of equations \eqref{eq:tIconstraintslimit} is upper-triangular and none of the coefficients $c_{\ell,j}$ are zero, this solution $z^*$ is unique.

    We now return to \eqref{eq:tIconstraints} and the case of finite $n$. By Cramer's rule, the solution $z^*$ may be expressed as 
    \begin{equation}
        z_j^* = \frac{\det(C\vert_j)}{\det(C)}, \notag
    \end{equation}
    where $C$ represents the constraint matrix for the system of equations \eqref{eq:tIconstraintslimit} expressed as $Cz = 1$, and $C\vert_j$ denotes the matrix $C$ with the $j^{th}$ column replaced by $1$s.

    For the constraint matrix $C_n$ corresponding to the finite-$n$ constraints \eqref{eq:tIconstraints} the same is true, where 
    \begin{equation}
        z_j^n = \frac{\det(C_n\vert_j)}{\det(C_n)} \notag
    \end{equation}
    is the unique solution to \eqref{eq:tIconstraints} so long as $\det(C_n)$ is nonzero.
    
    Consider the coefficients $c_{\ell, j}^n = G(n, \ell) \cdot c_{\ell, j}$ of \eqref{eq:tIconstraints}, and note that for fixed $t$, 
    \[
        1 \leq G(n, \ell) \leq \left(\frac{n}{n-\ell}\right)^\ell \leq \left(\frac{n}{n-t}\right)^t = 1 + O(1/n).
    \]
    The Leibniz formula for the determinant expresses $\det(C^n\vert_j)$ and $\det(C^n)$ as polynomials in $c^n_{\ell,j}$ of degree $t$, so for fixed $t$ we have that $\det(C^n\vert_j) = \det(C\vert_j) \pm O(1/n)$ and $\det(C^n) = \det(C) \pm O(1/n)$. Since the $z_j^*$ are well-defined $\det(C) \neq 0$, so for sufficiently large $n$ 
    \begin{equation}
        z_j^n = \frac{\det(C_n\vert_j)}{\det(C_n)} = \frac{\det(C\vert_j)}{\det(C)} \pm O(1/n) = z_j^* \pm O(1/n) \notag
    \end{equation}
    is well-defined also. 
    
    To conclude we consider $z_0^n = z_0^* \pm O(1/n) = \sum_{b=0}^t \frac{(-1)^b}{b!} \pm O(1/n)$. The distribution $\cD_{t,n}$ satisfies the conditions of \Cref{lem:CRSfromZbound} by construction. Therefore by \Cref{lem:CRSfromZbound} no $c$-balanced CRS for $\cD_{t,n}$ exists for $c > 1 - z_0^n = \phi(t) + O(1/n)$, as claimed.
\end{proof}

Observe from \eqref{eq:twiseapproxseries} that $\phi(t) > 1-1/e$ for odd $t$, while the $\phi(t)$ for even $t$ approach $1-1/e$ from below. For odd $t = 2k + 1$ we may derive a $t$-wise independent distribution from $\cD_{2k,n}$ without changing the CRS upper bound, giving our general upper bound for all $t$.
\begin{theorem}[$t$-wise Independent Prophet Inequality Upper Bound]
    \label{thm:twise-CRS-upperbound}
    For every integer $t \geq 2$, there are $t$-wise independent distributions for which no single item $c$-balanced CRS exist when $c > \phi(t) + O(1/n)$.
\end{theorem}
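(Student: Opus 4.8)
The plan is to split on the parity of $t$. For even $t$ there is nothing new to prove: then $2\lfloor t/2\rfloor = t$, so $\phi(t)=1-\sum_{k=0}^{t}(-1)^k/k!$, and \Cref{lem:Dtn-instance-exists} applied to $\cD_{t,n}$ already exhibits a $t$-wise independent symmetric instance with marginals $1/n$ on which, by \Cref{lem:CRSfromZbound}, no $c$-balanced CRS can beat $1-z_0^n = \phi(t)+O(1/n)$. So assume $t=2k+1$ is odd, in which case $\phi(t)=1-\sum_{j=0}^{2k}(-1)^j/j! = \phi(2k)$ (in the notation of \Cref{thm:twise-CRS-tight-bounds}). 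It then suffices to produce a $(2k+1)$-wise independent symmetric distribution with marginals $1/n$ whose weight $z_0$ equals $\sum_{j=0}^{2k}(-1)^j/j!\pm O(1/n)$, since \Cref{lem:CRSfromZbound} will convert this into the claimed CRS upper bound of $1-z_0$.

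For the construction I would mimic the passage from the $2$-wise to the $3$-wise hard instance in the proof of \Cref{thm:23wise-CRS-upperbound}: start from $\cD_{2k,n}$, which is supported on sets of size at most $2k$, and add a tiny amount of probability mass on the single set $R=[n]$. Concretely, look for a symmetric distribution with $z_j = 0$ for all $j\in\{2k+1,\dots,n-1\}$ and the remaining $2k+2$ weights $z_0,\dots,z_{2k},z_n$ free, and impose $\ell$-wise independence for every $\ell\in\{0,\dots,2k+1\}$. Running the computation in the proof of \Cref{lem:Dtn-instance-exists} gives a square $(2k+2)\times(2k+2)$ linear system: after the normalization of \eqref{eq:tIconstraints}, the $z_n$-column has entries $n^\ell$ (every $\ell$-subset sits inside $[n]$), so the $\ell=2k+1$ equation reads $n^{2k+1}z_n = 1$ and forces $z_n = n^{-(2k+1)}$; substituting this back, the equations for $\ell\le 2k$ are exactly the upper-triangular system \eqref{eq:tIconstraints} for $\cD_{2k,n}$ but with each right-hand side $1$ replaced by $1-n^{\ell-2k-1}=1-O(1/n)$. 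This is precisely why one puts the extra mass on $[n]$ rather than simply using $\cD_{2k+1,n}$: a size-$(2k+1)$ support would carry a constant amount of weight and move $z_0$ by a constant, whereas the size-$n$ perturbation moves it only by $O(1/n)$.

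The solvability of this system and the estimate on $z_0$ then follow the template of \Cref{lem:Dtn-instance-exists} essentially verbatim. As $n\to\infty$ the factors $G(n,\ell)$ tend to $1$ and the coefficient matrix converges entrywise to a fixed matrix that is upper triangular away from the $z_n$-column with nonzero diagonal, hence nonsingular; so by Cramer's rule and the Leibniz expansion of the determinants the finite-$n$ solution lies within $O(1/n)$ of the limiting one, which in turn lies within $O(1/n)$ of the $\cD_{2k,n}$-solution because the perturbed right-hand side differs from the $\cD_{2k,n}$ one by $O(1/n)$. Consequently $\|z\|_1=1$, the distribution has marginals $1/n$ (the $\ell=1$ equation) and is $(2k+1)$-wise independent by construction, and $z_0 = \sum_{j=0}^{2k}(-1)^j/j!\pm O(1/n)$; \Cref{lem:CRSfromZbound} then yields that no $c$-balanced CRS for it exists when $c>1-z_0 = \phi(2k)+O(1/n)=\phi(t)+O(1/n)$. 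Combined with the even case, this proves the theorem.

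The one delicate point — and the main obstacle — is nonnegativity of the perturbed weights: the limiting $\cD_{2k,n}$-weight can vanish in some coordinate (for instance $z_{2k-1}^* = \tfrac1{(2k-1)!}(1-1)=0$), so one cannot conclude $z_j\ge 0$ purely from "a small perturbation of a positive number". I would resolve this by tracking the $1/n$-order term in the asymptotic expansion of the offending $z_j$ (governed by the expansion $G(n,\ell)=1+O(1/n)$ together with Cramer's rule) and checking its sign — exactly the verification already needed to justify that $\cD_{2k,n}$ itself is well-defined in \Cref{lem:Dtn-instance-exists} — and observing that the additional shift $-n^{\ell-2k-1}$ in the right-hand side is of strictly smaller order and hence cannot flip that sign. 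Everything else is the same routine determinant-perturbation bookkeeping used in the proof of \Cref{lem:Dtn-instance-exists}.
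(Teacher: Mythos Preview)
Your proposal is correct and follows essentially the same approach as the paper: for even $t$ invoke \Cref{lem:Dtn-instance-exists} directly, and for odd $t=2k+1$ build a symmetric distribution supported on sizes $\{0,\dots,2k\}\cup\{n\}$, solve the resulting $(2k+2)\times(2k+2)$ system, observe that the $\ell=2k+1$ equation forces $z_n=n^{-(2k+1)}$ and hence perturbs the $\cD_{2k,n}$-system only at order $O(1/n)$, and then run the same Cramer's-rule/determinant-perturbation bookkeeping as in \Cref{lem:Dtn-instance-exists} before applying \Cref{lem:CRSfromZbound}. Your flagged nonnegativity issue at $z_{2k-1}^*=0$ is a genuine subtlety that the paper's proof also does not spell out; your proposed resolution (tracking the leading $1/n$-term) is the natural one.
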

\begin{proof}
    The case of even $t = 2k$ is established by \Cref{lem:Dtn-instance-exists}. It remains to show that the odd $t=2k + 1$ cases may be reduced to the even $t=2k$ cases. This reduction will match the relationship between the $2$- and $3$-wise independent instances that prove \Cref{thm:23wise-CRS-upperbound}.

    To construct tighter symmetric distributions $\cD_{t,n}'$ for odd $t=2k+1$ we will choose the unique symmetric distribution with marginals $x_i = 1/n$ supported on $z_0, \ldots, z_{t-1}, z_n$ (as opposed to support on $z_0, \ldots, z_t$ as in \Cref{lem:Dtn-instance-exists}). For this choice of nonzero $z_j$ the constraints for $t$-wise independence (analogous to \eqref{eq:tIconstraints}) will be 
    \begin{align}
        \sum_{j = \ell}^{t-1} G(n, \ell) \cdot c_{\ell,j} \cdot z_j + G'(n, \ell) \cdot z_n &= 1, \qquad \ell \in \{0, \ldots, t-1\} \label{eq:tIconstraintsodd} \\
        G'(n,\ell) \cdot z_n &= 1, \qquad \ell = t \notag
    \end{align}
    where $G'(n, \ell) = n^\ell$ and the coefficient $c_{\ell,n} = 1$ implicitly. Since $\lim_{n \rightarrow \infty} G'(n, \ell) = \infty$, the simplified constraints will be identical to \eqref{eq:tIconstraintslimit} for $(t-1)$-wise independence, together with the constraint $z_n^* = 0$, and so the limiting solutions will be $z_j^* = \frac{1}{j!} \sum_{b=0}^{t-j-1}\frac{(-1)^b}{b!}$ for $0 \leq j \leq t-1$, together with $z_n^* = 0$.

    The argument that values $z_j^n$ satisfying \eqref{eq:tIconstraintsodd} exist, are well-defined, and asymptotically equal $z_j^* \pm O(1/n)$ proceeds along the same lines as the proof of \Cref{lem:Dtn-instance-exists}. Setting $z_n^n = 1/n^t$ introduces only a lower-order perturbation to the system of equations, since for all $\ell \leq t-1$ the additional term is only $G'(n, \ell) z_n^n = n^{t-\ell} \leq n^{-1}$ and therefore of lower order.

    Finally \Cref{lem:CRSfromZbound} again applies, and the value of $1 - z_0^n = 1 - \sum_{b=0}^{t-1}\frac{(-1)^b}{b!} \pm O(1/n) = \phi(t-1) + O(1/n)$ provides the stated guarantee for odd $t$.
\end{proof}

\subsubsection{Lower Bounds for $t$-wise Independent CRSs}
\label{sec:single-item-CRS}

We turn to proving a tight lower bound by giving a single-item CRS algorithm for $t$-wise independent distributions whose approximation ratio matches the upper bounds given in the previous section. Our first goal is to show the following lower bound for the probability of union of $t$-wise independent events, which we will soon use to estimate the approximation ratio of the best CRS.

\begin{lemma}\label{cor:lb-union}
    Let $A_1,\ldots,A_n$ be $t$-wise independent random events for some even integer $t$. Define $x_i:=\Pr[A_i]$, and $c:=\sum_{i=1}^n x_i$. If $0\le c\le 1$, then
    $$
        \Pr\bigg[\bigvee_{i=1}^n A_i\bigg]\ge\sum_{k=1}^t(-1)^{k-1}\frac{c^k}{k!}.
    $$
\end{lemma}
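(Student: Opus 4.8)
The plan is to first reduce the bound to a Bonferroni-type inequality expressed through the elementary symmetric polynomials of the $x_i$, and then to compare those polynomials with the ``Poissonized'' coefficients $c^k/k!$.

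For the first step, let $N:=\sum_{i=1}^n \mathbf{1}[A_i]$ count how many events occur, so that $\Pr[\bigvee_i A_i]=\Pr[N\ge 1]$. The engine is the elementary identity $\sum_{k=0}^{t}(-1)^k\binom{N}{k}=(-1)^t\binom{N-1}{t}$ (telescope using Pascal's rule), which for even $t$ reads $\binom{N-1}{t}\ge 0$ for every integer $N\ge 0$. Rearranged, this says $\mathbf{1}[N\ge 1]\ge\sum_{k=1}^{t}(-1)^{k-1}\binom{N}{k}$ pointwise (with equality when $N=0$). Taking expectations and writing $\binom{N}{k}=\sum_{|S|=k}\prod_{i\in S}\mathbf{1}[A_i]$ gives $\Exp[\binom{N}{k}]=\sum_{|S|=k}\Pr[\bigcap_{i\in S}A_i]=\sum_{|S|=k}\prod_{i\in S}x_i=e_k(x_1,\ldots,x_n)$ for all $k\le t$, where $k$-wise independence (a consequence of $t$-wise independence) is exactly what makes the joint probabilities factor. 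Hence
\[
\Pr\Big[\bigvee_{i=1}^n A_i\Big]\ \ge\ \sum_{k=1}^{t}(-1)^{k-1}e_k(x_1,\ldots,x_n),
\]
and it remains to prove $\sum_{k=1}^{t}(-1)^{k-1}e_k(x)\ge\sum_{k=1}^{t}(-1)^{k-1}\tfrac{c^k}{k!}$, equivalently $\sum_{k=0}^{t}(-1)^k e_k(x)\le\sum_{k=0}^{t}(-1)^k\tfrac{c^k}{k!}$ (the $k=0$ terms are both $1$).

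I would establish this comparison by smoothing. Write $G(x):=\sum_{k=1}^{t}(-1)^{k-1}e_k(x)$, viewed as a function of a nonnegative vector of any length. The sub-fact needed is that $\sum_{j=0}^{m}(-1)^j e_j(y)\ge 0$ for every even $m$ and every vector $y$ with entries in $[0,1]$; this follows from the same binomial identity, since $e_j(y)=\Exp[\binom{N_y}{j}]$ for $N_y$ a sum of independent $\mathrm{Ber}(y_i)$'s, so $\sum_{j=0}^{m}(-1)^j e_j(y)=\Exp[(-1)^m\binom{N_y-1}{m}]=\Exp[\binom{N_y-1}{m}]\ge 0$. Now split a coordinate $x_1$ into two copies of $x_1/2$: using $e_k(a,a,z)=e_k(z)+2a\,e_{k-1}(z)+a^2e_{k-2}(z)$ against $e_k(2a,z)=e_k(z)+2a\,e_{k-1}(z)$, the value of $G$ changes by exactly $-a^2\sum_{j=0}^{t-2}(-1)^j e_j(z)\le 0$, where $z$ lists the remaining coordinates and $t-2$ is even. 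Iterating — halve every coordinate $L$ times — only decreases $G$, while producing a vector $v^{(L)}$ whose coordinate sum stays $c$ and whose largest coordinate tends to $0$; since $0\le c^k-k!\,e_k(v)\le\binom{k}{2}(\max_j v_j)\,c^{k-1}$, we get $e_k(v^{(L)})\to c^k/k!$ and hence $G(v^{(L)})\to\sum_{k=1}^{t}(-1)^{k-1}c^k/k!$. Therefore $G(x)\ge\sum_{k=1}^{t}(-1)^{k-1}c^k/k!$, which combined with the Bonferroni bound proves the lemma. (Equivalently, the comparison is the statement $\Exp[\binom{N-1}{t}]\le\Exp[\binom{P-1}{t}]$ for $P\sim\mathrm{Poisson}(c)$, which follows from the convex order $\sum_i\mathrm{Ber}(x_i)\leq_{\mathrm{cx}}\mathrm{Poisson}(c)$ together with the discrete convexity of $N\mapsto\binom{N-1}{t}$, whose second difference is $\binom{N-1}{t-2}\ge 0$.)

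The genuinely delicate point is the comparison step: one cannot simply invoke Maclaurin's inequality $e_k(x)\le c^k/k!$ term by term, since the odd-indexed terms enter with the wrong sign, so the bound must be proved globally — via the smoothing above, or via the convex-order viewpoint. Everything else is routine: the binomial identity is a one-line telescoping, the expansion of $e_k$ under splitting a coordinate is mechanical, and $t$-wise independence is used only to evaluate the first $t$ inclusion--exclusion sums exactly.
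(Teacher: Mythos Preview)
Your proof is correct. Both your argument and the paper's begin with the Bonferroni bound (you derive it from the identity $\sum_{k=0}^{t}(-1)^k\binom{N}{k}=(-1)^t\binom{N-1}{t}$, the paper simply quotes it), and both then reduce to the purely analytic inequality $\sum_{k=1}^{t}(-1)^{k-1}e_k(x)\ge\sum_{k=1}^{t}(-1)^{k-1}c^k/k!$. The difference is in how this comparison is established. The paper keeps the dimension fixed at $n$ and shows, by averaging any unequal pair $(x_p,x_q)\mapsto(\tfrac{x_p+x_q}{2},\tfrac{x_p+x_q}{2})$, that the minimum is attained at the uniform vector $(c/n,\ldots,c/n)$; it then needs a separate lemma (their \Cref{lem:exp-series}) to compare $\sum_{k}(-1)^{k-1}\binom{n}{k}(c/n)^k$ with $\sum_{k}(-1)^{k-1}c^k/k!$ by hand. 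Your smoothing instead \emph{splits} a coordinate into two halves, increasing the dimension, and takes the limit directly to the Poisson coefficients $c^k/k!$; this collapses the paper's two steps into one and dispenses with the finite-$n$ comparison entirely. Your convex-order remark ($\sum_i\mathrm{Ber}(x_i)\le_{\mathrm{cx}}\mathrm{Poisson}(c)$ applied to the discretely convex $N\mapsto\binom{N-1}{t}$) is an even more conceptual packaging of the same inequality. Both routes are sound; yours is somewhat cleaner at the cost of invoking a limit, while the paper's stays finitary throughout.
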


To prove this, we make use of the Bonferroni inequalities, which generalize the inclusion-exclusion principle.
\begin{fact}[Bonferroni inequalities]\label{lem:bonferroni}
    Let $A_1,\ldots,A_n$ be $n$ random events. Then for any positive even integer $t$, we have
    \begin{equation}
        \Pr\bigg[\bigvee_{i=1}^nA_i\bigg]\ge\sum_{k=1}^t(-1)^{k-1}\sum_{S\sse[n],|S|=k}\Pr\bigg[\bigwedge_{i\in S}A_i\bigg]\label{eq:bonferroni}
    \end{equation}
\end{fact}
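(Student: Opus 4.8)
The plan is to reduce \eqref{eq:bonferroni} to a purely deterministic inequality between indicator functions and then take expectations; note that no independence is needed for this statement, so the $t$-wise hypothesis of the surrounding lemma is irrelevant here. Fix an outcome $\omega$ and let $m = m(\omega) := \#\{i : \omega \in A_i\}$ be the number of events occurring at $\omega$. At $\omega$, the indicator of $\bigvee_i A_i$ equals $\mathbbm{1}[m\geq 1]$, and the right-hand side of \eqref{eq:bonferroni} evaluated at $\omega$ equals $\sum_{k=1}^t (-1)^{k-1}\binom{m}{k}$, since exactly $\binom{m}{k}$ of the $k$-subsets $S\sse[n]$ have $\bigwedge_{i\in S}A_i$ occurring at $\omega$. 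Hence it suffices to prove the pointwise bound
$$
\mathbbm{1}[m\geq 1] \;\geq\; \sum_{k=1}^{t}(-1)^{k-1}\binom{m}{k}
\qquad\text{for every integer } m\geq 0 \text{ and every positive even } t,
$$
and then integrate over $\omega$: by linearity of expectation together with $\Exp\!\big[\binom{m}{k}\big] = \sum_{S\sse[n],|S|=k}\Pr[\bigwedge_{i\in S}A_i]$, this yields exactly \eqref{eq:bonferroni}.

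For the pointwise bound, the case $m=0$ is immediate since $\binom{0}{k}=0$ for all $k\geq 1$. So assume $m\geq 1$; then the claim becomes $\sum_{k=0}^{t}(-1)^{k}\binom{m}{k}\geq 0$. The key ingredient is the telescoping identity
$$
\sum_{k=0}^{t}(-1)^{k}\binom{m}{k} \;=\; (-1)^{t}\binom{m-1}{t},
$$
valid for all $m\geq 1$ and $t\geq 0$ with the convention $\binom{m-1}{t}=0$ when $t>m-1$. One proves it by a one-line induction on $t$ using Pascal's rule $\binom{m}{k}=\binom{m-1}{k}+\binom{m-1}{k-1}$, which makes consecutive terms cancel in pairs and leaves only the boundary term. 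Since $t$ is even, $(-1)^t\binom{m-1}{t}=\binom{m-1}{t}\geq 0$, which establishes the pointwise inequality and hence \eqref{eq:bonferroni}.

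I do not expect a genuine obstacle here — the content is the signed binomial identity above plus the interchange of a finite sum with an expectation, both routine. The one place that requires care is the parity bookkeeping: the direction of \eqref{eq:bonferroni} flips (it becomes an upper bound) when $t$ is odd, and this is controlled precisely by the sign $(-1)^t$ in the identity, so the even-$t$ hypothesis must be invoked exactly at that step. An alternative route is induction on $n$, peeling off $A_n$ and splitting $\bigvee_{i\le n}A_i$ as $\bigvee_{i<n}A_i$ together with the part of $A_n$ disjoint from it, while tracking how each partial-sum term decomposes; but the indicator-function argument is cleaner and exposes the role of the evenness assumption transparently.
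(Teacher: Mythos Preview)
Your argument is correct: reducing to the pointwise indicator inequality, evaluating the right-hand side at an outcome with $m$ active events as $\sum_{k=1}^t(-1)^{k-1}\binom{m}{k}$, and then invoking the alternating-sum identity $\sum_{k=0}^t(-1)^k\binom{m}{k}=(-1)^t\binom{m-1}{t}$ together with the evenness of $t$ is exactly the standard proof of the Bonferroni inequalities. The paper itself does not supply a proof --- it records the statement as a \emph{Fact} and uses it as a black box --- so there is nothing to compare against; your write-up simply fills in a classical derivation that the paper elected to cite rather than reproduce.
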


Still, let $A_1,\ldots,A_n$ be $n$ random events, and let $x_i:=\Pr[A_i]$. Suppose the events $\{A_i\}_{i=1}^n$ are $t$-wise independent for some positive even integer $t$, then \cref{lem:bonferroni} already shows that
\begin{equation}
    \Pr\bigg[\bigvee_{i=1}^nA_i\bigg]\ge\sum_{k=1}^t(-1)^{k-1}\sum_{S\sse[n],|S|=k}\ \prod_{i\in S}x_i \label{eq:twise-ind}
\end{equation}
Let $c:=\sum_{i=1}^nx_i$. We ask the following question: for some fixed $c$, what is the minimum value of RHS of \cref{eq:twise-ind}? In fact, for $c\le 1$, we are able to show that the minimum is achieved when $x_1=\cdots=x_n=\frac cn$.

\begin{claim}\label{lem:twise-opt}
    Let $x_1,\ldots,x_n$ be $n$ variables, let $c\in[0,1]$ be some fixed number, and let $t$ be some positive even integer. Then for the optimization problem
    \begin{align*}
        \min_{\bx}\quad&f_t(\bx)=\sum_{k=1}^t(-1)^{k-1}\sum_{S\sse[n],|S|=k}\ \prod_{i\in S}x_i\\
        \text{s.t.}\quad&\bx\ge\mathbf0\\
        &\sum_{i=1}^n x_i=c,
    \end{align*}
    The unique minimum value is $\sum_{k=1}^t(-1)^{k-1}\binom nk\frac{c^k}{n^k}$, obtained when $x_1=\cdots=x_n=\frac cn$.
\end{claim}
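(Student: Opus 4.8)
The plan is a symmetrization argument. Since the feasible simplex $\{\bx \ge 0 : \sum_i x_i = c\}$ is compact and $f_t$ continuous, a minimizer $\bx^*$ exists; I will argue that if $\bx^*$ has two unequal coordinates then it admits a strict improvement, so the only minimizer is $(c/n,\dots,c/n)$. Throughout, write $e_k(\cdot)$ for the $k$-th elementary symmetric polynomial, so $f_t(\bx) = \sum_{k=1}^t (-1)^{k-1} e_k(\bx)$ (the case $n=1$ is trivial, so assume $n \ge 2$).

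Suppose $x^*_a \ne x^*_b$ for some $a \ne b$, and set $s := x^*_a + x^*_b > 0$. I would freeze every coordinate except $x_a, x_b$ — call the frozen length-$(n-2)$ nonnegative sub-vector $\by$ — and restrict $f_t$ to the segment $\{x_a + x_b = s\}$. Substituting the two-variable identity $e_k(\bx) = e_k(\by) + s\, e_{k-1}(\by) + (x_a x_b)\, e_{k-2}(\by)$ (with $e_{-1} \equiv 0$) into $f_t$ and collecting terms yields $f_t(\bx) = c_0 - g_{t-2}(\by)\cdot x_a x_b$, where $c_0$ depends only on $s$ and $\by$ and $g_r(\by) := \sum_{j=0}^r (-1)^j e_j(\by)$. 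Thus, provided $g_{t-2}(\by) > 0$, the restriction of $f_t$ is strictly decreasing in $x_a x_b$; since $x_a x_b = x_a(s-x_a)$ is strictly maximized at $x_a = x_b = s/2$, replacing $(x^*_a, x^*_b)$ by $(s/2, s/2)$ strictly lowers $f_t$, contradicting minimality of $\bx^*$. Hence every minimizer has all coordinates equal, so $\bx^* = (c/n,\dots,c/n)$ is the unique minimizer, and evaluating gives the value $\sum_{k=1}^t (-1)^{k-1} \binom{n}{k} c^k/n^k$.

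The one step needing a genuine idea — and the crux I expect to be the obstacle — is the positivity $g_{t-2}(\by) > 0$. The coordinates of $\by$ are nonnegative and sum to $c - s < c \le 1$, hence each $y_i$ lies in $[0,1)$; I would read them as probabilities of independent events $A_1, \dots, A_{n-2}$. Because $t-2$ is a nonnegative \emph{even} integer, the bound \eqref{eq:twise-ind} (with $t$ replaced by $t-2$ and the $n-2$ events $A_i$, using $\sum_{|S|=k}\prod_{i\in S}y_i = e_k(\by)$) gives $\Pr\big[\bigvee_i A_i\big] \ge \sum_{k=1}^{t-2} (-1)^{k-1} e_k(\by)$, equivalently $\prod_i(1-y_i) = \Pr\big[\bigwedge_i \overline{A_i}\big] \le g_{t-2}(\by)$; and $\prod_i(1-y_i) > 0$ since every $y_i < 1$ (when $n=2$ there is nothing to prove and $g_0(\by) = 1$).

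If one prefers to avoid probabilistic language, the same inequality $g_r(\by) \ge \prod_i(1-y_i)$ for even $r$ and $y_i \in [0,1]$ can be obtained directly by induction on $r$ (e.g. by pairing consecutive terms $(-1)^j e_j + (-1)^{j+1} e_{j+1}$, or by reapplying the two-variable identity), which is essentially a hands-on reproof of Bonferroni. Everything else in the argument is routine bookkeeping.
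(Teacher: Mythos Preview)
Your proof is correct and follows essentially the same symmetrization argument as the paper: both fix two unequal coordinates, show that averaging them strictly decreases $f_t$, and identify the crux as the positivity of $g_{t-2}(\by)=\sum_{j=0}^{t-2}(-1)^j e_j(\by)$. Your decomposition $f_t(\bx)=c_0-g_{t-2}(\by)\,x_a x_b$ along the segment is in fact cleaner than the paper's stated factorization, and your appeal to Bonferroni for the positivity is exactly the pairing computation the paper carries out by hand.
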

\begin{proof}
    Suppose for contradiction that for some $p$ and $q$ we have $x_p\ne x_q$ in the optimal solution $\bx_{opt}$ (which must exist since $f_t$ is continuous and the feasible set is compact). Then define $\bx_{opt}'$ as the solution where we substitute $(x_p,x_q)$ by $(\frac{x_p+x_q}2,\frac{x_p+x_q}2)$ in $\bx_{opt}$. We will argue that $f_t(\bx_{opt}')<f_t(\bx_{opt})$.

    One can verify that $f_t(\bx_{opt})$ and $f_t(\bx_{opt}')$ can be represented as
    \begin{align*}
        f_t(\bx_{opt})&=1-(1-x_p)(1-x_q)\bigg(\sum_{k=0}^{t-2}(-1)^k\sum_{S\sse[n]\setminus\{p,q\},|S|=k}\ \prod_{i\in S}x_i\bigg)\\
        f_t(\bx_{opt}')&=1-\Big(1-\frac{x_p+x_q}2\Big)\Big(1-\frac{x_p+x_q}2\Big)\bigg(\sum_{k=0}^{t-2}(-1)^k\sum_{S\sse[n]\setminus\{p,q\},|S|=k}\ \prod_{i\in S}x_i\bigg).
    \end{align*}
    The sum in the second term is positive because
    \begin{align*}
        \sum_{k=0}^{t-2}(-1)^k\sum_{S\sse[n]\setminus\{p,q\},|S|=k}\ \prod_{i\in S}x_i &\ge\sum_{k=0}^{\frac t2-2}\bigg(\sum_{S\sse[n]\setminus\{p,q\},|S|=2k}\ \prod_{i\in S}x_i\bigg)-\bigg(\sum_{S\sse[n]\setminus\{p,q\},|S|=2k+1}\ \prod_{i\in S}x_i\bigg)\\
        &\ge\sum_{k=0}^{\frac t2-2}\bigg(\sum_{S\sse[n]\setminus\{p,q\},|S|=2k}\ \prod_{i\in S}x_i\bigg)\bigg(1-\sum_{j\in[n]\setminus\{p,q\}}x_j\bigg)\\
        &\ge1-\sum_{i\in[n]\setminus\{p,q\}} x_i\tag{dropping all terms except $k=0$}\\
        &>1-x\tag{$x_p+x_q>0$ since $x_p\ne x_q$ and $x_p,x_q\ge0$}\\
        &\ge0.
    \end{align*}
    Also for $x_p\ne x_q$, we have $(1-x_p)(1-x_q)<(1-\frac{x_p+x_q}2)^2$. Therefore we can conclude that $f_t(\bx_{opt}')<f_t(\bx_{opt})$, which contradicts the assumption that $\bx_{opt}$ is a optimal solution.
\end{proof}

Combining \cref{lem:bonferroni} and \cref{lem:twise-opt}, we have a lower bound for the probability of union of events:
$$
    \Pr\bigg[\bigvee_{i=1}^nA_i\bigg]\ge\sum_{k=1}^t(-1)^{k-1}\frac{c^k}{n^k}\binom nk
$$
Expanding terms shows that this lower bound is $\sum_{k=1}^t(-1)^{k-1}\frac{c^k}{k!} - O(\frac 1n)$ for constant $t$. In fact, a more careful analysis shows it is actually no less than $\sum_{k=1}^t(-1)^{k-1}\frac{c^k}{k!}$.

\begin{claim}\label{lem:exp-series}
    For real number $c\in[0,1]$ and positive even integer $t$, we have
    $$
        \sum_{k=1}^t(-1)^{k-1}\frac{c^k}{n^k}\binom nk\ge\sum_{k=1}^t(-1)^{k-1}\frac{c^k}{k!}.
    $$
\end{claim}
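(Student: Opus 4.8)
The plan is to turn the inequality into a statement about a signed sum of nonnegative ``defects'' and then control that sum by pairing terms cleverly. First I would write $Q_k := \prod_{i=0}^{k-1}\left(1-\frac{i}{n}\right)$, so that $\frac{1}{n^k}\binom{n}{k} = \frac{Q_k}{k!}$ (this is just $\binom nk = \frac{1}{k!}\prod_{i=0}^{k-1}(n-i)$ with each of the $k$ factors divided by $n$; it stays valid for $k>n$, where both sides vanish because a factor $1-n/n$ appears). Since $Q_k\in[0,1]$ for every $k$, the quantity $d_k := \frac{c^k}{k!}(1-Q_k)$ is nonnegative, and
\[
    \sum_{k=1}^t(-1)^{k-1}\frac{c^k}{n^k}\binom nk - \sum_{k=1}^t(-1)^{k-1}\frac{c^k}{k!}
    = \sum_{k=1}^t(-1)^{k-1}\frac{c^k}{k!}\,(Q_k-1)
    = \sum_{\substack{k \text{ even}\\ 2\le k\le t}} d_k \;-\; \sum_{\substack{k \text{ odd}\\ 1\le k\le t-1}} d_k .
\]
So it suffices to show $\sum_{k\text{ even}} d_k \ge \sum_{k\text{ odd}} d_k$.

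I would base the rest on two observations. First, $d_1 = c(1-Q_1) = 0$ because $Q_1=1$, so the odd sum is really $d_3+d_5+\cdots+d_{t-1}$. Second, the sequence $(d_k)_{k\ge 2}$ is non-increasing. Granting the second point, I pair each odd defect $d_{2j+1}$ with the even one $d_{2j}$ just below it: then $d_3+d_5+\cdots+d_{t-1} \le d_2+d_4+\cdots+d_{t-2} \le d_2+d_4+\cdots+d_t$, where the last step just drops the nonnegative term $d_t$ (using that $t$ is even). This is exactly the desired bound, so everything reduces to the monotonicity $d_{k+1}\le d_k$ for $k\ge 2$.

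For the monotonicity, assume $c>0$ (the case $c=0$ is trivial) and use $Q_{k+1}=Q_k\left(1-\frac{k}{n}\right)$. Dividing $d_{k+1}\le d_k$ through by $c^k/k!$ and substituting $1-Q_{k+1} = (1-Q_k)+\frac{kQ_k}{n}$ reduces the claim to
\[
    \frac{c\,k\,Q_k}{n} \le (k+1-c)\,(1-Q_k) .
\]
If $Q_k=0$ this is immediate (it says $c\le k+1$); otherwise divide by $Q_k$ and invoke the elementary estimate
\[
    \frac{1}{Q_k} = \prod_{i=1}^{k-1}\frac{1}{1-i/n} \ge \prod_{i=1}^{k-1}\Big(1+\frac{i}{n}\Big) \ge 1 + \frac{k(k-1)}{2n},
\]
which gives $\frac{1-Q_k}{Q_k} \ge \frac{k(k-1)}{2n}$. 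It therefore suffices to check $\frac{ck}{n} \le \frac{k(k-1)}{2n}(k+1-c)$, i.e.\ $2c \le (k-1)(k+1-c)$; and for $k\ge 2$, $c\in[0,1]$ we have $(k-1)(k+1-c)\ge 1\cdot k \ge 2 \ge 2c$, which finishes the argument.

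The substantive difficulty is not the computation but choosing the right pairing: the naive grouping $(d_1,d_2),(d_3,d_4),\dots$ fails since already $d_2>0=d_1$, so one must exploit the vanishing of $d_1$ and shift the pairing by one, comparing each odd defect against the larger even defect preceding it. Once that combinatorial structure is in place, the monotonicity $d_{k+1}\le d_k$ is a one-line consequence of $\frac{1}{1-x}\ge 1+x$.
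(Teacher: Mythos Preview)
Your proof is correct, and it takes a genuinely different route from the paper's. The paper groups the original terms of each side into consecutive pairs, writing
\[
\text{LHS} = c - \tfrac{c^t}{n^t}\tbinom{n}{t} - \sum_{k=1}^{t/2-1}\tfrac{c^{2k}}{n^{2k}}\tbinom{n}{2k}\Big(1-\tfrac{c(n-2k)}{n(2k+1)}\Big),
\]
does the same for the RHS, and then compares the three parts one by one, bounding the ratio of each paired summand directly. You instead pass to the defect sequence $d_k=\tfrac{c^k}{k!}(1-Q_k)$, reduce everything to the single structural fact that $(d_k)_{k\ge2}$ is non-increasing, and exploit $d_1=0$ to shift the natural odd/even pairing by one. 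Your argument is a bit more conceptual (one monotonicity lemma rather than a term-by-term ratio estimate) and also transparently handles the degenerate range $k>n$ via $Q_k=0$; the paper's computation achieves the same end with slightly more algebra but no auxiliary notion of ``defect''. Either way the hinge is the same elementary estimate $\tfrac{1}{1-x}\ge 1+x$, just packaged differently.
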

\begin{proof}
    We can regroup the terms in the LHS as
    \begin{align*}
        \sum_{k=1}^t(-1)^{k-1}\frac{c^k}{n^k}\binom nk &=c-\frac{c^t}{n^t}\binom nt-\sum_{k=1}^{\frac t2-1}\frac{c^{2k}}{n^{2k}}\binom{n}{2k}\Big(1-\frac{c(n-2k)}{n(2k+1)}\Big).
        \intertext{Similarly for the RHS,}
        \sum_{k=1}^t(-1)^{k-1}\frac{c^k}{k!} &=c-\frac{c^t}{t!}-\sum_{k=1}^{\frac t2-1}\frac{c^{2k}}{(2k)!}\Big(1-\frac{c}{2k+1}\Big).
    \end{align*}
    To compare LHS and RHS, we compare each of the three summands separately. For the first part, obviously $c\ge c$. For the second part, one can verify that
    $-\frac{c^t}{n^t}\binom nt\ge-\frac{c^t}{t!}.$
    Each term in the sum of the third part is positive, so to compare LHS and RHS, we take the quotient between them:
    \begin{align*}
        \frac{\frac{c^{2k}}{n^{2k}}\binom{n}{2k}(1-\frac{c(n-2k)}{n(2k+1)})}{\frac{c^{2k}}{(2k)!}(1-\frac{c}{2k+1})}
        &=\frac{\prod_{i=0}^{2k-1}(n-i)}{n^{2k}}\Big(1+\frac{2kc}{n(2k+1-c)}\Big) \le\Big(1-\frac1n\Big)\Big(1+\frac{2kc}{n(2k+1-c)}\Big)\\
        &\le\Big(1-\frac1n\Big)\Big(1+\frac1n\Big)\tag{$c\le1$ implies $\frac{2kc}{2k+1-c}\le1$}\\
        &\le1.
    \end{align*}
    Hence we conclude that $\text{LHS}\ge\text{RHS}$.
\end{proof}

Combining \cref{lem:bonferroni}, \cref{lem:twise-opt} and \cref{lem:exp-series}, we may conclude \cref{cor:lb-union} as a corollary. Finally, we can combine this with the following result of \textcite{DBLP:journals/siamcomp/ChekuriVZ14} to prove our lower bound.

\begin{theorem}[Section 4.2 from \cite{DBLP:journals/siamcomp/ChekuriVZ14}]
    Let $M=(E,\cI)$ be a matroid, and $\cD$ a distribution over subsets of $N$. Let $R\sse E, R\sim \cD$ denote the random active set of elements, and $x_i=\Pr[i\in R]$ denote the probability of $i$ being \emph{active}. Then the optimum CRS for $\cD$ is $c(M,\cD)$-balanced, where $c(M,\cD)$ is defined by
    $$
        c(M,\cD):=\min_{\by\ge0}\frac{\mathbb E_{R\sim \cD}[\rank_{\by}(R)]}{\sum_{i\in E}x_iy_i}.
    $$
    Here $\rank_\by(R)$ is defined as $\rank_\by(R):=\max_{I\sse R,I\in\cI}\sum_{i\in I}y_i$.
\end{theorem}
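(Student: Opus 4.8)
The plan is to derive this characterization from linear-programming duality, treating the best achievable balancedness as the value of an LP over (randomized) contention resolution schemes. Fix $M=(E,\cI)$ and $\cD$ with marginals $x_i=\Pr_{R\sim\cD}[i\in R]$, and assume first that $\cD$ is finitely supported (the general case I would handle afterward by discretization and a limiting argument). A randomized CRS is determined, for each realizable active set $S$, by a distribution over the independent subsets of $S$; I would encode it by variables $p^S_I\ge 0$ for $I\sse S$, $I\in\cI$, with $\sum_{I\sse S,\,I\in\cI}p^S_I\le 1$ (relaxing ``$=1$'' to ``$\le1$'' is harmless since $\varnothing\in\cI$, so idling is always available). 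The crucial point is that the balancedness requirement $\Pr[i\in\pi(R)\mid i\in R]\ge c$, after multiplying through by $x_i$, is the \emph{linear} constraint $\sum_{S\ni i}\Pr[R=S]\sum_{I\ni i,\,I\sse S,\,I\in\cI}p^S_I\ge c\,x_i$. Hence the optimal balancedness is the optimum of the LP: maximize $c$ over $c\ge0$ and $p\ge0$, subject to $\sum_{I\sse S,\,I\in\cI}p^S_I\le1$ for every $S$ and the displayed constraint for every $i\in E$. This LP is feasible ($c=0$, $p^S_\varnothing=1$) and bounded ($c\le1$, since $\sum_{S\ni i}\Pr[R=S]=x_i$), and in the finite-support case its optimum is attained.

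Next I would compute the dual, with multipliers $\mu_S\ge0$ for the constraints $\sum_I p^S_I\le1$ and $y_i\ge0$ for the balancedness constraints. The variable $c$ yields the dual constraint $\sum_{i\in E}x_iy_i\ge1$, and each variable $p^S_I$ yields $\mu_S\ge\Pr[R=S]\sum_{i\in I}y_i$; the dual objective is to minimize $\sum_S\mu_S$. At an optimal dual solution one may take $\mu_S=\Pr[R=S]\cdot\max_{I\sse S,\,I\in\cI}\sum_{i\in I}y_i=\Pr[R=S]\cdot\rank_{\by}(S)$, so the dual value is $\min_{\by\ge0,\ \sum_i x_iy_i\ge1}\mathbb E_{R\sim\cD}[\rank_{\by}(R)]$. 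Since $\rank_{\by}$ is positively homogeneous in $\by$, shrinking $\by$ so that $\sum_i x_iy_i=1$ only lowers the objective, so this equals $\min_{\by\ge0}\mathbb E_{R\sim\cD}[\rank_{\by}(R)]\big/\sum_{i\in E}x_iy_i=c(M,\cD)$.

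To finish I would invoke strong LP duality to conclude that the primal optimum equals $c(M,\cD)$: no CRS beats $c(M,\cD)$-balancedness, and because the (finite) primal optimum is attained, an optimal primal solution $\{p^S_I\}$ is itself an exactly $c(M,\cD)$-balanced CRS. Complementary slackness with a minimizing $\by^\star$ then explains the usual structure: the optimal CRS is tight for every element $i$ with $y^\star_i>0$, selecting each such element with conditional probability exactly $c(M,\cD)$.

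I expect the only real obstacle to be rigor when $\cD$ is large or continuous, where the LP is infinite-dimensional: one must either reduce to the finite-support case (discretize the values, pass to a limit, and use that $c(M,\cD)$ behaves continuously) or appeal to an infinite-dimensional duality theorem after checking the requisite attainment and closedness conditions. The duality bookkeeping itself, and the verification that relaxing the per-realization equality to an inequality does not change the optimum, are routine by comparison.
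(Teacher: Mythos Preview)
The paper does not prove this theorem; it is quoted verbatim from \cite{DBLP:journals/siamcomp/ChekuriVZ14} and used as a black box in the proof of \Cref{lemma-single-ratio}. Your LP-duality derivation is correct and is essentially the original argument of Chekuri, Vondr\'ak, and Zenklusen: encode a randomized CRS as a (primal) LP in the variables $p^S_I$, observe that the balancedness constraint is linear, and read off the dual as a minimization over nonnegative weights $\by$ whose objective collapses to $\mathbb{E}_{R\sim\cD}[\rank_{\by}(R)]/\sum_i x_i y_i$ by homogeneity. So there is nothing to compare against in this paper, and your sketch matches the source reference.

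One minor remark: your caveat about infinite-dimensional duality is not really needed here, since a CRS only depends on $\cD$ through the finitely many values $\Pr_{R\sim\cD}[R=S]$ for $S\sse E$; with $|E|=n$ finite, the primal LP has at most $2^n$ block constraints and finitely many variables regardless of how $\cD$ is presented, so ordinary finite LP duality suffices.
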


Our main goal in this section is to lower bound $c(M,\cD)$ where $M$ is a uniform matroid of rank 1, and $\cD$ is $t$-wise independent over the ground set. In particular, we will show the following lemma:

\begin{lemma}[$t$-wise Independent Prophet Inequality Lower Bound]\label{lemma-single-ratio}
    For any integer $t \geq 2$ and any $t$-wise independent distributions, there exists a $\phi(t)$-balanced single item CRS.
\end{lemma}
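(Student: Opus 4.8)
The plan is to reduce to the characterization of \cite{DBLP:journals/siamcomp/ChekuriVZ14} stated just above the lemma. Let $M$ be the rank-$1$ uniform matroid on $[n]$ and $\cD$ a $t$-wise independent distribution with marginals $x_i=\Pr[i\in R]$. The quoted theorem says the optimum CRS is $c(M,\cD)$-balanced, where $c(M,\cD)=\min_{\by\ge 0}\Exp_{R\sim\cD}[\rank_\by(R)]\big/\sum_i x_iy_i$, and for the rank-$1$ matroid $\rank_\by(R)=\max_{i\in R}y_i$ (with the empty maximum equal to $0$, using $\by\ge 0$). So it suffices to prove the pointwise bound $\Exp_{R\sim\cD}[\max_{i\in R}y_i]\ge\phi(t)\sum_i x_iy_i$ for every $\by\ge 0$, as this gives $c(M,\cD)\ge\phi(t)$ and hence the desired CRS. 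I would record two reductions first: since a $c$-balanced CRS only needs the guarantee for $\bx\in\cP_M=\{\bx\ge 0:\sum_i x_i\le 1\}$, we may assume $\sum_i x_i\le 1$; and since a $t$-wise independent distribution is $2\lfloor t/2\rfloor$-wise independent and $\phi(t)$ depends only on $2\lfloor t/2\rfloor$, we may assume $t$ is even — exactly the hypothesis needed for \Cref{cor:lb-union}.

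\textbf{Step 1 (layer-cake).} Fix $\by\ge 0$ and set $S_\theta:=\{i: y_i\ge\theta\}$ and $c(\theta):=\sum_{i\in S_\theta}x_i$. Integrating the identities $y_i=\int_0^\infty\mathbbm{1}[y_i\ge\theta]\,d\theta$ and $\max_{i\in R}y_i=\int_0^\infty\mathbbm{1}[\exists i\in R:y_i\ge\theta]\,d\theta$, and swapping the (nonnegative) sum/expectation with the integral, we obtain $\sum_i x_iy_i=\int_0^\infty c(\theta)\,d\theta$ and $\Exp_R[\max_{i\in R}y_i]=\int_0^\infty\Pr[\exists i\in S_\theta:i\in R]\,d\theta$. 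For each fixed $\theta$, the events $\{i\in R\}_{i\in S_\theta}$ are $t$-wise independent (a subfamily of a $t$-wise independent family) with marginals summing to $c(\theta)\le\sum_i x_i\le 1$, so \Cref{cor:lb-union} gives $\Pr[\exists i\in S_\theta:i\in R]\ge g(c(\theta))$, where $g(c):=\sum_{k=1}^t(-1)^{k-1}c^k/k!$; note $g(1)=\phi(t)$. It therefore remains to show $g(c)\ge c\,g(1)=c\,\phi(t)$ for all $c\in[0,1]$.

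\textbf{Step 2 (the concavity estimate and conclusion).} I would prove $g$ is concave on $[0,1]$: two differentiations give $g''(c)=-\sum_{m=0}^{t-2}(-1)^m c^m/m!$, and for $c\in[0,1]$ the terms $c^m/m!$ are nonincreasing in $m$, so the alternating partial sum ending at the positive term $m=t-2$ (recall $t$ is even) is nonnegative by pairing consecutive terms; hence $g''\le 0$ on $[0,1]$. Since $g(0)=0$ and $g(1)=\phi(t)$, concavity gives $g(c)\ge(1-c)g(0)+c\,g(1)=c\,\phi(t)$ on $[0,1]$. Chaining, $\Exp_R[\max_{i\in R}y_i]=\int_0^\infty\Pr[\exists i\in S_\theta:i\in R]\,d\theta\ge\int_0^\infty g(c(\theta))\,d\theta\ge\phi(t)\int_0^\infty c(\theta)\,d\theta=\phi(t)\sum_i x_iy_i$, so $c(M,\cD)\ge\phi(t)$ and the theorem of \cite{DBLP:journals/siamcomp/ChekuriVZ14} produces the $\phi(t)$-balanced single-item CRS. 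I do not expect a genuine obstacle: all the substantive probabilistic content sits in \Cref{cor:lb-union} (already proved above), and the rest is the routine concavity bound plus bookkeeping — the empty-max convention, the Tonelli swap, and lining up the even/odd conventions for $\phi$.
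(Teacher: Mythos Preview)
Your proposal is correct and follows the same high-level architecture as the paper: reduce to even $t$, invoke the \textcite{DBLP:journals/siamcomp/ChekuriVZ14} characterization, and show $\Exp[\max_{i\in R}y_i]\ge\phi(t)\sum_i x_iy_i$ pointwise by applying \Cref{cor:lb-union} to level sets of $\by$ and then handling the remaining scalar inequality $g(c)\ge c\,\phi(t)$ on $[0,1]$.

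The execution differs in two cosmetic but worthwhile ways. First, the paper sorts $\by$, sets up an auxiliary LP, and does a discrete Abel summation to reduce to $\min_i P_i/X_i$ where $P_i=\Pr[R\cap[i]\ne\varnothing]$ and $X_i=\sum_{j\le i}x_j$; your continuous layer-cake decomposition is the integral version of the same identity and bypasses the LP bookkeeping entirely. Second, for the scalar inequality the paper shows directly that $f_t(c)=g(c)/c$ is nonincreasing on $[0,1]$ via an explicit derivative computation, whereas you instead show $g$ is concave on $[0,1]$ (by pairing terms in $g''$) and use $g(0)=0$; these are equivalent, since concavity with $g(0)=0$ is exactly what makes $g(c)/c$ nonincreasing. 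Your packaging is slightly more streamlined, but neither route adds or removes any real content.
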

\begin{proof}
    Assume for now that $t$ is even. Let $p_i$ denote the probability that $y_i$ is the largest in $R$ (breaking ties towards the smallest index).
    When $M$ is a uniform matroid of rank 1, we can write
    $$
        c(M,\cD)=\min_{\by\ge0}\frac{\mathbb E_{R\sim\cD}[\max_{j\in R}y_j]}{\sum_{i\in E}x_iy_i} = \min_{\by\ge0}\frac{\sum_{i\in E}p_iy_i}{\sum_{i\in E}x_iy_i}.
    $$
    
    To bound this quantity, we renumber the items such that $y_1\ge y_2\ge\cdots\ge y_n$ and consider the following linear program:
    \begin{align*}
        \min\quad&\sum_{i=1}^np_iy_i\\
        \text{s.t.}\quad&y_1\ge\cdots\ge y_n\ge0\\
        &\sum_{i=1}^nx_iy_i=1.
    \end{align*}
    Notice that once the relative order of $y_i$ is fixed, $\bp$ is also fixed. Therefore $\bp$ in the above optimization problem can be treated as constants. To bound this LP, we use the sum by parts trick:
    $$
        \sum_{i=1}^nx_iy_i=\sum_{i=1}^n(y_i-y_{i+1})\bigg(\sum_{j=1}^ix_j\bigg),
    $$
    where we use the convention that $y_{n+1}=0$. Now if we define $Y_i:=y_i-y_{i+1}$, $P_i:=\sum_{j=1}^i p_j$, and $X_i:=\sum_{j=1}^i x_j$, our linear program can be rewritten as
    \begin{align*}
        \min\quad&\sum_{i=1}^nP_iY_i\\
        \text{s.t.}\quad&\mathbf Y\ge\mathbf0\\
        &\sum_{i=1}^n X_iY_i=1.
    \end{align*}
    The minimum of this program is $\min_i\frac{P_i}{X_i}=\min_i\frac{\sum_{j=1}^ip_j}{\sum_{j=1}^ix_j}$. Now consider the meaning of $\sum_{j=1}^ip_j$:
    \begin{align*}
        \sum_{j=1}^ip_j&=\Pr\Big[\argmax_{j\in R}y_j\in[i]\Big] =\Pr[R\cap[i]\ne\emptyset] =\Pr\bigg[\bigvee_{j=1}^i \{j\in R \}\bigg].
    \end{align*}
    Let $c_i:=\sum_{j=1}^i x_j$. Then by \cref{cor:lb-union}, $\Pr[\bigvee_{j=1}^i \{j\in R\}]\ge\sum_{k=1}^t(-1)^{k-1}\frac{c_i^k}{k!}$. Therefore the minimum of the linear program is at least
    $$
        \frac{\sum_{k=1}^t(-1)^{k-1}\frac{c_i^k}{k!}}{c_i}=\sum_{k=1}^t(-1)^{k-1}\frac{c_i^{k-1}}{k!}
    $$
    for some $i$.
    We claim that $\sum_{k=1}^t(-1)^{k-1}\frac{c_i^{k-1}}{k!}\ge\phi(t)=\sum_{k=1}^t(-1)^{k-1}\frac{1}{k!}$. To see this, define $f_t(x)$ as
    $$
        f_t(x):=\sum_{k=1}^t(-1)^{k-1}\frac{x^{k-1}}{k!}.
    $$
    Our claim can be rewritten as $f_t(c_i)\ge f_t(1)$, so it suffices to show that $f_t'(x)\le0$ in $[0,1]$. We have
    \begin{align*}
        f_t'(x)&=\sum_{k=2}^t(-1)^{k-1}\frac{(k-1)x^{k-2}}{k!} =-\frac{(t-1)x^{t-2}}{t!}-\sum_{k=1}^{\frac t2-1}\Big(\frac{(2k-1)x^{2k-2}}{(2k)!}-\frac{(2k)x^{2k-1}}{(2k+1)!}\Big)\\
        &=-\frac{(t-1)x^{t-2}}{t!}-\sum_{k=1}^{\frac t2-1}\frac{(2k-1)x^{2k-2}}{(2k)!}\Big(1-\frac{2k}{4k^2-1}x\Big) \le 0.
    \end{align*}
    Note that we assumed in the last chain of equations that $t$ is even. When $t$ is odd, the optimal CRS is no worse than the best possible $(t-1)$-wise independent CRS, and hence the bound is as claimed. This concludes the proof.
\end{proof}
\subsection{A $(\sqrt2-1)$-Balanced PI-OCRS for a Single Item}
\label{sec:sqrt2}

Caragiannis et al. \cite{DBLP:conf/wine/CaragiannisGLW21} give a $(\sqrt2-1)$-prophet inequality for the single-item case; we show how the same ideas give a $(\sqrt2-1)$-balanced PI-OCRS as well, when the adversary is oblivious. (In \Cref{sec:app-uniform-simpleOCRS} we show a uniform threshold algorithm that works against the almighty adversary as well.) Indeed, when item $i$ arrives and none of the items have been selected, we flip a coin and select it with probability $q_i$. For simplicity, let $X_i=\mathbbm1[i\in R]$, and let Bernoulli random variable $Y_i$ denote the event that the coin flip is heads. Let $I$ denote the final set we select. Then for item $i$, we can bound $\Pr[i\in I\mid i\in R]$ by
\begin{align*}
    \Pr[i\in I\mid i\in R]&=\Pr[Y_i=1\mid X_i=1]\cdot\Pr[X_jY_j=0\ \forall j<i\mid X_i=1]\\
    &=q_i\bigg(1-\Pr\bigg[\bigvee_{j<i}X_jY_j=1\ \bigg|\  X_i=1\bigg]\bigg)\\
    &\ge q_i\bigg(1-\sum_{j<i}\Pr[X_jY_j=1\mid X_i=1]\bigg) \tag{union bound}\\
    &=q_i\bigg(1-\sum_{j<i}\Pr[X_jY_j=1]\bigg)\tag{pairwise independence of $X_i$}\\
    &=q_i\bigg(1-\sum_{j<i}x_jq_j\bigg) \tag{$Y_i$ independent}
\end{align*}
Thus we want to choose $q_i$ such that $c=\min_i q_i(1-\sum_{j<i}x_jq_j)$ is as large as possible. A simple idea is to set $q_i=1/2$ for all $i$, which guarantees that $c\ge \nicefrac{1}{4}$. In fact, we choose $q_i$ such that $c\ge\sqrt2-1$, and \Cref{lem:singleitem_optimal} gives an example  illustrating that $\sqrt2-1$ is asymptotically the best possible for this approach.

\begin{lemma}
    Let $\bx\in\mathbb R_+^n$ be a real vector such that $\sum_i x_i\le1$. Then there exists $\mathbf q$ such that $q_i(1-\sum_{j<i}x_jq_j)\ge\sqrt2-1$ holds for all $i$. Moreover, $q_i$ can be calculated online: we can calculate such $q_i$ based only on $x_1,\cdots,x_i$.
\end{lemma}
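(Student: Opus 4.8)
The plan is to use a greedy online rule. When item $i$ arrives, having already committed to $q_1,\dots,q_{i-1}$, set $s_i := \sum_{j<i} x_j q_j$ (the ``consumed budget'') and choose $q_i := \frac{\sqrt{2}-1}{1-s_i}$, i.e., exactly the smallest value that makes the desired inequality tight. This is manifestly online: $q_i$ depends only on $x_1,\dots,x_{i-1}$ through $s_i$. The whole statement then reduces to the invariant $s_i \le 2-\sqrt{2}$ for every $i$: given this, $1-s_i \ge \sqrt{2}-1$, so $q_i\in[\sqrt{2}-1,1]$ is a legitimate probability, and $q_i(1-s_i)=\sqrt{2}-1$ by construction.

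To establish the invariant I would introduce the potential $\phi(s) := s - s^2/2$ and track $\phi(s_i)$. Writing $\delta_i := x_i q_i = \frac{(\sqrt{2}-1)x_i}{1-s_i}$, a one-line expansion gives $\phi(s_{i+1})-\phi(s_i) = \delta_i(1-s_i) - \delta_i^2/2 = (\sqrt{2}-1)x_i - \delta_i^2/2 \le (\sqrt{2}-1)x_i$. Telescoping from $s_1=0$ and using $\sum_j x_j \le 1$ yields $\phi(s_i) \le \sqrt{2}-1$ for all $i$. The only remaining point is to pass from this bound on $\phi(s_i)$ back to a bound on $s_i$, and here one must be slightly careful, since $\phi$ is unimodal on $[0,2]$ (increasing on $[0,1]$, decreasing afterwards) with $\phi(2-\sqrt{2})=\phi(\sqrt{2})=\sqrt{2}-1$, so $\phi(s)\le\sqrt{2}-1$ alone is also consistent with $s\approx\sqrt{2}$. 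I would close this gap inside the induction: assuming $s_i \le 2-\sqrt{2}$, we get $q_i\le 1$, and moreover $x_i \le 1 - \sum_{j<i} x_j \le 1 - s_i$ (since every $q_j \le 1$), hence $\delta_i \le \sqrt{2}-1$, so $s_{i+1} = s_i + \delta_i \le (2-\sqrt{2})+(\sqrt{2}-1) = 1$; then $\phi$ is strictly increasing on $[0,1]$, $\phi(2-\sqrt{2})=\sqrt{2}-1$, and $\phi(s_{i+1})\le\sqrt{2}-1$ together force $s_{i+1}\le 2-\sqrt{2}$, completing the induction.

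Putting the pieces together, the invariant holds, so $\mathbf q$ with $q_i\in[\sqrt{2}-1,1]$ is a valid probability sequence computable online, and $q_i\bigl(1-\sum_{j<i}x_jq_j\bigr) = \sqrt{2}-1$ for every $i$, which is the claim. I expect the main conceptual obstacle to be discovering the potential $\phi(s) = s - s^2/2$ and the invariant constant $2-\sqrt{2}$; these are most naturally guessed by replacing the recursion with the ODE $s'(x) = (\sqrt{2}-1)/(1-s)$, $s(0)=0$, whose solution satisfies $s - s^2/2 = (\sqrt{2}-1)x$ and reaches $s=2-\sqrt{2}$ exactly at $x=1$. The only genuinely technical subtlety in the discrete argument is the non-monotonicity of $\phi$ on $[0,2]$, handled by the auxiliary estimate $\delta_i \le \sqrt{2}-1$ that keeps $s_{i+1}$ within the range $[0,1]$ on which $\phi$ is invertible.
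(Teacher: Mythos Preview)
Your proof is correct but takes a genuinely different route from the paper's. The paper sets $q_i = f\bigl(\sum_{j<i} x_j\bigr)$ for the explicit function $f(t) = \bigl((3+2\sqrt 2)-(2+2\sqrt 2)t\bigr)^{-1/2}$, so $q_i$ depends only on the raw prefix sum $S_{i-1}=\sum_{j<i}x_j$ and not on the previously chosen $q_j$'s at all. Since $f$ is non-decreasing, a Riemann-sum comparison gives $\sum_{j<i} x_j q_j \le \int_0^{S_{i-1}} f(t)\,dt$, and $f$ is engineered so that $f(s)\bigl(1-\int_0^s f(t)\,dt\bigr)\equiv\sqrt 2-1$; the proof is then a two-line integral computation with no induction. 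You instead take the greedy rule $q_i = (\sqrt 2-1)/(1-s_i)$ that makes the target inequality tight at every step, and control the \emph{weighted} prefix sum $s_i=\sum_{j<i}x_jq_j$ via the potential $\phi(s)=s-s^2/2$ and the invariant $s_i\le 2-\sqrt 2$. The two arguments are in some sense dual: the paper fixes $q$ as a function of the raw prefix sum and bounds the weighted one by an integral, while you recurse directly on the weighted prefix sum. Your version has the pleasant feature that equality holds for every $i$ and that the constants $\sqrt 2-1$ and $2-\sqrt 2$ (and the potential $\phi$) drop out naturally from the ODE heuristic you sketch; the paper's version buys a closed-form $q_i$ in terms of $S_{i-1}$ alone, with no recursive bookkeeping, and avoids the mild subtlety you had to handle about $\phi$ being non-monotone past $s=1$.
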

\begin{proof}
    We set $q_i=f(\sum_{j<i}x_j)$ where $f(t)=1/\sqrt{(3+2\sqrt2)-(2+2\sqrt2)t}$ to show that $q_i(1-\sum_{j<i}x_jq_j)\ge\sqrt2-1$, we need the following claim.
    \begin{claim}
        Let $f:[0,1]\to\mathbb R$ be a continuous non-decreasing function, and $\bx\in R_+^n$ a non-negative vector such that $\sum_i x_i\le1$. Let $S_i$ denote $\sum_{j=1}^ix_i$. Then for any $i$,
        $$
            \sum_{j<i}x_j \cdot f(S_{j-1})\le\int_0^{S_{i-1}}f(t)\:\text{d}t.
        $$
    \end{claim}
    \begin{proof}
        Since $f$ is non-decreasing, we have
        $$
            x_j\cdot f(S_{j-1})\le\int_{S_{j-1}}^{S_j}f(t)\:\text{d}t.
        $$
        Then summing over $j$ proves the claim.
    \end{proof}
    Let $S_i$ denote the prefix sum of $\bx$, i.e. $S_i:=\sum_{j=1}^i x_j$. Then we can bound $q_i(1-\sum_{j<i}x_jq_j)$ by
    \begin{align*}
        q_i\bigg(1-\sum_{j<i}x_jq_j\bigg) &=f(S_{i-1})\bigg(1-\sum_{j<i}x_jf(S_{j-1})\bigg) \ge f(S_{i-1})\bigg(1-\int_0^{S_{i-1}}f(t)\text dt\bigg)\\
        &=\frac{1}{\sqrt{(3+2\sqrt2)-(2+2\sqrt2)S_{i-1}}}\cdot(\sqrt2-1)\sqrt{(3+2\sqrt2)-(2+2\sqrt2)S_{i-1}}\\
        &=\sqrt2-1. \qedhere
    \end{align*}
\end{proof}

We now show that even for uniform $\bx=(1/n,\cdots,1/n)$,  $(\sqrt2-1)$ is asymptotically the best no matter what choice of $\mathbf q$ we use.
\begin{lemma} \label{lem:singleitem_optimal}
    For every $\mathbf q\in[0,1]^n$, we have
    $$
        \min_iq_i\left(1-\frac1n\sum_{j<i}q_j\right)\le\sqrt2-1+O\left(\frac{1}{n}\right).
    $$
\end{lemma}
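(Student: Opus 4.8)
The plan is to suppose the minimum equals $c$ and derive $c \le \sqrt{2}-1 + O(1/n)$; if $c \le \sqrt{2}-1$ there is nothing to prove, so one may assume $c$ is bounded away from $0$ (which will also legitimize the divisions below). Write $Q_i := \sum_{j<i} q_j$ and normalize $t_i := Q_i/n$, so that $t_1 = 0$ and $t_{i+1} - t_i = q_i/n$. The hypothesis $q_i\,(1 - t_i) \ge c$ for every $i$ has two consequences, which I would extract first: (a) since $q_i \le 1$, necessarily $1 - t_i \ge c$, i.e.\ $t_i \le 1 - c$ for all $i \le n$ — in particular $t_{n+1} = t_n + q_n/n \le 1 - c + 1/n$; and (b) $q_i \ge c/(1-t_i)$, which rewrites as $(1-t_i)(t_{i+1} - t_i) \ge c/n$.

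Next I would sum (b) over $i = 1, \dots, n$ to get $c \le \sum_{i=1}^n (1-t_i)(t_{i+1}-t_i)$. The right-hand side is the left Riemann sum of the decreasing function $t \mapsto 1-t$ over the partition $0 = t_1 < t_2 < \dots < t_{n+1}$ whose mesh is $\max_i (t_{i+1}-t_i) = \max_i q_i/n \le 1/n$. A one-line computation gives the exact discrepancy $(1-t_i)(t_{i+1}-t_i) - \int_{t_i}^{t_{i+1}}(1-t)\,dt = \tfrac12 (t_{i+1}-t_i)^2$, so summing and using $\sum_i (t_{i+1}-t_i)^2 \le \tfrac1n \sum_i (t_{i+1}-t_i) = t_{n+1}/n \le 1/n$ yields
\[
  c \;\le\; \int_0^{t_{n+1}}(1-t)\,dt + \tfrac{1}{2n} \;=\; \Big(t_{n+1} - \tfrac{t_{n+1}^2}{2}\Big) + \tfrac{1}{2n}.
\]
Since $g(t) := t - t^2/2$ is increasing on $[0,1]$, I would substitute the bound $t_{n+1} \le 1 - c + 1/n$ from (a). Writing $u := 1-c$ and $\eps := 1/n$, this becomes $1-u = c \le g(u + \eps) + \eps/2$; expanding the square and rearranging turns it into the quadratic inequality $u^2 - 4u + 2 \le O(\eps)$. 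The relevant root bound then gives $u \ge 2 - \sqrt{2 + O(\eps)} \ge 2 - \sqrt{2} - O(\eps)$ (using $\sqrt{2+x} \le \sqrt{2} + x/\sqrt{2}$), i.e.\ $c = 1 - u \le \sqrt{2} - 1 + O(1/n)$, which is the claim.

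The main obstacle — and the reason a naive argument only yields $c \le \tfrac12$ — is recognizing that the constraint $q_i \le 1$ does real work: without it, the Riemann-sum step alone gives only $c \le \int_0^1 (1-t)\,dt + o(1) = \tfrac12$. The improvement to $\sqrt{2}-1$ comes entirely from observing that feasibility forces the total prefix mass $t_{n+1}$ to stop at roughly $1-c$ rather than at $1$, so the correct relation is the fixed-point inequality $c \lesssim g(1-c)$; solving this quadratic is what produces the constant $\sqrt{2}-1$. Everything else is routine bookkeeping of the $O(1/n)$ slack (the Riemann-sum mesh, the $+1/n$ in $t_{n+1}$, and the linearization of the square root), none of which is delicate.
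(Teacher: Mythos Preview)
Your proof is correct and takes a genuinely different route from the paper. The paper proceeds by a weighted-average (dual) argument: writing $r_i=q_i(1-\tfrac1n\sum_{j<i}q_j)$, it bounds $\min_i r_i\le(\mathbf r\cdot\mathbf p)/\|\mathbf p\|_1$ for the specific weight vector $\mathbf p=(1,\dots,1,\alpha n)$, computes $\mathbf r\cdot\mathbf p$ in closed form (completing the square in $\sum_{i<n}q_i$), and then optimizes over $\alpha$ to land on $\alpha=\sqrt2-1$ and the value $\sqrt2-1$. You instead argue primally: assuming every $r_i\ge c$, you sum the constraints as a left Riemann sum of $\int_0^{t_{n+1}}(1-t)\,dt$, and---crucially---use $q_i\le1$ to cap $t_{n+1}\le 1-c+\tfrac1n$, yielding the self-referential inequality $c\lesssim (1-c)-\tfrac12(1-c)^2$ whose solution is $c\le\sqrt2-1$. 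Your approach mirrors the integral construction in the companion achievability lemma and makes transparent \emph{why} $\sqrt2-1$ arises (as the fixed point of $c=g(1-c)$); the paper's approach is shorter once $\mathbf p$ is guessed, but that guess is unmotivated in isolation. Both handle the $O(1/n)$ slack with the same level of care.
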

\begin{proof}
    Let $r_i=q_i(1-\frac1n\sum_{j<i}q_j)$. First, for any non-negative vector $\mathbf p\in\mathbb R_+^n$, we have
    $$
        \min_i r_i\le\frac{\mathbf r\cdot\mathbf p}{\|\mathbf p\|_1}.
    $$
    In particular, the $\mathbf p$ we choose is $(1,\cdots,1,\alpha n)$, i.e. first $(n-1)$ entries are $1$, while the last entry is $\alpha n$, where $\alpha$ is some constant whose value is to be determined. Then we compute $\mathbf r\cdot\mathbf p$:
    \begin{align*}
        \mathbf r\cdot\mathbf p&=\sum_{i=1}^{n-1}q_i\bigg(1-\frac1n\sum_{j<i}q_j\bigg)+\alpha nq_n\bigg(1-\frac1n\sum_{j<n}q_i\bigg)\\
        &=(1-\alpha q_n)\sum_{i=1}^{n-1}q_i-\frac1n\sum_{1\le j<i\le n-1}q_iq_j+\alpha nq_n\\
        &=(1-\alpha q_n)\sum_{i=1}^{n-1}q_i-\frac1{2n}\left(\sum_{i=1}^{n-1} q_i\right)^2+\frac1{2n}\sum_{i=1}^{n-1}q_i^2+\alpha nq_n\\
        &\le\frac n2(1-\alpha q_n)^2+\alpha nq_n+\frac12.
    \end{align*}
    Meanwhile, $\|\mathbf p\|_1=n-1+\alpha n$. Therefore we have
    \begin{align*}
        \min_i r_i&\le\frac{\mathbf r\cdot\mathbf p}{\|\mathbf p\|_1} \le\frac{\frac12(1-\alpha q_n)^2+\alpha q_n+\frac1{2n}}{\frac{n-1}n+\alpha} =\frac{1+\alpha^2q_n^2}{2+2\alpha}+O\left(\frac1n\right) \le\frac{1+\alpha^2}{2+2\alpha}+O\left(\frac1n\right).
    \end{align*}
    The expression $(1+\alpha^2)/(2+2\alpha)$ has a minimum value $\sqrt2-1$, achieved when $\alpha=\sqrt2-1$. Therefore we conclude that $\min_i r_i\le \sqrt2-1+O(\frac1n)$.
\end{proof}
The above analysis requires that the arrival order of the items remain the same regardless of $R$, and so this PI-OCRS may not be be robust against the almighty adversary. In \Cref{sec:single-sample} we show a simple $\nicefrac{1}{4}$-selectable PI-OCRS for all uniform matroids, including the single item case.

\subsection{An Upper Bound for Multiple-Threshold Algorithms}
\label{sec:imposs-single}

Here we show an example where no multiple-threshold algorithm can give an approximation ratio better than $2\sqrt5-4 \approx 0.472$. This result shows that we need new ideas to match the $\nicefrac12$-prophet inequality for the fully independent case. 

The instance has $n+2$ items. Item $1$ has a deterministic value of $t$, where $0<t<1$ is some constant to be determined; items $2,\cdots, n+1$ each has value $1$ with probability $\frac1n$ and $0$ otherwise; item $n+2$ has value $rn$ with probability $\frac1n$ and $0$ otherwise, where $r>0$ is some constant to be determined. The joint distribution is constructed as follows:
\begin{itemize}
    \item With probability $\frac{n+1}{2n}$, sample $2$ items from $2,\cdots,(n+2)$ and make them non-zero (i.e. for items $2,\cdots, n+1$, set value as $1$; for item $n+2$, set value as $rn$). The remaining items have value $0$.
    \item With probability $\frac{n-1}{2n}$, items $2,\cdots, n+2$ all have value zero.
\end{itemize}
It's easy to verify that this distribution is indeed pairwise independent, and the expected value of the largest item is
$$
    \text{OPT}=\frac{n-1}{2n}\cdot t+\frac1n\cdot rn+\left(\frac{n+1}{2n}-\frac1n\right)\cdot 1=r+\frac12(1+t)+O\left(\frac1n\right)
$$
Now for any multiple-threshold algorithm, its behaviour on this example can be described as:
\begin{itemize}
    \item On seeing item $i$, if $i$ is non-zero, toss an independent coin to take it with probability $q_i$; otherwise abandon this item. Here $q_i$ is only determined by the ordering of items, and does not depend on values of item $1\cdots i-1$.
\end{itemize}
Here we may as well assume $q_{n+2}=1$. And we can calculate the expected reward for each item:
\begin{itemize}
    \item For item $1$, the expected reward is $q_i\cdot t$
    \item For item $n+2$, the expected reward is
    $$
        \frac1n\cdot rn\cdot(1-q_1)\cdot\left(\sum_{i=2}^{n+1}\frac1n(1-q_i)\right)=r(1-q_1)\left(1-\frac1n\sum_{i=2}^{n+1}q_i\right)
    $$
    \item For items $2\cdots n+1$, the expected reward is
    \begin{align*}
        &\sum_{i=2}^{n+1}\frac1n\cdot q_i\cdot(1-q_1)\cdot\left(\sum_{j=2}^{i-1}\frac1n(1-q_j)+\sum_{j=i+1}^{n+2}\frac1n\right)\\
        &=(1-q_1)\left(\frac1n\sum_{i=2}^{n+1}q_i-\frac1{n^2}\sum_{2\le j<i\le n+1}q_iq_j\right)\\
        &=(1-q_1)\left(\frac1n\sum_{i=2}^{n+1}q_i-\frac1{2n^2}\left(\sum_{i=2}^{n+1}q_i\right)^2+\frac1{2n^2}\sum_{i=2}^{n+1}q_i^2\right)\\
        &=(1-q_1)\left(\frac1n\sum_{i=2}^{n+1}q_i-\frac1{2n^2}\left(\sum_{i=2}^{n+1}q_i\right)^2\right)+O\left(\frac1n\right).
    \end{align*}
\end{itemize}
Therefore the total expected reward for the algorithm is
\begin{align*}
    \text{ALG}&=q_1t+(1-q_1)\left(r+(1-r)\frac{\sum_{i=2}^{n+1}q_i}{n}-\frac12\left(\frac{\sum_{i=2}^{n+1}q_i}{n}\right)^2\right)+O\left(\frac1n\right)\\
    &\le q_1t+(1-q_1)\left(r+\frac12(1-r)^2\right)+O\left(\frac1n\right)\\
    &=q_1t+(1-q_1)\frac12(1+r^2)+O\left(\frac1n\right)\\
    &\le\max\left(t, \: \frac{1}{2}(1+r^2)\right)+O\left(\frac1n\right).
\end{align*}
Putting these expressions together, the approximation ratio for our algorithm is at most
$$
    \text{APX}=\frac{\text{ALG}}{\text{OPT}}\le\frac{\max(t,\frac12(1+r^2))}{r+\frac12(t+1)}+O\left(\frac1n\right).
$$
The minimal value is achieved when $t=\frac12(1+r^2)$, giving us
$$
    \text{APX}\le\frac{\frac12(1+r^2)}{\frac14r^2+r+\frac34}+O\left(\frac1n\right).
$$
The expression on the right is minimized when $r=\frac12(\sqrt5-1)$, giving us a minimum value of $2\sqrt5-4+O(\frac1n)$, as claimed.
\subsection{Upper Bound for Single Item PI-OCRS Against Almighty Adversary}

\label{app:almighty-ub}

In this section, we show  that there exists no single item $(\nicefrac14+\eps)$-selectable PI-OCRS. Our distribution $\cD$ for the random active set $R$ is defined as follows over the ground set $[n]$:
\begin{itemize}
    \item With probability $\frac1n$, uniformly sample one item $i$ from $[n]$ and set $R=\{i\}$.
    \item With probability $\frac{n-1}{2n}$, uniformly sample two items $i\ne j$ from $[n]$ and set $R=\{i,j\}$.
\end{itemize}
We define $x_i:=\Pr[i\in R]$. One can verify that for any item $i\ne j$, $x_i=x_j=\frac1n$, and $\Pr[i\in R,j\in R]=\frac1{n^2}=x_ix_j$. Therefore this distribution is indeed pair-wise independent, and $\bx$ is in the matroid polytope.

Let $S$ denote the string of random bits used by the algorithm. Now recall the definition of almighty adversary: First, $R$ is sampled according to $\cD$, and $S$ is sampled by the algorithm. The adversary can see $R$ and $S$. Then it presents the items to the algorithm in adversarial order. Noticed that after $S$ is fixed, the strategy of the algorithm is deterministic. We say item \emph{$i$ dominates $j$ for $S$} if, when $R=\{i,j\}$ and the random string is $S$, the algorithm always picks $i$ instead of $j$ regardless of the ordering chosen by the adversary. We use $i>_S j$ for short. We show some properties regarding the domination relationship.

\begin{lemma}\label{lem:dominate}
    For any algorithm and any fixed string $S$, for any item $i \in [n]$, if there exists $j \in [n]$ such that $i>_S j$, then there exists no $k$ such that $k>_S i$.
\end{lemma}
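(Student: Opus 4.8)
The plan is to exploit the fact that fixing the random string $S$ turns the OCRS into a deterministic \emph{online} procedure: when the very first item arrives, the algorithm has observed nothing else, so its accept/reject decision on that item is a function $D(\cdot)$ of the item's identity alone (the distribution $\cD$, the ground set $[n]$, and the string $S$ all being fixed). This single observation is essentially the whole proof, so the main work is just unpacking the definition of $>_S$ against it.

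First I would record the following consequence of the definition of domination. Suppose $i >_S j$, and consider the instance $R=\{i,j\}$. If the adversary presents the order $(i,j)$, then $i$ is the first item; since the algorithm's choices are irrevocable, $D(i)=\textbf{reject}$ would mean $i$ can never be selected, contradicting $i >_S j$. Hence $D(i)=\textbf{accept}$. If instead the adversary presents $(j,i)$, then $j$ is the first item; if $D(j)=\textbf{accept}$ the algorithm selects $j$ and, the constraint being a single item, is done — again contradicting $i>_S j$, which demands that $i$ (and not $j$) be the selected element. Hence $D(j)=\textbf{reject}$. In summary, $i >_S j$ forces $D(i)=\textbf{accept}$ and $D(j)=\textbf{reject}$.

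I would then finish by contradiction. Suppose both $i >_S j$ and $k >_S i$ hold for some $j$ and $k$. Applying the implication established above to $i >_S j$ yields $D(i)=\textbf{accept}$; applying the same implication to $k >_S i$, with $i$ now in the role of the dominated item, yields $D(i)=\textbf{reject}$. This is impossible, so no such $k$ can exist, which is exactly the claim of the lemma.

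The only point that warrants care — and where I would be most deliberate when writing the details — is the assertion that the first-arrival decision $D(\cdot)$ is genuinely independent of which set $R$ was drawn and of what the remaining items are. This is precisely where the online nature of the scheme is used: the decision on an item is committed before any later item is revealed, so in every instance $R$ and every adversarial order, the first revealed item is treated according to the same rule $D$. Once that is granted, the rest is immediate from the single-item feasibility constraint and the definition of $>_S$.
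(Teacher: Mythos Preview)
Your argument is correct and is essentially the same as the paper's. The paper fixes a single ordering with $i$ before $k$ before $j$ and observes that, at the moment $i$ arrives, the algorithm cannot distinguish $R=\{i,j\}$ from $R=\{i,k\}$; you package the same observation by defining the first-arrival decision function $D(\cdot)$ and showing $i>_S j$ forces $D(i)=\textbf{accept}$ while $k>_S i$ forces $D(i)=\textbf{reject}$.
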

\begin{proof}
    If such $k$ exists, the adversary can order the items such that $i,j,k$ appear in the relative order of $\cdots i\cdots k\cdots j\cdots$. On the arrival of $i$, the algorithm cannot distinguish whether $R=\{i,j\}$ or $R=\{i,k\}$ (or $R=\{i,l\}$ for some other $l$). By definition, $i>_S j$ means the algorithm must select $i$, while $k>_S i$ means the algorithm must discard $i$, a contradiction.
\end{proof}

Let $E_i$ denote the random event (with respect to the distributions of $S$ and $R$) that $i \in R$ and the algorithm selects $i$ no matter the order chosen by the adversary. Then
\begin{align}
\Pr[E_i]&\le\frac1{n^2}+\frac1{n^2}\mathbb E_S[|\{j \ : \ i>_S j\}|], \notag \\
\intertext{and hence}
    \sum_{i=1}^n\Pr[E_i]&\le\frac1n+\frac1{n^2}\mathbb E_S[|\{(i,j) \ : \ i>_S j \}|].\label{eq:sumei}
\end{align}
Denote $A_S:=\{i\in [n] \ : \ \exists j\in [n] \text{ s.t. } i>_S j\}$ and $B_S:=\{j\in [n] \ : \ \exists i\in [n] \text{ s.t. } i>_S j\}$. By \Cref{lem:dominate}, $A_S\cap B_S=\emptyset$. Therefore,
$$
    \mathbb E_S[|\{(i,j) \ : \ i>_S j \}|]\le\mathbb E[|A_S||B_S|]\le\frac{n^2}4
$$
Substituting back into \eqref{eq:sumei}, we have
$$
    \sum_{i=1}^n\Pr[E_i]\le\frac14+\frac1n.
$$
By averaging, there must exists some $i_0$ such that $\Pr[E_{i_0}]\le\frac{4n+1}{n^2}$. When $E_{i_0}$ does not occur, the adversary can always alter the order of items to make the algorithm not select $i_0$. For such an adversary, we have
$$
    \Pr[i_0\text{ is selected} \mid i_0\in R]\le\frac{\Pr[E_{i_0}]}{\Pr[i\in R]}\le\frac1n+\frac14.
$$
which finishes our proof.
\section{Details for Uniform Matroids}

\subsection{A $(b,1-b)$-Selectable PI-OCRS for Uniform Matroid}
\label{sec:app-uniform-simpleOCRS}

Here we show a $(b,1-b)$-selectable PI-OCRS for uniform matroids. Let $M=(E,\cI)$ be a $k$-uniform matroid, where $E$ is identified as $[n]$. Let $\cD$ be some distribution PI-consistent with $\bx\in b\cP_M$, and let $R$ be sampled according to $\cD$. Then we simply set the feasible set family $\cF=\cI$. We then bound the selectability of this PI-OCRS using Markov's inequality:
\begin{align*}
    &\Pr[I\cup\{i\}\in\cF\quad\forall I\in\cF,I\sse R\mid i\in R]\\
    &=\Pr[|R\setminus\{i\}| \le k-1\mid i\in R]\\
    &=1-\Pr[|R\setminus\{i\}|\ge k\mid i\in R]\\
    &\geq 1-\frac{\mathbb E[|R\setminus\{i\}|\mid i\in R]}{k}\tag{Markov's inequality}\\
    &=1-\frac{\sum_{i'\in E\setminus\{i\}}\Pr[i'\in R\mid i\in R]}{k}\\
    &=1-\frac{\sum_{i'\in E\setminus\{i\}}\Pr[i'\in R]}{k}\tag{pairwise independence of events $i\in R$}\\
    &\ge1-b\tag{$\bx(E)\le bk$}
\end{align*}
Therefore $\cF$ defines a $(b,1-b)$-selectable PI-OCRS.

\subsection{A $(1-O(k^{-\nicefrac13})$-Balanced PI-CRS for Uniform Matroid}
\label{sec:uniform-CRS}

First, assume that we are given any
distribution $\cD$ PI-consistent with some $\bx\in(1-\eps)\cP_M$.
Suppose that $\bx(E) = (1-\delta)k$ for some
    $\delta\ge\eps$. By \Cref{clm:averaging2} and an
    averaging argument, there exists $i_n$ such that
    \[ \Pr[\abs{R}\ge k \mid i_n \in
      R]=\frac{\Pr[\abs{R}\ge k, i_n\in R]}{\Pr[i_n\in R]}\le\frac{\frac{1-\delta^2}{\delta^2}}{(1-\delta)k}=\frac{1+\delta}{\delta^2k}\le\frac{1+\eps}{\eps^2k}. \]
After we select $i_n$ as above, observe that $M\setminus\{i_n\}$ is also a $k$-uniform matroid, and $\bx(E\setminus\{i_n\})\le (1-\eps)k$. Therefore using the same argument as above, we can choose $i_{n-1}$ such that $\Pr[\abs{R\setminus\{i_n\}}\ge k\mid i_{n-1}\in R]\le\frac{1+\eps}{\eps^2k}$. Similarly, we can choose $i_{n-2}$ such that $\Pr[\abs{R\setminus\{i_n,i_{n-1}\}}\ge k\mid i_{n-2}\in R]\le\frac{1+\eps}{\eps^2k}$. We repeat this procedure, and obtain an ordering of items $\{i_j\}_{j=1}^n$ such that for any $j$,
$$
    \Pr[\abs{R\cap\{i_{j'}\mid j'\le j\}}\ge k\mid i_{j}\in R]\le\frac{1+\eps}{\eps^2k}.
$$

Let $I$ denote the set of items we select, initially empty. Now we consider the item one by one from $i_1$ to $i_n$ and run a ``greedy'' procedure: we add $i_j \in R$ to $I$ whenever $|I\cup\{i_j\}|\le k$, and finally return $\pi(R)=I$. Since $\Pr[\abs{R\cap\{i_j'\mid j'\le j\}}\ge k\mid i_j\in R]\le\frac{1+\eps}{\eps^2k}$, we have $\Pr[i\in \pi(R)\mid i\in R]\ge1-\frac{1+\eps}{\eps^2k}$. In conclusion, the above procedure
gives an $(1-\eps,1-(1+\eps)/(\eps^2 k))$-balanced PI-CRS.  If we set
$\eps=k^{-\nicefrac13}$, by \Cref{clm:scaleCRS} we have an $(1-O(k^{-\nicefrac13}))$-balanced PI-CRS.

\section{Details for Laminar Matroids}
\label{sec:app-laminar}

Let $b$ be some constant close to $1$. Each constraint $(A,c'(A))$ is equivalent to a $c'(A)$-uniform matroid over $A$, therefore by \Cref{sec:uniform}, we have a $(1-b,b)$-selectable PI-OCRS as well as a $(1-b,1-(\frac{4}{27}b^3c'(A))^{-\nicefrac12})$-selectable PI-OCRS. We set some threshold $t$ to use the first PI-OCRS when $c'(A)< 2^t$, and the second when $c'(A)\ge2^t$. Let $\cF_{A,c'(A)}$ denote the set family for the respective greedy PI-OCRS. We define the \emph{extension} of $\cF_{A,c'(A)}$ to $E$ as $\cF^E_{A,c'(A)}:=\{I\sse E\mid I\cap A\in\cF_{A,c'(A)}\}$. Finally, we set $\cF_{\pi,\cD}:=\bigcap_{A\in\cA'}\cF^E_{A,c'(A)}$, and our greedy PI-OCRS is defined by $\cF_{\pi,\cD}$.

First, we claim that $\cF_{\pi,\cD}\sse\cI'$, where $\cI'$ denotes the independent sets of $M'$. This is true because all sets in $\cF^E_{A,c'(A)}$ satisfies the constraint $(A,c'(A))$, and therefore $\cF_{\pi,\cD}$ satisfies all the constraints as the intersection of all $\cF^E_{A,c'(A)}$. To bound the selectability of the greedy PI-OCRS defined by $\cF_{\pi,\cD}$, we first need the following observation:
\begin{align}
    &\Pr[I\cup\{i\}\in\cF^E_{A,c'(A)}\quad\forall I\in\cF^E_{A,c'(A)},I\sse R\mid i\in R]\notag\\
    &=\begin{cases}
        1&,i\notin A\\
        \Pr[I\cup\{i\}\in\cF_{A,c'(A)}\quad\forall I\in\cF_{A,c'(A)},I\sse R\cap A\mid i\in R\cap A]&, i\in A
    \end{cases}\label{eq:laminargreedy}
\end{align}
Then we have
\begin{align*}
    &\Pr[I\cup\{i\}\in\cF_{\pi,\cD}\quad\forall I\in\cF_{\pi,\cD},I\sse R\mid i\in R]\\
    &=\Pr\bigg[I\cup\{i\}\in\bigcap_{A\in\cA'}\cF^E_{A,c'(A)}\quad\forall I\in\bigcap_{A\in\cA'}\cF^E_{A,c'(A)},I\sse R\mid i\in R\bigg]\\
    &\ge1-\sum_{A\in\cA'}\bigg(1-\Pr\bigg[I\cup\{i\}\in\cF^E_{A,c'(A)}\quad\forall I\in\bigcap_{B\in\cA'}\cF^E_{B,c'(B)},I\sse R\mid i\in R\bigg]\bigg)\\
    &\ge1-\sum_{A\in\cA'}(1-\Pr[I\cup\{i\}\in\cF^E_{A,c'(A)}\quad\forall I\in\cF^E_{A,c'(A)},I\sse R\mid i\in R])\\
    &=1-\sum_{A\ni i}(1-\Pr[I\cup\{i\}\in\cF_{A,c'(A)}\quad\forall I\in\cF_{A,c'(A)},I\sse R\cap A\mid i\in R\cap A])\tag{\Cref{eq:laminargreedy}}
\end{align*}
Since $\cA'$ is a laminar set family, for $A,B\in\cA',A\ne B,A\cap B\ni i$, either $A\subsetneq B$ or $B\subsetneq A$. Therefore by the strict monotonicity of $c'$, $c'(A)$ for $A\ni i$ form a geometric series. Recall that we use the $(1-b,b)$-selectable PI-OCRS for $c'(A)<2^t$, and $(1-b,1-(\frac{4}{27}b^3c'(A))^{-1/2})$-selectable PI-OCRS for $c'(A)\ge 2^t$. Therefore we have
\begin{align*}
    &1-\sum_{A\ni i}\left(1-\Pr\left[I\cup\{i\}\in\cF_{A,c'(A)}\quad\forall I\in\cF_{A,c'(A)},I\sse R\cap A\mid i\in R\cap A\right]\right)\\
    &\geq 1 - t \:(1- b) - \sum_{i \geq t} \left( \frac{4}{27} \: b^3 \: 2^i \right)^{-\nicefrac12} \\
    &= 1 - t \: (1-b) -  b^{-\nicefrac32} \: 2^{-\nicefrac t2} \: \frac{3 \sqrt{3}}{2 - \sqrt{2}}.
\end{align*}
Taking $t=13$ and $b=\nf{24}{25}$ gives a $(\nf1{25},\nf1{2.661})$-selectable PI-OCRS. Therefore by \Cref{clm:scaleOCRS}, we have a $\nf1{67}$-selectable PI-OCRS for $M'$, which by the
observation above gives a $(\nf12, \nf1{67})$-selectable, and again by \Cref{clm:scaleOCRS}, a $\nf1{134}$-selectable PI-OCRS for $M$.

\section{Limitations of Our Approach for Graphic Matroids}
\label{sec:limit-graphic}

In \Cref{sec:graphic}, we showed that for any graphic matroid $M=(E,\cI)$ and $R$ sampled according to any distribution $\cD$ PI-consistent with some $\bx\in b\cP_M$, there exists some edge $e_0\in E$ such that $\Pr[e_0\in\spn(R\setminus\{e_0\})\mid e_0\in R]\le 2b$. Here we give an example showing that the $2b$ here is asymptotically tight. Consider the complete graph $K_n$ where $n$ is odd. Our construction of the distribution $\cD$ is quite simple: with probability $\frac{n(n-1)}{(n+3)(n-2)}$, uniformly sample a cycle of length $\frac{n+3}2$ from $K_n$. Otherwise take no edge at all.

First, we want to verify that this is indeed a pairwise-independent distribution. Since all edges are equivalent, we have
\[
    \forall e\in E, \Pr[e\in R]=\frac{n(n-1)}{(n+3)(n-2)}\cdot\frac{(n+3)/2}{n(n-1)/2}=\frac1{n-2}.
\]
For pairs of edges $e_1$ and $e_2$, we have to consider two cases.
\begin{enumerate}
    \item $e_1$ and $e_2$ are \emph{adjacent}, meaning that they share one common vertex. Notice that a cycle of length $\frac{n+3}2$ contains $\frac{n+3}2$ adjacent edge pairs. By taking the expectation, we can see that
    \begin{align*}
        \sum_{e_i,e_j\text{ adjacent}}\Pr[e_i\in R\land e_j\in R] &=\mathbb E[\#\text{ adjacent edge pairs in }R] =\frac{n(n-1)}{(n+3)(n-2)}\cdot\frac{n+3}2.
    \end{align*}
    The number of adjacent edge pairs in $K_n$ is $\binom n3\cdot3$, and by symmetry, they all have same probability of appearing in $R$. Therefore we have
    $$
        \Pr[e_1\in R\land e_2\in R]=\frac{n(n-1)}{2(n-2)}\cdot\frac1{3\binom n3}=\frac1{(n-2)^2}.
    $$
    \item $e_1$ and $e_2$ are not adjacent, i.e., they share no common vertices. A cycle of length $\frac{n+3}2$ contains $\binom{(n+3)/2}{2}-\frac{n+3}2=\frac{(n+3)(n-3)}8$ non-adjacent edge pairs. By taking the expectation, we have
    \begin{align*}
        \sum_{e_i,e_j\text{ not adjacent}}\Pr[e_i\in R\land e_j\in R]&=\mathbb E[\#\text{ non-adjacent edge pairs in }R]\\
        &=\frac{n(n-1)}{(n+3)(n-2)}\cdot\frac{(n+3)(n-3)}8.
    \end{align*}
    Similarly, the number of non-adjacent edge pairs in $K_n$ is $\binom n4\cdot3$, and by symmetry we have
    $$
        \Pr[e_1\in R\land e_2\in R]=\frac{n(n-1)(n-3)}{8(n-2)}\cdot\frac1{3\binom n4}=\frac1{(n-2)^2}
    $$
\end{enumerate}
So for both cases, the distribution is indeed pairwise-independent. (Here we may as well calculate the probability of edge pairs directly, but this approach is easier.) If we set $x_i=\frac{2}{n}$ for each edge $i$, then $\bx$ is in the matroid polytope, and we have $b=\frac{1/(n-2)}{2/n}=\frac{n}{2(n-2)}$. Notice that for any $e\in E$,
$$
    \Pr[e\in\spn(R\setminus\{e\})\mid e\in R]=1=\frac{2(n-2)}{n} \cdot b
$$
because if $e\in R$ then it is contained in a cycle. Hence $2b$ is asymptotically tight.

\end{document}